\def\disp{\displaystyle}
\def\crr{\cr\noalign{\vskip2mm}}
\def\dref#1{(\ref{#1})}
\theoremstyle{plain}
\newtheorem{theorem}{Theorem}[section]
\newtheorem{lemma}{Lemma}[section]
\newtheorem{corollary}{Corollary}[section]
\numberwithin{equation}{section}
\theoremstyle{definition}
\newtheorem{definition}{Definition}
\newtheorem{assumption}{Assumption}[section]
\newtheorem{remark}{Remark}[section]
\newcommand{\R}{{\mathbb R}}
\def\A{\mathcal{A}}
\def\B{\mathcal{B}}
\begin{document}

\title{{\bf
 Actuator Dynamics Compensation in Stabilization of
Abstract Linear Systems\footnote{\small
This work is  supported by  the National Natural Science
Foundation of China (Nos. 61873153,  61873260). }} }

\author{ Hongyinping Feng$^{a}$\footnote{\small  Corresponding author.
Email: fhyp@sxu.edu.cn.}, \  Xiao-Hui Wu$^{a}$\ and \ Bao-Zhu Guo$^{b,c}$\\
$^a${\it \small School of Mathematical Sciences,}
{\it \small Shanxi University,  Taiyuan, Shanxi, 030006,
China}\\
$^b${\it \small Department of Mathematics and Physics,  }
{\it \small  North
China Electric Power University, Beijing 102206, China}\\
$^c${\it \small Key Laboratory of System and Control, Academy of Mathematics and Systems Science,}
\\{\it \small  Academia Sinica, Beijing, China}
 }

 \maketitle

\begin{abstract}
This is the first part of four series papers, aiming at the problem of  actuator dynamics compensation for  linear systems.
We consider the stabilization of  a type of cascade abstract linear systems
 which model the actuator dynamics compensation
 for linear systems where both the control plant and  its actuator dynamics can be infinite-dimensional.
We develop a systematic way to stabilize the cascade systems by a full state feedback.
Both the well-posedness and the exponential stability of the resulting closed-loop system are established in the abstract framework.
 A sufficient condition of the existence of   compensator for ordinary differential equation  (ODE) with partial differential equation (PDE) actuator dynamics is obtained.
 The feedback design is based on a novelly constructed upper-block-triangle transform and the Lyapunov function design   is not  needed in the stability analysis.
As applications,  an  ODE  with  input  delay and an unstable heat equation with  ODE actuator dynamics are investigated to validate the  theoretical results.
 The numerical simulations for the unstable heat system are carried out to validate the proposed
approach visually.
\vspace{0.3cm}

\noindent {\bf Keywords:}~ Actuator dynamics compensation, cascade system,  infinite-dimensional system, stabilization, Sylvester equation.

\vspace{0.3cm}

\end{abstract}

\section{Introduction}
System control through actuator dynamics can usually be modeled  as a cascade control  system which  has been intensively investigated  in the last  decade.
An early infinite-dimensional actuator dynamic compensation is the  input  time-delay compensation for finite-dimensional  systems in the name of the Smith predictor (\cite{Smith1959ISA})  and its modifications (\cite{Artstein1982TAC,ManitiusOlbrot1979TAC}).
In  \cite{SmyshlyaevKrstic2008SCL},   the partial differential equation (PDE)  backstepping method was  developed to cope with the time-delay problem.
Regarding the time-delay as the dynamics dominated by a
  transport  equation,
 the   input delays  compensation problem comes down to    the  boundary  control  of an ODE-PDE cascade.
Actually, the PDE backstepping method   can  compensate for  various actuator dynamics
which include  but not limited to  the general first order
hyperbolic equation  dynamics \cite{SmyshlyaevKrstic2008SCL},    the heat equation dynamics  \cite{Krstic2009SCL,Sustoa2010JFI,TangSCL2011,Wang2015AUT},     the wave equation dynamics
\cite{Krstic2009TAC,Sustoa2010JFI} and the Schr\"{o}dinger equation dynamics
\cite{WangJMSCL2013}.
 However,  the PDE backstepping transformation relies strongly  on the choice of
the  target systems which are built on the basis of intuition not theory.
This implies that an  inappropriate target system may make the PDE backstepping  method not be always
working.
What is more, since the kernel function of the backstepping transformation  is usually governed by a PDE,
there are some formidable  difficulties for PDE backstepping method in dealing with   some infinite-dimensional dynamics like
those dominated by the Euler-Bernoulli beam equations, multi-dimensional PDEs, and even  the one dimensional  PDEs with variable coefficients.

In this paper, we propose a systematic and generic  way to deal with   the
actuator dynamics compensation by stabilizing an abstract cascade linear system.
The central effort focuses on the unification of various  actuator dynamics compensations
from a general abstract framework point of view.
The problem is  described by the following system:
\begin{equation} \label{201911101138}
 \left\{\begin{array}{l}
\disp \dot{x}_1(t) = A_1 x_1(t)+B_1C_2x_2(t) ,\ \ t>0,\crr
\disp  \dot{x}_2(t) = A_2 x_2(t) +B_2 u(t),\ \ t>0,
\end{array}\right.
\end{equation}
 where $A_1:X_1\to X_1$ is the operator of   control plant, $A_2:X_2\to X_2$ is the operator of    actuator dynamics, $B_1C_2 :X_2\to X_1$  is the interconnection between the control plant and its control dynamics, $B_2:U_2\to X_2  $ is the  control operator, and   $u(t)$ is the  control.
 The state space and the control space are $ X_1\times X_2$ and $U_2$, respectively. All the operators appeared in \dref{201911101138} can be unbounded.
The main objective  of this paper is to  seek a state   feedback   to stabilize  the  abstract system \dref{201911101138} exponentially.  We limit ourselves  to the full state feedback because, thanks to the separation principle of   linear systems, the output feedback law is straightforward  once the state  observer  of system \dref{201911101138} is available.
The observer design with  sensor dynamics  would be the  next paper \cite{FPart2} of this series of studies  before the last part on the control of uncertain systems \cite{FPart4}.

It is well known that the   cascade system can be  decoupled by  a block-upper-triangular transformation which is related to a Sylvester operator equation \cite{Sylvester1991}.
This inspires us to stabilize the cascade
  system
 by decoupling
   the cascade system first and then stabilizing  the  decoupled system.
The system decoupling   needs to solve  the Sylvester operator equations which may be a difficult task   particularly when the corresponding operators are unbounded.
Fortunately, the problem becomes relatively easy  provided at least one of $A_1$ and $A_2$ is bounded.  In this way, numerous  actuator dynamics dominated by the transport  equation \cite{SmyshlyaevKrstic2008SCL}, wave equation \cite{Krstic2009TAC},    heat equation  \cite{Krstic2009SCL}
 as well as  the  Euler-Bernoulli beam equation \cite{FengWubeam}
 can be treated in a unified way.
In this paper, this fact will be demonstrated through  two different  systems:
an  ODE  system with  input   delay  and an unstable heat system with ODE actuator dynamics.
We point out that the considered  stabilization of  heat-ODE cascade system is an  interesting  and challenging problem  because the actuator dynamics is finite-dimensional yet  the  control plant is of the infinite-dimension. In other words, what we need to do is to  control  an ``infinite-dimensional'' system via  a ``finite-dimensional'' compensator.
Compared  with the ODE system with PDE  actuator dynamics, the results about  PDE system with ODE actuator  dynamics are  still   scarce.

The rest of the paper is organized as follows:
In Section \ref{Se.2}, we demonstrate  the  main idea through   an ODE cascade system.
Sections \ref{Se.3} and \ref{Se.4}  give  some  preliminary results  about the similarity of  operators  and the Sylvester equation.
Section \ref{Se.5} is devoted to the dynamics compensator design.
The well-posedness and the exponential stability are also established. In Section  \ref{Se.6}, we apply  the proposed method to the input  delay compensation for an ODE system.
  Stabilization of  an unstable heat equation by  finite-dimensional
  actuator dynamics is considered in  Section \ref{Se.7}.
 Section \ref{Se.8} presents some numerical simulations,
 followed up conclusions in Section \ref{Se.9}. For the sake of readability,
  some results that are less relevant to the dynamics compensator
design are arranged  in the Appendix.

Throughout the paper,
the identity operator  on the Hilbert space $X_i$  will be denoted by $I_i,\  i=1,2$, respectively.
The space of bounded linear operators from $X_1$ to $X_2$ is denoted by $\mathcal{L}(X_1, X_2 )$.
The spectrum, resolvent set, the range, the kernel and the domain  of
the  operator $A$ are  denoted by $\sigma(A)$, $\rho(A)$,
${\rm Ran}(A)$, ${\rm Ker} (A)$ and $D(A)$, respectively.
The transpose of matrix $A$  is denoted   by $A^\top$.

\section{Finite-dimensional dynamics }\label{Se.2}
In order to introduce  our main idea  clearly,  we first consider system \dref{201911101138} in the finite-dimensional case.
Suppose that   $X_1, X_2$, $U_1$   and $U_2$  are the Euclidean spaces,  $C_2:X_2\to U_1$ and  $B_1:U_1\to X_1$.
We shall  design a full state feedback to stabilize the cascade system \dref{201911101138}.
Although this problem can be  achieved completely by
the pole assignment theorem   provided  system \dref{201911101138}  is controllable,
when we come across that at least one of
  $A_1$ and  $A_2$ is an operator in   an infinite-dimensional Hilbert space,
the problem would  become very   complicated.
We thus need an alternative  treatment    that can be   extended to the setting of  infinite-dimensional framework.

We first divide the controller into two parts:
\begin{equation} \label{20191017804}
\left.\begin{array}{l}
\disp u(t)=K_2x_2(t)+u_c(t),
\end{array}\right.
\end{equation}
where $K_2\in \mathcal{L}(X_2, U_2)$ is chosen to make  $A_2+B_2K_2$ Hurwitz  and $u_c(t)$ is a new control to be designed.
The main role played by the first part $K_2x_2(t)$ of the controller is  to stabilize the $x_2$-subsystem.
If $(A_2,B_2)$ is controllable, such a $K_2$ always exists.
Under \dref{20191017804}, the control plant \dref{201911101138} becomes
\begin{equation} \label{20191024826}
\left\{\begin{array}{l}
\disp \dot{x}_1(t) = A_1 x_1(t)+B_1C_2x_2(t),\crr
\disp  \dot{x}_2(t) = (A_2+B_2 K_2) x_2(t) +B_2 u_c(t).
\end{array}\right.
\end{equation}
Now, we   decouple system \dref{20191024826} by  the
block-upper-triangular transformation:
\begin{equation} \label{20191024837}
\begin{pmatrix}
I_1&S\\
0&I_2
\end{pmatrix}
\begin{pmatrix}
A_1&B_1C_2 \\
0&A_2+B_2K_2
\end{pmatrix}\begin{pmatrix}
I_1& S\\
0&I_2
\end{pmatrix}^{-1}
=
\begin{pmatrix}
A_1&S(A_2+B_2K_2)-A_1S+B_1C_2 \\
0&A_2+B_2K_2
\end{pmatrix},
\end{equation}
where $S\in\mathcal{ L}(X_2,X_1)$ is to be determined.
Evidently, the  system matrix  of \dref{20191024826} is   block-diagonalized  if  the matrix $S$ solves the matrix equation
  \begin{equation} \label{20191024842}
A_1S-S(A_2+B_2K_2)=B_1C_2 ,
\end{equation}
which is a well known Sylvester equation. An immediate consequence of this fact is that
  the controllability of the
following pairs is  equivalent:
\begin{equation} \label{201910171043}
\left(\begin{pmatrix}
A_1& B_1C_2  \\
0&A_2+B_2K_2
\end{pmatrix},\
 \begin{pmatrix}
0\\
B_2
\end{pmatrix}\right)\ {\rm  and }\
\left(\begin{pmatrix}
A_1&0\\
0&A_2+B_2K_2
\end{pmatrix},\
\begin{pmatrix}
SB_2 \\
B_2
\end{pmatrix}
\right).
\end{equation}
Owing to the block-diagonal structure, the stabilization of the second  system of \dref{201910171043} is
much   easier than the first one. Indeed, since $A_2+B_2K_2$ is stable already,
  the controller  $u_c(t)$ in \dref{20191024826} can be designed  by   stabilizing  system $(A_1,SB_2)$ only:
  \begin{equation} \label{201910241002}
u_c(t)=(K_1,0)\begin{pmatrix}
I_1&S\\
0&I_2
\end{pmatrix}
\begin{pmatrix}
x_1(t)\\
x_2(t)
\end{pmatrix}
=K_1Sx_2(t)+K_1x_1(t),
\end{equation}
  where
$K_1\in \mathcal{L}(X_1,U_2 )$   is chosen to make  $A_1+SB_2K_1$ Hurwitz.
 In view of  \dref{20191017804},  the   controller of the original system \dref{201911101138} is therefore designed as
 \begin{equation} \label{201911281134}
u(t)
=K_2x_2(t)+K_1x_1(t)+K_1Sx_2(t),
\end{equation}
which leads to the closed-loop
 of system \dref{201911101138}:
 \begin{equation} \label{201912061014}
 \left\{\begin{array}{l}
\disp \dot{x}_1(t) = A_1 x_1(t)+B_1C_2 x_2(t),\crr
\disp  \dot{x}_2(t) =   (A_2 +B_2K_2+B_2 K_1S)x_2(t) +B_2 K_1{x}_1(t).
\end{array}\right.
\end{equation}

   \begin{lemma}\label{Lm20191017836}
   Let    $X_1, X_2$, $U_1$ and $ U_2$ be Euclidean spaces, and let
   $A_j\in \mathcal{L}(X_j)$,   $B_j \in  \mathcal{L}(U_j ,X_j)$, $C_2 \in \mathcal{L}(X_2,U_1)$  and
 $K_j\in \mathcal{L}(X_j,U_2)$, $j=1,2$. Suppose that $A_2 +B_2K_2$ and $A_1+SB_2K_1$ are Hurwitz,
 and $S\in \mathcal{L}(X_2,X_1)$
    is the solution  of the
Sylvester equation \dref{20191024842}.
 Then,
 the closed-loop system \dref{201912061014}  is  stable in $X_1\times X_2$.
\end{lemma}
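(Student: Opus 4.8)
The plan is to use the invertible block-upper-triangular change of variables already introduced in \dref{20191024837} to convert the closed-loop matrix of \dref{201912061014} into a block-triangular matrix whose two diagonal blocks are precisely $A_1+SB_2K_1$ and $A_2+B_2K_2$, i.e.\ the matrices assumed to be Hurwitz. To this end I would set
\begin{equation}\label{planchange}
\begin{pmatrix} \tilde{x}_1(t)\\ \tilde{x}_2(t)\end{pmatrix}
=\begin{pmatrix} I_1 & S\\ 0 & I_2\end{pmatrix}
\begin{pmatrix} x_1(t)\\ x_2(t)\end{pmatrix},
\qquad\text{that is,}\qquad
\tilde{x}_1=x_1+Sx_2,\quad \tilde{x}_2=x_2.
\end{equation}
Since $S\in\L(X_2,X_1)$ is bounded, this transformation is a boundedly invertible operator on $X_1\times X_2$ (its inverse is obtained by replacing $S$ with $-S$), so that \dref{201912061014} is stable if and only if the system governing $(\tilde{x}_1,\tilde{x}_2)$ is stable.

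Next I would differentiate \dref{planchange} along the trajectories of \dref{201912061014}. For the first component,
\begin{equation}\label{plantx1}
\dot{\tilde{x}}_1=\dot{x}_1+S\dot{x}_2
=(A_1+SB_2K_1)x_1+\big[B_1C_2+S(A_2+B_2K_2)+SB_2K_1S\big]x_2 .
\end{equation}
Substituting $x_1=\tilde{x}_1-S\tilde{x}_2$ and $x_2=\tilde{x}_2$, the coefficient of $\tilde{x}_2$ reduces to $-A_1S+B_1C_2+S(A_2+B_2K_2)$, the two $SB_2K_1S$ contributions cancelling; by the Sylvester equation \dref{20191024842} this coefficient is exactly zero. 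Hence the first equation decouples to $\dot{\tilde{x}}_1=(A_1+SB_2K_1)\tilde{x}_1$. Treating the second component the same way gives $\dot{\tilde{x}}_2=B_2K_1\tilde{x}_1+(A_2+B_2K_2)\tilde{x}_2$, so that in the new variables the system matrix is the block-lower-triangular matrix
\begin{equation}\label{plantri}
\begin{pmatrix} A_1+SB_2K_1 & 0\\ B_2K_1 & A_2+B_2K_2\end{pmatrix}.
\end{equation}

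Finally, the spectrum of the triangular matrix in \dref{plantri} is the union of the spectra of its diagonal blocks $A_1+SB_2K_1$ and $A_2+B_2K_2$, both of which lie in the open left half-plane by hypothesis. Therefore \dref{plantri} is Hurwitz and the transformed system is exponentially stable; transporting this back through the bounded invertible change of variables \dref{planchange} yields the stability of \dref{201912061014}.

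In the present finite-dimensional setting there is no genuine obstacle beyond bookkeeping: everything reduces to the single algebraic fact that the Sylvester equation \dref{20191024842} annihilates the off-diagonal coupling in \dref{plantx1}, thereby triangularizing the closed loop. The only points worth stating explicitly are that the change of variables is boundedly invertible and that the spectrum of a triangular operator splits along the diagonal, both trivial here. It is precisely this computation that will have to be repeated with care in the infinite-dimensional framework of the later sections, where the unboundedness of $A_1,A_2,B_2$, the precise description of the domains, and the very existence of a solution $S$ of the Sylvester operator equation become the substantive issues.
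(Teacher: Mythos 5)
Your proof is correct and follows essentially the same route as the paper: the change of variables $(\tilde x_1,\tilde x_2)=(x_1+Sx_2,x_2)$ is exactly the similarity transform $\begin{pmatrix} I_1&S\\0&I_2\end{pmatrix}$ that the paper uses to show the closed-loop matrix $\A_1$ in \dref{201910201037A1} is similar to the block-lower-triangular $\A_2$ in \dref{201910201037A2}, whose diagonal blocks $A_1+SB_2K_1$ and $A_2+B_2K_2$ are Hurwitz by hypothesis. The only difference is that you carry out the cancellation via the Sylvester equation explicitly, whereas the paper asserts the similarity without computation.
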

\begin{proof}
Let
\begin{equation} \label{201910201037A1}
 \A_1= \begin{pmatrix}
A_1& B_1C_2  \\
 B_2K_1&A_2 +B_2K_2+B_2 K_1S
\end{pmatrix}
\end{equation}
and
\begin{equation} \label{201910201037A2}
\A_2=
 \begin{pmatrix}
A_1+SB _2K_1& 0 \\
 B_2 K_1&A_2+B_2K_2
\end{pmatrix}.
\end{equation}
Then, the matrices $\A_1$ and $\A_2$ are similar each other.
Since both $A_2 +B_2K_2$ and $A_1+SB_2K_1$ are Hurwitz,   $\A_2$ is  Hurwitz and hence
$\A_1$ is  Hurwitz as well. This completes the  proof.
\end{proof}

To sum up,  the scheme  of the actuator dynamics compensator  design for  system \dref{201911101138} consists of three steps: (a)~ Find  $K_2\in \mathcal{L}(X_2,U_2 )$ to stabilize  system $(A_2,B_2)$; (b)~ Solve the  Sylvester equation \dref{20191024842}; (c)~ Find  $K_1\in \mathcal{L}(X_1,U_2 )$ to stabilize system $(A_1,SB_2)$.  By the  pole assignment,  the   step (a) is almost straightforward. When the  solution  of  Sylvester  equation \dref{20191024842} is available,
the  step (c) is  straightforward too.  As for the   step (b), we have the following Lemma \ref{Lm201910142119}.
\begin{lemma}\label{Lm201910142119}(\cite{Rosenblum1956DMJ})
 Let    $X_1, X_2$,   $ U_1$ and $U_2$ be  Euclidean spaces, and let $A_j\in \mathcal{L}(X_j)$,
  $B_j \in  \mathcal{L}(U_j ,X_j)$,  $C_2 \in \mathcal{L}(X_2,U_1)$
   and $K_2\in \mathcal{L}(X_2,U_2 )$, $j=1,2$. Suppose that
\begin{equation} \label{20191017834}
\sigma(A_1)\cap\sigma(A_2+B_2K_2 )=\emptyset.
\end{equation}
Then, the Sylvester equation \dref{20191024842}
 admits a unique  solution
$S\in\mathcal{ L}(X_2,X_1)$ which  can be represented as
    \begin{equation} \label{201910171852}
S=\frac{1}{2\pi i}\int_{\Gamma}(A_1-\lambda   )^{-1}B_1C_2 (A_2+B_2K_2-\lambda  )^{-1}d\lambda,
\end{equation}
where
$\Gamma$ is a smooth contour around $\sigma(A_1)$ and separated from $\sigma(A_2+B_2K_2)$,
with positive orientation.
\end{lemma}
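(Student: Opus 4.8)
The plan is to treat $M := A_2 + B_2 K_2$ as a single Hurwitz matrix and to attack the Sylvester equation \dref{20191024842}, written in the form $A_1 S - S M = B_1 C_2$, in two independent stages: first I verify by direct computation that the contour integral \dref{201910171852} is a genuine solution, which settles existence together with the explicit representation; then I establish uniqueness by an intertwining argument built on the same Riesz-projection bookkeeping. Throughout I would use the two elementary resolvent identities $A_1(A_1-\lambda)^{-1}=I_1+\lambda(A_1-\lambda)^{-1}$ and $(M-\lambda)^{-1}M=I_2+\lambda(M-\lambda)^{-1}$, together with the functional-calculus facts that, along a positively oriented contour $\Gamma$ encircling $\sigma(A_1)$ and excluding $\sigma(M)$, one has $\frac{1}{2\pi i}\int_\Gamma (A_1-\lambda)^{-1}\,d\lambda=-I_1$ (the total Riesz projection of $A_1$ equals the identity) while $\frac{1}{2\pi i}\int_\Gamma (M-\lambda)^{-1}\,d\lambda=0$ (no spectrum of $M$ is enclosed). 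The hypothesis \dref{20191017834} is exactly what guarantees that such a separating contour $\Gamma$ exists.

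For existence, I would substitute \dref{201910171852} into $A_1 S - S M$ and push $A_1$ and $M$ inside the integral using the two resolvent identities. After expanding, the term $\lambda(A_1-\lambda)^{-1}B_1C_2(M-\lambda)^{-1}$ appears with opposite signs coming from the two halves and cancels identically, leaving only
\[
A_1 S - S M=\frac{1}{2\pi i}\int_\Gamma B_1C_2(M-\lambda)^{-1}\,d\lambda-\frac{1}{2\pi i}\int_\Gamma (A_1-\lambda)^{-1}B_1C_2\,d\lambda.
\]
Factoring $B_1C_2$ out of each integral and invoking the two functional-calculus identities above, the first integral vanishes and the second equals $-(-I_1)B_1C_2=B_1C_2$, so that $A_1 S - S M=B_1C_2$ as required.

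For uniqueness I would take two solutions and let $S_0$ denote their difference, so that $A_1 S_0 = S_0 M$. This intertwining yields $(\lambda-A_1)S_0=S_0(\lambda-M)$ and hence $(\lambda-A_1)^{-1}S_0=S_0(\lambda-M)^{-1}$ for every $\lambda$ on $\Gamma$, since there $\lambda$ lies in both resolvent sets. Integrating this identity over $\Gamma$ and again using that $\Gamma$ encircles all of $\sigma(A_1)$ but none of $\sigma(M)$ gives $S_0 = (\frac{1}{2\pi i}\int_\Gamma(\lambda-A_1)^{-1}d\lambda)S_0 = S_0(\frac{1}{2\pi i}\int_\Gamma(\lambda-M)^{-1}d\lambda)=S_0\cdot 0=0$, so the solution is unique.

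The computations are routine once this framework is fixed; the only delicate points are the sign and orientation conventions hidden in the resolvent identities and the correct reading of the two Riesz projections, namely recognizing that $\Gamma$ captures the entire spectrum of $A_1$ (projection $I_1$) yet none of $M$ (projection $0$). That asymmetry, which is precisely what \dref{20191017834} enforces, is the true engine of the argument and is where I would be most careful. In the present finite-dimensional setting the existence of the separating contour and the boundedness of all operators are automatic, so no further analytic subtleties arise.
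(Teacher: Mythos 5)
Your proof is correct: the resolvent-identity computation showing the $\lambda$-terms cancel, the evaluation of the two Riesz projections ($-I_1$ for $A_1$, $0$ for $A_2+B_2K_2$), and the intertwining argument for uniqueness are all sound, and they verify exactly the integral formula \dref{201910171852} displayed in the lemma. The paper itself gives no proof but simply cites \cite{Rosenblum1956DMJ}, and your argument is precisely the standard Rosenblum contour-integral proof that the citation refers to, so the approaches coincide.
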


\section{Preliminaries on abstract systems }\label{Se.3}
In order  to extend the   results in Section \ref{Se.2} to  infinite-dimensional systems,
 we  present  some preliminary background  on abstract infinite-dimensional  systems.
We  first  introduce the  dual   space    with
respect to a pivot space,  which has been discussed extensively in \cite{TucsnakWeiss2009book} particularly
for those systems with unbounded control and observation operators.

Suppose that $X$ is a Hilbert space and  $A : D(A )\subset X \to X $ is a densely defined operator  with $\rho(A) \neq \emptyset$. Then, $A$ can determine two Hilbert spaces:
$(D(A), \|\cdot\|_1)$ and $([D(A^*)]', \|\cdot\|_{-1})$, where
$ [D(A^*)]' $ is the dual space of $ D(A^*) $ with respect to  the  pivot space $X$, and the norms $\|\cdot\|_1$ and $\|\cdot\|_{-1}$ are defined   by
\begin{equation} \label{20191141722}
 \left\{\begin{array}{l}
 \disp \|x\|_1=\|(\beta -A)x\|_X,\ \ \forall\ x\in D(A),\crr
 \disp \|x\|_{-1}=\disp \|(\beta -A)^{-1}x\|_X,   \ \ \forall\ x\in X,
 \end{array}\right. \ \ \beta\in\rho(A).
\end{equation}
These two spaces are independent of the choice of $\beta\in\rho(A)$ since  different  choices
of $\beta$ lead  to  equivalent norms.
 For brevity, we denote
the two spaces   as
$ D(A) $ and $ [D(A^*)]' $ in the sequel.
 The adjoint of $A^*\in \mathcal{L}(D(A^*),X)$, denoted by  $\tilde{A}$, is defined   as
 \begin{equation} \label{20191121602}
 \disp     \langle \tilde{A} x, y\rangle_{ [D(A^*)]',  D(A^*)}=
 \langle x,  A ^*y\rangle_{X },\ \ \forall\ x\in X,  y\in D(A ^*).
\end{equation}
It is evident that $\tilde{A} x=Ax$ for any $x\in D(A)$. Hence,
  $\tilde{A}\in \mathcal{L}(X,  [D(A^*)]')$ is an extension  of $A $. Since $A$ is
densely defined,  the extension is unique. By \cite[Proposition 2.10.3]{TucsnakWeiss2009book}, for any $\beta\in\rho(A)$,  we have
  $(\beta  -\tilde{A})\in \mathcal{L}(X,[D(A^*)]')$  and
$(\beta -\tilde{A})^{-1}\in \mathcal{L}( [D(A^*)]',X)$ which  imply  that
$\beta -\tilde{A}$ is an isomorphism from $X$ to $[D(A^*)]'$.
If $A$ generates a $C_0$-semigroup $e^{At}$ on $X$, then,  so is for its extension $\tilde{A}$ and
$e^{\tilde{A}t}=  (\beta -\tilde{A})  e^{At}(\beta -\tilde{A})^{-1}$.

Suppose that $Y$ is an output Hilbert space  and  $C\in \mathcal{L}(D(A),Y)$. The
$\Lambda$-extension of $C $  with respect to $A $  is defined by
   \begin{equation} \label{20205281601}
 \left\{ \begin{array}{l}
 \disp C_{ \Lambda}x=\lim\limits_{\lambda\rightarrow +\infty}
C \lambda(\lambda    - {A} )^{-1}x,\ \ x\in  D (C_{ \Lambda}),\crr
\disp D (C_{ \Lambda})=\{x \in X \ |\  \mbox{the above limit exists}\}.
\end{array}\right.
\end{equation}
 Define the norm
\begin{equation} \label{20205281609}
  \|x\|_{D(C_{\Lambda})}= \|x\|_X+\sup_{\lambda\geq\lambda_0}
  \|C \lambda(\lambda  - {A} )^{-1}x\|_Y,\ \ \forall\ x\in D(C_{\Lambda}),
\end{equation}
where $\lambda_0\in \R$  satisfies   $[\lambda_0, \infty ) \subset \rho(A) $.
By  \cite[Proposition 5.3]{Weiss1994MCSS},   $D(C_{ \Lambda})$ with the norm   $\|\cdot\|_{D(C_{\Lambda})}$ is
   a Banach space and
$C_{ \Lambda}\in \mathcal{L}(  D(C_{ \Lambda}),Y)$.
Moreover,
  there exist continuous embeddings:
 \begin{equation} \label{20205281604}
 D(A) \hookrightarrow  D(C_{ \Lambda})\hookrightarrow  X   \hookrightarrow [ D (A ^*)]'.
\end{equation}

 For other concepts of the
admissibility for both control and observation operators, and the regular linear systems,
we refer to \cite{Weiss1989SICON,Weiss1994MCSS,Weiss1989}.

\begin{definition}\label{De20191122107}
Suppose that  $X$  is  a Hilbert space and
  $A_j: D(A_j)\subset X\to X$ is a densely defined operator with $\rho(A_j)\neq\emptyset$,
  $j=1,2$.
  We say that the operators   $ A_1 $  and  $ A_2 $ are  similar     with the
 transformation $P$, denoted by $ A_1 \sim_{P}  A_2 $,   if
  the  operator  $P \in \mathcal{L}(X)$  is   invertible   and satisfies
   \begin{equation} \label{201911281703}
 \left.\begin{array}{l}
\disp PA_1P^{-1}=A_2 \ \ \mbox{and}\ \  D(A_2) = P D(A_1).
\end{array}\right.
\end{equation}
   \end{definition}
  Suppose that   $ A_1 \sim_{P}  A_2 $.
 Then,  $ A_1^*\sim_{P^{-1*}} A_2^*$ and
 in  particular, $P^*D(A_2^*)=D(A_1^*)$.
Obviously,
  $ A_1 \sim_{P}  A_2 $ implies that
 $A_1   $  generates a $C_0$-semigroup
 $e^{A_1t}$ on  $X$ if and only if $A_2  $  generates a $C_0$-semigroup
 $e^{A_2t}$ on  $X$. More specifically,  $ P e^{A_1t}  P^{-1}=e^{A_2t}$.

%
%
%
%

\begin{lemma}\label{th201911262002}
 Let $X$ and  $U$ be   Hilbert spaces.
Suppose that   the operator $A_j : D(A_j )\subset X \to X $  generates a $C_0$-semigroup
 $e^{A_jt}$ on  $X$ and  $B_j\in \mathcal{L}(U,[D(A_j^*)]')$, $j=1,2$.
 If   $ A_1 \sim_{P}  A_2 $ and
 \begin{equation} \label{2018820915}
 \left.\begin{array}{l}
\disp \langle B_2u,x\rangle_{ [D(A_2^*)]', D(A_2^*)}=
\langle B_1u,P^*x\rangle_{[D(A_1^*)]',D(A_1^*)},
 \ \ \forall\  \ u\in U, x\in D(A_2^*),
\end{array}\right.
\end{equation}
   then,   the following assertions hold true:

(i).  $B_1$  is admissible for $e^{A_1 t}$ if and only if $ B_2$
 is admissible for  $e^{A_2t}$;

 (ii).   $(A_1,B_1)$ is exactly (or approximately) controllable if and only if   $(A_2,  B_2)$ is exactly (or approximately)  controllable.

\end{lemma}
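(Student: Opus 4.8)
The plan is to reduce both assertions to a single algebraic fact: under the compatibility condition \dref{2018820915}, the operator $B_2$ is exactly the image of $B_1$ under the natural extension of the isomorphism $P$ to the extrapolation spaces, so that the two input (reachability) maps differ only by an isomorphism. First I would make \dref{2018820915} explicit as an operator identity. Since $A_1^*\sim_{P^{-1*}}A_2^*$ with $P^*D(A_2^*)=D(A_1^*)$, the restriction $P^*:D(A_2^*)\to D(A_1^*)$ is an isomorphism of the graph-norm spaces, the intertwining $A_1^*P^*=P^*A_2^*$ furnishing boundedness in both directions. Its dual with respect to the pivot space $X$, which I denote $\hat{P}:=(P^*)'$, then lies in $\mathcal{L}([D(A_1^*)]',[D(A_2^*)]')$ and is again an isomorphism. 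Testing $\hat P$ on $z\in X\subset[D(A_1^*)]'$ against $x\in D(A_2^*)$ gives $\langle\hat{P}z,x\rangle=\langle z,P^*x\rangle_X=\langle Pz,x\rangle_X$, so $\hat{P}$ restricts to $P$ on $X$ and is therefore the unique bounded extension of $P$. Reading \dref{2018820915} through the definition of the dual map then yields precisely $B_2=\hat{P}B_1$.

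Next I would transport the semigroup to the extrapolation spaces. From $A_1\sim_P A_2$ we already have $Pe^{A_1t}P^{-1}=e^{A_2t}$ on $X$; since $\hat{P}\,\widetilde{e^{A_1t}}\,\hat{P}^{-1}$ is bounded on $[D(A_2^*)]'$ and agrees with $e^{A_2t}$ on the dense subspace $X$, uniqueness of the extension gives $\widetilde{e^{A_2t}}=\hat{P}\,\widetilde{e^{A_1t}}\,\hat{P}^{-1}$. Writing the input maps
\[
\Phi_\tau^j u=\int_0^\tau \widetilde{e^{A_j(\tau-s)}}B_j u(s)\,ds\in[D(A_j^*)]',\qquad u\in L^2(0,\tau;U),
\]
and substituting $B_2=\hat{P}B_1$ together with the intertwining relation, the factor $\hat{P}^{-1}\hat{P}$ cancels inside the integral and $\hat{P}$, being bounded, comes out past it, so that $\Phi_\tau^2=\hat{P}\,\Phi_\tau^1$ for every $\tau>0$.

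The two assertions are then immediate. For (i), $B_1$ is admissible for $e^{A_1t}$ iff $\Phi_\tau^1$ maps $L^2(0,\tau;U)$ boundedly into $X$; since $\hat{P}$ restricts to the isomorphism $P$ of $X$, one has $\Phi_\tau^2 u=P\,\Phi_\tau^1 u$ whenever $\Phi_\tau^1 u\in X$ and $\Phi_\tau^1 u=P^{-1}\Phi_\tau^2 u$ whenever $\Phi_\tau^2 u\in X$, so the range lands in $X$ on one side iff it does on the other, with boundedness preserved because $P,P^{-1}\in\mathcal{L}(X)$. For (ii), granted admissibility from (i), the ranges satisfy $\mathrm{Ran}(\Phi_\tau^2)=P\,\mathrm{Ran}(\Phi_\tau^1)$; as $P$ is a homeomorphism of $X$ it carries $X$ onto $X$ and dense subspaces onto dense subspaces, whence exact controllability ($\mathrm{Ran}(\Phi_\tau)=X$) and approximate controllability ($\overline{\mathrm{Ran}(\Phi_\tau)}=X$) each transfer between the two systems in both directions.

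I expect the only genuine obstacle to be the bookkeeping in the first two steps: checking that $\hat{P}$ really is the unique bounded extension of $P$ and an isomorphism of the extrapolation spaces, and that the intertwining relation survives the passage from $X$ to $[D(A_j^*)]'$. These are the points where the pivot-space dualities must be handled carefully; everything after the identity $\Phi_\tau^2=\hat{P}\,\Phi_\tau^1$ is a formal consequence of $P$ being an isomorphism.
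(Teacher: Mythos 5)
Your argument is correct and is essentially the paper's own proof: the paper derives the identity $\Phi_2(t)f=P\,\Phi_1(t)f$ by pairing against test functions $\phi\in D(A_2^*)$ and using $A_1^*\sim_{P^{-1*}}A_2^*$ together with \dref{2018820915}, which is exactly the weak form of your operator identities $B_2=\hat{P}B_1$ and $\widetilde{e^{A_2t}}=\hat{P}\,\widetilde{e^{A_1t}}\,\hat{P}^{-1}$. Once $\Phi_\tau^2=\hat{P}\,\Phi_\tau^1$ is in hand, both arguments conclude (i) and (ii) from the fact that $\hat{P}$ restricts to the isomorphism $P$ of $X$.
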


\begin{proof}
  We first prove (i).
  For any $f\in L_{\rm  loc}^2([0,\infty);U)$ and $\phi\in D(A_2^*)$, it follows from \dref{2018820915} that
  \begin{equation} \label{20191128803}
 \begin{array}{l}
 \left \langle e^{\tilde{A}_2(t-s)}B_2f(s), \phi\right\rangle_{[D(A_2^*)]',D(A_2^*)}
\disp =  \left \langle B_2f(s), e^{ {A}_2^*(t-s)}\phi\right\rangle_{[D(A_2^*)]',D(A_2^*)}\crr
\disp =  \left \langle B_1f(s), P^*e^{ {A}_2^*(t-s)}\phi\right\rangle_{[D(A_1^*)]',D(A_1^*)}
\disp =  \left \langle B_1f(s), P^*e^{ {A}_2^*(t-s)}P^{*-1} P^*\phi\right\rangle_{[D(A_1^*)]',D(A_1^*)}\crr\disp
=  \left \langle B_1f(s),  e^{ {A}_1^*(t-s)}  P^*\phi\right\rangle_{[D(A_1^*)]',D(A_1^*)}
= \disp  \left \langle e^{ \tilde{A}_1 (t-s)}B_1f(s),  P^*\phi\right\rangle_{[D(A_1^*)]',D(A_1^*)}
 \end{array}
\end{equation}
for any $t>0$ and $0\leq s\leq t$.
Define the operator $\Phi_{j  } (t) \in \mathcal{L}( L_{\rm loc}^2([0,\infty);U),[D(A_j^*)]') $ by
\begin{equation} \label{20191128751}
 \begin{array}{l}
\disp \Phi_{j  } (t) f= \int_0^{t}e^{\tilde{A}_j(t-s)}B_jf(s)ds,\ \ \forall\ \ f\in L_{\rm loc}^2([0,\infty);U),\    \forall\  t>0, \ j=1,2.
\end{array}\end{equation}
Then,  it follows from \dref{20191128803} that
  \begin{equation} \label{20191122001}
 \begin{array}{l}
\disp \left \langle \Phi_{2} (t)f, \phi\right\rangle_{[D(A_2^*)]',D(A_2^*)}
= \int_0^t\left \langle e^{\tilde{A}_2(t-s)}B_2f(s), \phi\right\rangle_{[D(A_2^*)]',D(A_2^*)}ds\crr
 = \disp \int_0^t\left \langle e^{ \tilde{A}_1 (t-s)}B_1f(s),  P^*\phi\right\rangle_{[D(A_1^*)]',D(A_1^*)}ds
= \disp \left \langle  \int_0^t e^{ \tilde{A}_1 (t-s)}B_1f(s)ds,  P^*\phi\right\rangle_{[D(A_1^*)]',D(A_1^*)}\crr
\disp =   \left \langle  \Phi_1(t) f,  P^*\phi\right\rangle_{[D(A_1^*)]',D(A_1^*)}.
\end{array}
\end{equation}
When   $B_1$ is admissible for $e^{A_1 t}$, we have $\Phi_1(t) f\in X $ and thus
   \begin{equation} \label{20191128735}
P \Phi_1(t) f \in X ,\ \ \forall\ t>0.
\end{equation}
  Combining \dref{20191122001} and \dref{20191128735}, we arrive at
 \begin{equation} \label{20191128736}
 \begin{array}{l}
\disp \left \langle  \Phi_2(t) f , \phi\right\rangle_{[D(A_2^*)]',D(A_2^*)}=
 \left \langle  \Phi_1(t) f,  P^*\phi\right\rangle_{X}
=\disp \left \langle P \Phi_1(t) f,  \phi\right\rangle_{[D(A_2^*)]',D(A_2^*)},
\end{array}
\end{equation}
and hence,
\begin{equation} \label{20191128742}
 \Phi_2(t)f = P \Phi_1(t)f \ \ \mbox{in}\ \ [D(A_2^*)]', \ \forall\ t>0
\end{equation}
due  to   the arbitrariness  of $\phi$. \dref{20191128735} and \dref{20191128742} imply that
$  \Phi_2(t) f \in X$ for any $t>0$ which means that $B_2$ is admissible for $e^{A_2 t}$.

 When   $B_2$ is admissible for $e^{A_2t}$,  $\Phi_2(t) f\in X $ and thus
   \begin{equation} \label{20191128735B2}
P^{ -1} \Phi_2(t) f \in X ,\ \ \forall\ t>0.
\end{equation}
   Combining \dref{20191122001} and \dref{20191128735B2}, we have
 \begin{equation} \label{20191128736B2}
 \begin{array}{l}
\disp \left \langle  P^{ -1}\Phi_2(t) f , P^*\phi\right\rangle_{[D(A_1^*)]',D(A_1^*)}=
\left \langle  \Phi_2(t) f ,  \phi\right\rangle_{[D(A_2^*)]',D(A_2^*)}
= \disp \left \langle  \Phi_1(t) f,  P^*\phi\right\rangle_{[D(A_1^*)]',D(A_1^*)},
\end{array}
\end{equation}
and hence,
\begin{equation} \label{20191128742B2}
 P^{ -1} \Phi_2(t)f = \Phi_1(t)f\  \mbox{ in }\  [D(A_1^*)]', \ \  \forall\ t>0
\end{equation}
due  to   the arbitrariness  of $\phi$. \dref{20191128735B2} and \dref{20191128742B2} imply that
$  \Phi_1(t) f \in X$ for any $t>0$ and hence $B_1$ is admissible for $e^{A_1 t}$.

We next    prove (ii). By the proof of (i), the equality \dref{20191128742}
  always holds
provided  $B_1$ is admissible for $e^{A_1 t}$ or
$B_2$ is admissible for $e^{A_2t}$.
Since $P \in \mathcal{L}(X)$ is invertible,  we conclude that
 ${\rm Ran} ({\Phi_1(t)})=X$ if and only if ${\rm Ran}(\Phi_2(t))=X$
  ( or $\overline{{\rm Ran}( \Phi_1(t))}=X$ if and only if $\overline{{\rm Ran}(\Phi_2(t))}=X$ ).
 This completes the proof of the lemma.
\end{proof}

\begin{remark}\label{Re20191128742}
When $B_1 $ and $B_2$ are  bounded, \dref{2018820915}
implies that $B_2=PB_1$. In this case, systems $(A_1,B_1)$ and
 $(PA_1P^{-1},PB_1)$  have the same
  controllability, which is the same as the finite-dimensional counterpart.
\end{remark}

By  the separation principle  of the   linear systems,
 a fair amount of      closed-loop systems  resulting  from   the  observer based output feedback
can be converted into a cascade system.  The same thing also  takes place in the actuator dynamics compensation.
 At the end of this  section, we consider the well-posedness and stability of   general  cascade systems,
 which is  useful for the    well-posedness and stability
 analysis of
the closed-loop system.
 Moreover,  it   can   simplify and unify the  proofs  in \cite{FengGuoWu2019TAC,FengGuo2017TAC,Zhou2017JDE}.

 \begin{lemma}\label{Lm201912031723}
 Let $X_1,X_2$ and $U_1$ be Hilbert spaces.
 Suppose that  $A_j $   generates  a
 $C_0$-semigroup  $e^{A_jt}$ on $X_j$,
 $B_1\in \mathcal{L}(U_1,[D(A_1^*)]')$
 is admissible for $e^{A_1t}$ and
$C_2\in \mathcal{L}(  D(A_2) , U_1)$ is admissible for $e^{A_2t}$, $j=1,2$.
   Let
  \begin{equation} \label{2020428921}
 \A =\begin{pmatrix}
\tilde{A}_1&B_1 C_{2\Lambda} \\0&\tilde{A}_2
\end{pmatrix},
\left.\begin{array}{l}
 \disp D(\A )=\left\{(x_1,x_2) ^{\top}\in X_{ 1}\times D(A_2)  \ | \  \tilde{A}_1x_1 +B_1C_{2\Lambda}x_2\in X_1 \right\}.
 \end{array}\right.
 \end{equation}
 Then, the operator $\A $   generates a
 $C_0$-semigroup $e^{\A  t}$ on $X_1\times X_2$.
 In addition, if we suppose further that  $e^{A_jt}$ is
   exponentially stable in $X_j$, $j=1,2$, then, $e^{\A  t}$ is  exponentially stable in $X_1\times X_2$.
\end{lemma}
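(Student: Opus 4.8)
The plan is to exploit the upper-triangular (cascade) structure: the $x_2$-component evolves autonomously under $e^{A_2t}$, while the $x_1$-component is driven by the $x_2$-output through $B_1C_{2\Lambda}$. Since $C_2$ is admissible for $e^{A_2t}$, for every $x_2^0\in X_2$ the map $s\mapsto C_{2\Lambda}e^{A_2s}x_2^0$ lies in $L^2_{\rm loc}([0,\infty);U_1)$; since $B_1$ is admissible for $e^{A_1t}$, the operator $\Phi_1(t)$ of \dref{20191128751} sends this map into $X_1$. I would therefore \emph{construct} the candidate semigroup explicitly by variation of parameters,
\begin{equation*}
T(t)\begin{pmatrix}x_1^0\\x_2^0\end{pmatrix}
=\begin{pmatrix}e^{A_1t}x_1^0+\Phi_1(t)\big(C_{2\Lambda}e^{A_2\cdot}x_2^0\big)\\ e^{A_2t}x_2^0\end{pmatrix},
\end{equation*}
rather than attempting to verify Hille--Yosida estimates for $\A$ directly.

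I would then check that $T(t)$ is a $C_0$-semigroup on $X_1\times X_2$. The identity $T(t+\tau)=T(t)T(\tau)$ reduces, on the first component, to the cocycle property of $\Phi_1$, which follows by splitting $\int_0^{t+\tau}$ into $\int_0^{\tau}$ and $\int_{\tau}^{t+\tau}$ and applying the change of variable $s\mapsto s-\tau$ together with the semigroup laws for $e^{A_1t}$ and $e^{A_2t}$. Local boundedness of $\|T(t)\|$ and strong continuity at $t=0^+$ come from the corresponding properties of the $e^{A_jt}$ and from the admissibility estimate $\|\Phi_1(t)f\|_{X_1}\le c(t)\|f\|_{L^2(0,t;U_1)}$ with $c(\cdot)$ nondecreasing, since $\|C_{2\Lambda}e^{A_2\cdot}x_2^0\|_{L^2(0,t;U_1)}\to0$ as $t\to0^+$.

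The main obstacle I expect is identifying the generator $\mathcal{G}$ of $T(t)$ with $\A$, \emph{including} the domain description in \dref{2020428921}. For $\mathrm{Re}\,\lambda$ large I would compute the Laplace transform $\int_0^\infty e^{-\lambda t}T(t)\,dt$ and show it equals the triangular operator
\begin{equation*}
R(\lambda)=\begin{pmatrix}(\lambda-A_1)^{-1}&(\lambda-\tilde A_1)^{-1}B_1\,C_{2\Lambda}(\lambda-A_2)^{-1}\\0&(\lambda-A_2)^{-1}\end{pmatrix},
\end{equation*}
where $(\lambda-\tilde A_1)^{-1}B_1\in\mathcal{L}(U_1,X_1)$ and $C_{2\Lambda}(\lambda-A_2)^{-1}=C_2(\lambda-A_2)^{-1}\in\mathcal{L}(X_2,U_1)$ are bounded by the admissibility/regularity properties recorded in \dref{20205281601}. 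Verifying $(\lambda-\A)R(\lambda)=I$ on $X_1\times X_2$ and $R(\lambda)(\lambda-\A)=I$ on $D(\A)$ pins down $\mathcal{G}=\A$ with exactly the stated domain; the delicate points are that the off-diagonal term forces the constraint $\tilde A_1x_1+B_1C_{2\Lambda}x_2\in X_1$ of \dref{2020428921}, and that the $\Lambda$-extension limits are controlled through the embeddings \dref{20205281604}.

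Finally, for exponential stability I would use a weight-shift argument. If $\|e^{A_jt}\|\le M_je^{-\omega t}$ with $\omega>0$, fix $0<\varepsilon<\omega$ and set $x_1^\varepsilon(t)=e^{\varepsilon t}x_1(t)$. Since $A_j+\varepsilon$ still generate exponentially stable semigroups, finite-time admissibility upgrades to infinite-time admissibility, giving
\begin{equation*}
\int_0^\infty e^{2\varepsilon s}\|C_{2\Lambda}e^{A_2s}x_2^0\|_{U_1}^2\,ds
=\int_0^\infty\|C_{2\Lambda}e^{(A_2+\varepsilon)s}x_2^0\|_{U_1}^2\,ds\le M'\|x_2^0\|_{X_2}^2
\end{equation*}
together with a bound, uniform in $t$, on $\|\Phi_1(t)(e^{\varepsilon\cdot}C_{2\Lambda}e^{A_2\cdot}x_2^0)\|_{X_1}$ via the infinite-time admissibility of $B_1$ for $e^{(A_1+\varepsilon)t}$. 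Hence $\|x_1(t)\|_{X_1}\le Ce^{-\varepsilon t}(\|x_1^0\|_{X_1}+\|x_2^0\|_{X_2})$, which combined with the decay of $x_2$ yields exponential stability of $e^{\A t}$ on $X_1\times X_2$.
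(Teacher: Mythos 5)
Your argument is correct, but it reaches the generation statement by a genuinely different route than the paper. The paper never constructs $e^{\A t}$ explicitly: it solves the cascade \dref{20191311712FHAD} classically for each initial datum in $D(\A)$ --- the admissibility of $C_2$ gives $C_2e^{A_2\cdot}x_2(0)\in H^1_{\rm loc}([0,\infty);U_1)$, the admissibility of $B_1$ then gives $x_1\in C^1([0,\infty);X_1)$ --- and invokes the Pazy criterion (unique continuously differentiable solvability for all data in $D(\A)$ forces $\A$ to be a generator). You instead build the candidate semigroup by variation of parameters, using only the $L^2_{\rm loc}$ admissibility estimate for $\Phi_1$ in \dref{20191128751}, and identify its generator through the triangular resolvent $R(\lambda)$. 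Your route is longer (cocycle property, strong continuity, Laplace-transform identification), but it buys something the paper's appeal to Pazy quietly presupposes: it exhibits $\rho(\A)\neq\emptyset$ and an explicit resolvent formula, and it verifies directly that the generator's domain is exactly the set in \dref{2020428921}; it also does not need the $H^1_{\rm loc}$ upgrade of the output. For the stability part the two arguments rest on the same mechanism --- finite-time admissibility becomes infinite-time admissibility once the semigroups are exponentially stable --- but are packaged differently: the paper weights only the output, setting $v_\omega(t)=e^{\omega t}C_2x_2(t)$, and splits $\int_0^t=\int_0^{\theta t}+\int_{\theta t}^{t}$ to extract decay from the convolution term, whereas you shift both generators by $\varepsilon$ and read off a uniform bound on $e^{\varepsilon t}x_1(t)$ in one step; your version is the cleaner of the two and avoids the $\theta$-splitting entirely.
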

\begin{proof}
The operator $\A $ is associated with the following system:
\begin{equation}\label{20191311712FHAD}
\left\{\begin{array}{ll}
  \disp  \dot{x}_1(t)=A_1x_1(t)+B_1C_2x_2(t),\crr
   \disp  \dot{x}_2(t)=A_2x_2(t).
\end{array}\right.
\end{equation}
Since $x_2$-subsystem is independent of $x_1$-subsystem, for any  $(x_{1}(0),x_{2}(0)) ^{\top}\in D(\A)$,
we solve \dref{20191311712FHAD} to obtain  $x_2(t)=e^{A_2t}x_{2}(0)$.
Moreover, it follows from
  \cite[Proposition 2.3.5, p.30]{TucsnakWeiss2009book},
  \cite[Proposition 4.3.4, p.124]{TucsnakWeiss2009book}  and the admissibility of $C_2$ for $e^{A_2t}$ that
    $x_2 \in C^1([0,\infty); X_2)$
and
\begin{equation}\label{wxh20208121700}
C_{2}x_2(\cdot)=C_2e^{A_2\cdot}x_{2}(0)\in  {H}^1_{\rm loc} ([0,\infty); U_1).
\end{equation}
  Since $\tilde{A}_1x_{1}(0) +B_1C_2x_{2}(0)\in X_1$, by  the admissibility of $B_1$ for $e^{A_1t}$,   \cite[Proposition 4.2.10, p.120]{TucsnakWeiss2009book} and \dref{wxh20208121700}, it follows that
 the solution of  the $x_1$-subsystem   satisfies $x_1\in C^1([0,\infty); X_1)$.
Therefore, system \dref{20191311712FHAD} admits  a unique continuously differentiable  solution  $(x_1,x_2)^\top\in C^1([0,\infty);X_1\times X_2)$ for    any   $(x_{1}(0),x_{2}(0)) ^{\top}\in D(\A)$. By \cite[Theorem 1.3, p.102]{Pazy1983Book}, the operator $\A $   generates a  $C_0$-semigroup $e^{\A  t}$ on $X_1\times X_2$.

We next   show the exponential stability.  Suppose that $(x_1,x_2)^{\top}\in C^1 ([0,\infty);X_1\times X_2)$ is a  classical    solution  of
 system \dref{20191311712FHAD}.
  Since
 $e^{A_jt}$  is  exponentially stable in $X_j$, there exist  two positive constants
$\omega_j$ and $L_j$  such that
\begin{equation}\label{201994746FHAD}
\left\| e^{ A_j t} \right\|  \leq L_j e^{-\omega_j t},\ \ \forall\  t\geq0, \ \ j=1,2.
\end{equation}
Hence,
  \begin{equation}\label{201994835FHAD}
\|x_2(t)\|_{X_2}\leq L_2e^{-\omega_2t}\|x_2(0)\|_{X_2},\ \ \forall\ t\geq0.
\end{equation}
By  the admissibility of $C_2$ for $e^{A_2 t}$   and  \cite[Proposition 4.3.6, p.124]{TucsnakWeiss2009book}, it follows that
     \begin{equation}\label{201994731FHAD}
 v_{\omega}\in L^2([0, \infty );U_1),\ \ v_{\omega}(t)= e^{\omega t}C_2x_2(t),\ \ 0<\omega<\omega_2  .
\end{equation}
We combine   \cite[Remark 2.6]{Weiss1989SICON},  \dref{201994746FHAD},  \dref{201994731FHAD} and the admissibility of $B_1$ for $e^{A_1 t}$  to get
\begin{equation}\label{wxh20189241014FHAD}
\left\|\int_{0}^{t}e^{A_1(t-s)}B_1v_{\omega}(s)ds\right\|_{X_1}\leq
M\|v_{\omega}\|_{ L^2([0, \infty);U_1)},\ \ \forall\ t> 0,
\end{equation}
where  $M>0$ is a  constant   independent of $t$.
 On the other hand, the solution of the $x_1$-subsystem is
 \begin{equation}\label{201994743FHAD}
 x_1(t)=e^{A_1t}x_1(0)+\int_0^te^{A_1(t-s)}B_1C_2x_2(s)ds\in X_1.
\end{equation}
Combining  \dref{201994746FHAD}, \dref{201994731FHAD} and  \dref{wxh20189241014FHAD},
 for any  $0<\theta<1$,  we have
\begin{equation*}\label{wxh201892111274FHAD}
\begin{array}{l}
  \left\| \displaystyle\int_{0}^{t}  e^{ A_1 (t-s)} B_1C_2x_2(s)ds \right\|_{ X_1}
\leq   \left\|\displaystyle\int_{0}^{\theta t}  e^{ A_1 (t-s)} B_1C_2x_2(s)ds\right\|_{ X_1}       +\left\|\displaystyle\int_{\theta t}^{t}   e^{ A_1 (t-s)} B_1C_2x_2(s)ds\right\|_{ X_1} \crr
\leq   \left\|\displaystyle e^{ A_1 (1-\theta)t}\int_{0}^{\theta t}e^{ A_1 (\theta t-s)}
 B_1C_2x_2(s)ds\right\|_{ X_1}  +e^{-\omega\theta t} \left\|\displaystyle \int_{\theta t}^{t}e^{ A_1 (t-s)} B_1v_{\omega}(s)ds\right\|_{ X_1} \crr
\leq   L_1e^{-\omega_1(1-\theta)t}M\| C_2x_2\|_{L^2([0, \infty);U_1)}+e^{-\omega\theta t}
M\|  v_{\omega}\|_{L^2([0, \infty);U_1) },
\end{array}
\end{equation*}
which, together with \dref{201994835FHAD}, \dref{201994743FHAD} and  \dref{201994746FHAD}, leads  to
 the exponential stability of $(x_1(t),x_2(t))$ in $X_1\times X_2$.
 The proof is  complete.
\end{proof}

Similarly  to   Lemma \ref{Lm201912031723}, we obtain immediately Lemma \ref{Lm202012907}.
 \begin{lemma}\label{Lm202012907}
     Let $X_1,X_2$ and $U_2$ be Hilbert spaces.
   Suppose that  $A_j$   generates  a
 $C_0$-semigroup  $e^{A_jt}$ on $X_j$,
 $B_2\in \mathcal{L}(U_2,[D(A_2^*)]')$
 is admissible for $e^{A_2t}$ and
 $C_1\in \mathcal{L}(  D(A_1) , U_2)$ is admissible for $e^{A_1t}$,
 $j=1,2$.
  Let
   \begin{equation} \label{202012912}
 \A_1=\begin{pmatrix}
\tilde{A}_1&0\\B_2 C_{1\Lambda} &\tilde{A}_2
\end{pmatrix},
\begin{array}{l}
 \disp
 D(\A_1 )=\left\{(x_1,x_2) ^{\top}\in D(A_1)\times  X_{2}  \  |\  \tilde{A}_2x_2+ B_2 C_{1\Lambda}x_1\in X_2 \right\}.
 \end{array}
 \end{equation}
Then, the operator $\A_1 $   generates a
  $C_0$-semigroup $e^{\A_1 t}$ on $X_1\times X_2$.
 Moreover, if we  suppose further that  $e^{A_it}$ is
   exponentially stable in $X_j$, $j=1,2$, then, $e^{\A_1t}$ is  exponentially stable in $X_1\times X_2$.
\end{lemma}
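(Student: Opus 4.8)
The plan is to obtain Lemma \ref{Lm202012907} from Lemma \ref{Lm201912031723} by symmetry, since $\A_1$ is precisely the lower-triangular mirror of the operator $\A$ treated there. The operator $\A_1$ is associated with the cascade
\begin{equation*}
\dot{x}_1(t)=A_1x_1(t),\qquad \dot{x}_2(t)=A_2x_2(t)+B_2C_1x_1(t),
\end{equation*}
in which now the $x_1$-subsystem is autonomous and drives the $x_2$-subsystem through $B_2C_1$. This is structurally identical to \dref{20191311712FHAD} after interchanging the two coordinates, so rather than repeat the generation-and-decay argument I would reduce the claim to Lemma \ref{Lm201912031723} by relabeling the indices.

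First I would introduce the coordinate-flip operator $J:X_1\times X_2\to X_2\times X_1$ defined by $J(x_1,x_2)^\top=(x_2,x_1)^\top$, which is a unitary isomorphism. Setting $\hat{A}_1=A_2$, $\hat{A}_2=A_1$, $\hat{B}_1=B_2$, $\hat{C}_2=C_1$ on the spaces $\hat{X}_1=X_2$, $\hat{X}_2=X_1$, $\hat{U}_1=U_2$, the hypotheses of Lemma \ref{Lm201912031723} hold verbatim: $\hat{B}_1=B_2\in\mathcal{L}(U_2,[D(A_2^*)]')$ is admissible for $e^{\hat{A}_1t}=e^{A_2t}$, and $\hat{C}_2=C_1\in\mathcal{L}(D(A_1),U_2)$ is admissible for $e^{\hat{A}_2t}=e^{A_1t}$. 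Moreover $\hat{C}_{2\Lambda}=C_{1\Lambda}$, since the $\Lambda$-extension of $C_1$ is taken with respect to $\hat{A}_2=A_1$ in both settings. The operator furnished by Lemma \ref{Lm201912031723} is then
\begin{equation*}
\hat{\A}=\begin{pmatrix}\tilde{A}_2&B_2C_{1\Lambda}\\0&\tilde{A}_1\end{pmatrix}\quad\text{on}\quad X_2\times X_1,
\end{equation*}
and a direct computation shows $\hat{\A}=J\A_1J^{-1}$ together with $JD(\A_1)=D(\hat{\A})$: the domain in \dref{202012912} is exactly the preimage under $J$ of the domain given by \dref{2020428921} read with the hatted data.

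With this in hand, Lemma \ref{Lm201912031723} gives that $\hat{\A}$ generates a $C_0$-semigroup $e^{\hat{\A}t}$ on $X_2\times X_1$. Since $\A_1=J^{-1}\hat{\A}J$ with $J$ a unitary isomorphism, $\A_1$ generates the $C_0$-semigroup $e^{\A_1t}=J^{-1}e^{\hat{\A}t}J$ on $X_1\times X_2$, by the similarity-invariance of semigroup generation noted after Definition \ref{De20191122107}. For the stability part, if each $e^{A_jt}$ is exponentially stable, then by the second assertion of Lemma \ref{Lm201912031723} the semigroup $e^{\hat{\A}t}$ decays exponentially; because $J$ is unitary, $\|e^{\A_1t}\|=\|e^{\hat{\A}t}\|$, so the same decay bound transfers to $e^{\A_1t}$, which is the desired conclusion.

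The only genuinely delicate point is the bookkeeping in the middle step: verifying that $J$ carries $D(\A_1)$ in \dref{202012912} onto $D(\hat{\A})$ and conjugates the operators, including that the extensions $\tilde{A}_j$ to $[D(A_j^*)]'$ and the $\Lambda$-extension $C_{1\Lambda}$ are mapped correctly under the flip. Alternatively one may simply transcribe the proof of Lemma \ref{Lm201912031723}: solve $x_1(t)=e^{A_1t}x_1(0)$, use admissibility of $C_1$ to place $C_1x_1\in H^1_{\mathrm{loc}}$, invoke admissibility of $B_2$ to obtain $x_2\in C^1$, and rerun the split-integral estimate over $[0,\theta t]\cup[\theta t,t]$ for the exponential decay. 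The main obstacle there is identical to the one already overcome, so I expect the relabeling route to be both cleaner and less error-prone.
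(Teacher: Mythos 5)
Your proposal is correct and matches the paper's intent exactly: the paper's own proof of this lemma is the single sentence that it is ``almost the same as Lemma \ref{Lm201912031723}'' with details omitted, i.e., an appeal to the same index-swapping symmetry you exploit. Your formalization via the unitary flip $J$ (with the verification that $J D(\A_1)=D(\hat{\A})$ and $J\A_1 J^{-1}=\hat{\A}$) is a clean and in fact more explicit rendering of what the authors leave implicit, so there is nothing to add.
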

\begin{proof}
 The proof is almost the same as Lemma \ref{Lm201912031723} and we omit  the details.
\end{proof}

\section{ Sylvester  equations }\label{Se.4}

In view of \dref{20191024842}, we need extend the Sylvester    equation to the
infinite-dimensional cases.
For this  purpose, we first give the  definition of the solution of    Sylvester equation.

 \begin{definition}\label{Defin20191023}
  Let $X_1$, $X_2$ and  $U_1$ be Hilbert spaces  and $A_j: D(A_j)\subset X_j\to X_j$
      be a densely defined
      operator with $\rho(A_j)\neq \emptyset$, $j=1,2$.
        Suppose that  $B_1\in \mathcal{L} ( U_1 , [D(A_1^*)]')$ and  $C_2\in \mathcal{L} (D(A_2),U_1)$.
 We say  that the     operator  $ S  $  is a  solution of the   Sylvester equation
  \begin{equation} \label{201910231059}
 \left.\begin{array}{l}
\disp       {A}_1  S - S   {A}_2 =   B_1C_{2}\ \ \mbox{on}\ \ D(A_2),
\end{array}\right.
\end{equation}
 if  $ S   \in \mathcal{L}(X_2,X_1)$ and  the following equality holds
   \begin{equation} \label{201910231103}
   \tilde{A}_1  S  x_2  -  S    {A}_2x_2  =   B_1C_{2} x_2 ,\ \ \forall\; x_2\in D(A_2),
\end{equation}
where $\tilde{A}_1$ is an extension of $A_1$  given by  \dref{20191121602}.
  \end{definition}

  \begin{lemma}\label{Lm2020629721}
   Suppose that $X_j$,   $U_j$ and $Y_j$ are  Hilbert spaces,  $A_j: D(A_j)\subset X_j\to X_j$
      is a densely defined
      operator with $\rho(A_j)\neq \emptyset$,
  $B_j\in \mathcal{L} ( U_j , [D(A_j^*)]')$ and  $C_j\in \mathcal{L} (D(A_j),Y_j)$,
   $j=1,2$.
  Let
    \begin{equation} \label{2020629801}
X_{jB_j}=D(A_j)+(\lambda_j-\tilde{A}_j)^{-1}B_jU_j,\ \ \lambda_j\in \rho(A_j), \ \ j=1,2.
\end{equation}
 Then,
  $X_{jB_j}$    is independent of $\lambda_j$ and can be characterized as
     \begin{equation} \label{2020629814}
 X_{jB_j}=\left\{ x_j\in X_j\ | \  \tilde{A}_jx_j+B_ju_j \in X_j,\  u_j\in U_j \right\},\ \   j=1,2 .
\end{equation}
  Suppose further that $Y_2=U_1$ and
   $S  \in \mathcal{L}(X_2, X_1)$  is  a solution of the Sylvester equation \dref{201910231059} in the sense of Definition \ref{Defin20191023}.
  Then,
  the following assertions hold  true:

 (i). If    $(A_1,B_1,C_1)$ is a regular linear system,   then,
      $C_{1\Lambda}S\in \mathcal{L}(D(A_2), Y_1) $;

(ii). If    $(A_2,B_2,C_2)$ is a regular linear system,    then,
 there exists an extension of $S$, still denoted by $S$,  such that
$ SB_2\in \mathcal{L}(U_2, [D(A_1^*)]') $ and
   \begin{equation} \label{2020629754}
   \tilde{A}_1  S  x_2  -
  S   \tilde{A}_2x_2  =   B_1C_{2\Lambda} x_2, \ \ \forall\;  x_2\in X_{2B_2}.
\end{equation}
  \end{lemma}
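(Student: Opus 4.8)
The plan is to establish the three claims in the order stated, each building on the previous, with everything reduced to the two identities $\tilde{A}_j(\lambda_j-\tilde{A}_j)^{-1}=-I+\lambda_j(\lambda_j-\tilde{A}_j)^{-1}$ on $[D(A_j^*)]'$ (immediate from $\tilde{A}_j=\lambda_j-(\lambda_j-\tilde{A}_j)$) and the Sylvester equation \dref{201910231103}.

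For the characterization \dref{2020629814}, I would prove both inclusions. Given $x_j=z_j+(\lambda_j-\tilde{A}_j)^{-1}B_ju_j$ with $z_j\in D(A_j)$, the first identity yields $\tilde{A}_jx_j+B_ju_j=A_jz_j+\lambda_j(\lambda_j-\tilde{A}_j)^{-1}B_ju_j$, which lies in $X_j$ since $(\lambda_j-\tilde{A}_j)^{-1}$ maps $[D(A_j^*)]'$ into $X_j$; this gives $\subseteq$. Conversely, if $x_j\in X_j$ and $y_j:=\tilde{A}_jx_j+B_ju_j\in X_j$, I would set $z_j:=x_j-(\lambda_j-\tilde{A}_j)^{-1}B_ju_j$ and check $\tilde{A}_jz_j=y_j-\lambda_j(\lambda_j-\tilde{A}_j)^{-1}B_ju_j\in X_j$; since $z_j\in X_j$ as well, $z_j=(\lambda_j-\tilde{A}_j)^{-1}(\lambda_j-\tilde{A}_j)z_j\in D(A_j)$, so $x_j\in X_{jB_j}$. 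As the right-hand set carries no $\lambda_j$, this simultaneously proves $\lambda_j$-independence.

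For assertion (i), the starting point is that \dref{201910231103} rewrites, for $x_2\in D(A_2)$, as $\tilde{A}_1(Sx_2)+B_1(-C_2x_2)=SA_2x_2\in X_1$; by the characterization this shows $Sx_2\in X_{1B_1}$, and regularity of $(A_1,B_1,C_1)$ gives $X_{1B_1}\subset D(C_{1\Lambda})$, so $C_{1\Lambda}Sx_2$ is defined. For the bound I would use the decomposition $Sx_2=z-(\lambda_1-\tilde{A}_1)^{-1}B_1C_2x_2$ with $z:=Sx_2+(\lambda_1-\tilde{A}_1)^{-1}B_1C_2x_2\in D(A_1)$, giving $C_{1\Lambda}Sx_2=C_1z-C_{1\Lambda}(\lambda_1-\tilde{A}_1)^{-1}B_1C_2x_2$. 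A short computation yields $(\lambda_1-A_1)z=S(\lambda_1-A_2)x_2$, so $\|z\|_{D(A_1)}\le\|S\|\,\|(\lambda_1-A_2)x_2\|_{X_2}\le C\|x_2\|_{D(A_2)}$, while the second term is bounded by $\|C_{1\Lambda}(\lambda_1-\tilde{A}_1)^{-1}B_1\|\,\|C_2\|_{\mathcal{L}(D(A_2),U_1)}\|x_2\|_{D(A_2)}$, the first factor being the bounded transfer function of the regular system. Hence $C_{1\Lambda}S\in\mathcal{L}(D(A_2),Y_1)$.

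Assertion (ii) is the delicate step and the main obstacle, because $SB_2$ and, more subtly, $S\tilde{A}_2x_2$ for $x_2\in X_{2B_2}$ require extending $S$ beyond $X_2$ (note $B_2u_2$ and $\tilde{A}_2x_2$ live in $[D(A_2^*)]'$). Motivated by inserting $x_2=(\lambda_2-\tilde{A}_2)^{-1}B_2u_2$ into the target \dref{2020629754}, I would \emph{define}
\[
SB_2u_2:=B_1C_{2\Lambda}(\lambda_2-\tilde{A}_2)^{-1}B_2u_2+(\lambda_2-\tilde{A}_1)S(\lambda_2-\tilde{A}_2)^{-1}B_2u_2 .
\]
This lies in $[D(A_1^*)]'$: the first summand because $C_{2\Lambda}(\lambda_2-\tilde{A}_2)^{-1}B_2\in\mathcal{L}(U_2,U_1)$ is the transfer function of the regular system $(A_2,B_2,C_2)$ and $B_1\in\mathcal{L}(U_1,[D(A_1^*)]')$, the second because $S(\lambda_2-\tilde{A}_2)^{-1}B_2\in\mathcal{L}(U_2,X_1)$ and $\lambda_2-\tilde{A}_1\in\mathcal{L}(X_1,[D(A_1^*)]')$; hence $SB_2\in\mathcal{L}(U_2,[D(A_1^*)]')$. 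I would then extend $S$ to $X_{2B_2}$ by writing $x_2=z_2+(\lambda_2-\tilde{A}_2)^{-1}B_2u_2$, $z_2\in D(A_2)$, and setting $S\tilde{A}_2x_2:=S(\tilde{A}_2x_2+B_2u_2)-SB_2u_2$, the first term using the original $S$ on $\tilde{A}_2x_2+B_2u_2\in X_2$. The two points needing care are independence of $\lambda_2$ and of the decomposition; both reduce, via the resolvent identity applied to the defining formula, to the Sylvester equation \dref{201910231103} on $D(A_2)$, which forces the discrepancy terms to vanish. Finally I would verify \dref{2020629754}: it holds on $D(A_2)$ by \dref{201910231103} together with $C_{2\Lambda}|_{D(A_2)}=C_2$, it holds on $(\lambda_2-\tilde{A}_2)^{-1}B_2U_2$ by the construction of $SB_2$, and therefore on all of $X_{2B_2}$ by linearity.
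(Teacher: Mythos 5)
Your proposal is correct and follows essentially the same route as the paper: part (i) rests on the same resolvent rewriting $Sx_2=(\lambda_1-\tilde{A}_1)^{-1}S(\lambda_1-A_2)x_2-(\lambda_1-\tilde{A}_1)^{-1}B_1C_2x_2$, and your formula for $SB_2$ is exactly the paper's extension $\tilde{S}=B_1C_{2\Lambda}(\beta-\tilde{A}_2)^{-1}+(\beta-\tilde{A}_1)S(\beta-\tilde{A}_2)^{-1}$, with the same check that it agrees with $S$ via the Sylvester identity and the same verification of \eqref{2020629754} separately on $D(A_2)$ and on $(\beta-\tilde{A}_2)^{-1}B_2U_2$. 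The only cosmetic difference is that you prove the characterization \eqref{2020629814} directly, where the paper cites Salamon and Weiss.
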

  \begin{proof}
    The definition of the  space $ X_{jB_j}$ and  its characterization \dref{2020629814} can be obtained
     by \cite[Section 2.2]{Salamon1987} and \cite[Remark 7.3]{Weiss1994MCSS} directly.

     The proof of (i).
 Since
 $S   $  solves the  Sylvester equation \dref{201910231059},
for any $x_2\in D(A_2)$,   we have
 $ \alpha Sx_2-\tilde{A}_1  S  x_2 +S    {A}_2x_2 = \alpha Sx_2-B_1C_{2} x_2$ with $\alpha \in \rho(A_1)$. That is
  \begin{equation} \label{2020822958}
  Sx_2= (\alpha-\tilde{A}_1)^{-1}S(\alpha- A_2)x_2-(\alpha-\tilde{A}_1)^{-1}B_1C_{2} x_2,\ \ \forall\ x_2\in D(A_2).
  \end{equation}
  Since   $(A_1,B_1,C_1)$ is a regular linear system and  $S  \in \mathcal{L}(X_2, X_1)$,  \dref{2020822958} implies that  $C_{1\Lambda}S\in \mathcal{L}(D(A_2), Y_1)$.

  The proof of (ii).    In terms of the solution  $S  \in \mathcal{L}(X_2, X_1)$  of  \dref{201910231059}, we define the operator $\tilde{S}$ by
   \begin{equation} \label{2020629855}
   \tilde{S}=   B_1C_{2\Lambda}(\beta-\tilde{A}_2)^{-1}+(\beta -\tilde{A}_1) S(\beta-\tilde{A}_2)^{-1},\ \ \beta\in \rho(A_2).
  \end{equation}
For  any $x_2\in X_2$,  since  $(\beta-\tilde{A}_2)^{-1}x_2\in D(A_2)$,
 it follows from  \dref{201910231103} that
     \begin{equation} \label{2020629858}
     \begin{array}{ll}
 \disp   \tilde{S}x_2&\disp =   B_1C_{2\Lambda}(\beta-\tilde{A}_2)^{-1}x_2-\tilde{A}_1S(\beta-\tilde{A}_2)^{-1}x_2+\beta S(\beta-\tilde{A}_2)^{-1}x_2\crr
 &\disp =   -S  \tilde{A}_2(\beta-\tilde{A}_2)^{-1}x_2+S\beta (\beta-\tilde{A}_2)^{-1}x_2 \crr
 &\disp =S(\beta-\tilde{A}_2) (\beta-\tilde{A}_2)^{-1}x_2=Sx_2,
 \end{array}
  \end{equation}
  which implies that $\tilde{S} $ is an extension of $S$.   On the other hand,     by the  regularity of      $(A_2,B_2,C_2)$ and the definition  \dref{2020629855}, we can conclude that $(\beta-\tilde{A}_2)^{-1}B_2 \in \mathcal{L} (U_2,D(C_{2\Lambda}))  $  and
   \begin{equation} \label{2020629924}
     \begin{array}{l}
 \disp   \tilde{S} B_2  \disp =   B_1C_{2\Lambda}(\beta-\tilde{A}_2)^{-1}B_2 +\beta S(\beta-\tilde{A}_2)^{-1}B_2 -\tilde{A}_1S(\beta-\tilde{A}_2)^{-1}B_2,
 \end{array}
  \end{equation}
  which implies that $\tilde{S}B_2\in \mathcal{L}(U_2, [D(A_1^*)]')$.  Moreover,
  for any $u_2\in U_2$,
 it follows from  \dref{2020629858} and \dref{2020629924} that
     \begin{equation} \label{20205281647}
         \begin{array}{l}
\disp \tilde{A}_1  \tilde{S} [(\beta-\tilde{A}_2)^{-1}B_2u_2]  -
 B_1C_{2\Lambda}[(\beta-\tilde{A}_2)^{-1}B_2u_2]= \tilde{S}\beta(\beta-\tilde{A}_2)^{-1}B_2u_2-\tilde{S}B_2u_2\crr
=\tilde{S}\beta(\beta-\tilde{A}_2)^{-1}B_2u_2-\tilde{S} (\beta-\tilde{A}_2) (\beta-\tilde{A}_2)^{-1}  B_2u_2=\tilde{S}\tilde{A}_2[(\beta-\tilde{A}_2)^{-1}B_2u_2].
\end{array}
  \end{equation}
   Due to  the arbitrariness of $u_2$, \dref{20205281647} implies that
  $\tilde{S}  $  solves  the Sylvester equation \dref{201910231059} on $(\beta-\tilde{A}_2)^{-1}B_2U_2$.
  Since  $\tilde{S}|_{X_2}=S$, \dref{201910231103}  and \dref{2020629801},
  we can obtain  \dref{2020629754}  easily   with replacement of  $S$ by  $\tilde{S}$. The proof is complete.
  \end{proof}


\begin{lemma}\label{20204292101}
  Let $X_1$, $X_2$ and   $U_1$  be   Hilbert spaces and let  $A_1 : D(A_1)\subset  X_1\to X_1$ be a
  densely defined    operator with $\rho(A_1) \neq \emptyset$.  Suppose that $A_2\in\mathcal{L}(X_2)$,   $C_2\in \mathcal{L} (X_2,U_1) $,
  $B_1\in \mathcal{L} ( U_1,[D(A_1^*)]') $  and
       \begin{equation} \label{20204292103}
 \sigma(A_1)\cap\sigma(A_2)=\emptyset.
\end{equation}
    Then, the Sylvester equation \dref{201910231059}   admits a  solution
      $S  \in \mathcal{L}(X_2, X_1)$ in the sense of Definition \ref{Defin20191023}.
 \end{lemma}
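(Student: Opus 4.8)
The plan is to adapt the Rosenblum contour-integral representation already used in the finite-dimensional Lemma \ref{Lm201910142119}, with one decisive change dictated by the hypotheses: since $A_2\in\mathcal{L}(X_2)$ is bounded, its spectrum $\sigma(A_2)$ is compact, whereas $\sigma(A_1)$ may be unbounded; hence I would encircle $\sigma(A_2)$ rather than $\sigma(A_1)$. Concretely, I set
\begin{equation}\label{planS}
S=\frac{1}{2\pi i}\int_{\Gamma}(\tilde{A}_1-\lambda)^{-1}B_1C_2(\lambda-A_2)^{-1}\,d\lambda,
\end{equation}
where $\tilde{A}_1$ is the extension of $A_1$ from Section \ref{Se.3} and $\Gamma$ is a smooth, positively oriented contour enclosing $\sigma(A_2)$ with $\sigma(A_1)$ lying outside.

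First I would justify the existence of such a $\Gamma$: because $A_2$ is bounded, $\sigma(A_2)$ is compact, and since $\sigma(A_1)$ is closed and \dref{20204292103} makes the two spectra disjoint, their distance is positive, so one can choose a bounded $\Gamma\subset\rho(A_1)\cap\rho(A_2)$ surrounding $\sigma(A_2)$ and separated from $\sigma(A_1)$. Next I would check that the integrand of \dref{planS} lies in $\mathcal{L}(X_2,X_1)$ for every $\lambda\in\Gamma$: indeed $(\lambda-A_2)^{-1}\in\mathcal{L}(X_2)$, $C_2\in\mathcal{L}(X_2,U_1)$, $B_1\in\mathcal{L}(U_1,[D(A_1^*)]')$, and, by the isomorphism property of $\beta-\tilde{A}_1$ recalled in Section \ref{Se.3} from \cite[Proposition 2.10.3]{TucsnakWeiss2009book}, $(\tilde{A}_1-\lambda)^{-1}\in\mathcal{L}([D(A_1^*)]',X_1)$. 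As this composition depends continuously, indeed holomorphically, on $\lambda$ over the compact contour $\Gamma$, the integral converges in $\mathcal{L}(X_2,X_1)$ and $S\in\mathcal{L}(X_2,X_1)$.

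The verification that $S$ solves \dref{201910231059} in the sense of Definition \ref{Defin20191023} then mirrors the finite-dimensional computation. Applying $\tilde{A}_1$ on the left and $A_2$ on the right and invoking the resolvent identities $\tilde{A}_1(\tilde{A}_1-\lambda)^{-1}=I+\lambda(\tilde{A}_1-\lambda)^{-1}$ on $[D(A_1^*)]'$ (valid because $\tilde{A}_1-\lambda$ is an isomorphism of $X_1$ onto $[D(A_1^*)]'$) and $(\lambda-A_2)^{-1}A_2=\lambda(\lambda-A_2)^{-1}-I_2$ on $X_2$, the two $\lambda$-weighted cross terms cancel and one is left with
\begin{equation}\label{planverify}
\tilde{A}_1S-SA_2=\frac{1}{2\pi i}\int_{\Gamma}B_1C_2(\lambda-A_2)^{-1}\,d\lambda+\frac{1}{2\pi i}\int_{\Gamma}(\tilde{A}_1-\lambda)^{-1}B_1C_2\,d\lambda.
\end{equation}
Since $\Gamma$ encloses all of $\sigma(A_2)$, the holomorphic functional calculus gives $\frac{1}{2\pi i}\int_{\Gamma}(\lambda-A_2)^{-1}\,d\lambda=I_2$, so the first term equals $B_1C_2$; since $\sigma(A_1)$ lies outside $\Gamma$, the map $\lambda\mapsto(\tilde{A}_1-\lambda)^{-1}$ is holomorphic on and inside $\Gamma$, whence Cauchy's theorem makes the second term vanish. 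This yields $\tilde{A}_1S-SA_2=B_1C_2$ on $X_2=D(A_2)$, which is exactly \dref{201910231103}.

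I expect the only genuine obstacle to be the bookkeeping in the extrapolation space $[D(A_1^*)]'$ rather than any deep difficulty: one must treat $\lambda\mapsto(\tilde{A}_1-\lambda)^{-1}$ as an $\mathcal{L}([D(A_1^*)]',X_1)$-valued holomorphic function, justify passing the operator $\tilde{A}_1\in\mathcal{L}(X_1,[D(A_1^*)]')$ through the integral, and confirm the resolvent identity together with the Riesz-projection evaluations in this weaker topology---all of which follow from the isomorphism $\beta-\tilde{A}_1$ and the uniform bounds available on the compact $\Gamma$. Note that only existence is asserted here; uniqueness, which was part of Lemma \ref{Lm201910142119}, is not claimed, so no additional spectral argument is required. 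The structural point worth emphasizing is that the compactness of $\sigma(A_2)$, a direct consequence of the boundedness assumption $A_2\in\mathcal{L}(X_2)$, is precisely what permits a bounded contour and hence the entire construction.
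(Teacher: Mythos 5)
Your proof is correct. It takes a more self-contained route than the paper: the paper's proof first invokes \cite[Lemma 22]{Sylvester1991} to obtain a unique solution $S\in\mathcal{L}(X_2,[D(A_1^*)]')$ of $\tilde{A}_1S-SA_2=B_1C_2$ with ${\rm Ran}(S)\subset X_1$, and then upgrades this to $S\in\mathcal{L}(X_2,X_1)$ by the bootstrap identity $S=(\alpha-\tilde{A}_1)^{-1}S(\alpha-A_2)-(\alpha-\tilde{A}_1)^{-1}B_1C_2$, whereas you bypass the cited lemma entirely and construct $S$ explicitly by the Rosenblum contour integral around the compact set $\sigma(A_2)$, verifying \dref{201910231103} directly from the resolvent identities and the Riesz functional calculus. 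The two arguments rest on the same mechanism (spectral separation in the extrapolation space $[D(A_1^*)]'$, where $\tilde{A}_1$ has domain $X_1$ and $\lambda-\tilde{A}_1$ is an isomorphism for $\lambda\in\rho(A_1)$), but your version buys an explicit formula for $S$ and lands in $\mathcal{L}(X_2,X_1)$ immediately, since the integrand already maps $X_2$ into $X_1$; the price is that you must carry out the contour-existence and holomorphy bookkeeping yourself (positive distance between the compact $\sigma(A_2)$ and the closed $\sigma(A_1)$, a cycle with winding number one about $\sigma(A_2)$ and zero about $\sigma(A_1)$, and the homology form of Cauchy's theorem for the vanishing of the second integral in \dref{planverify}), all of which you correctly identify and which do go through. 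The only thing your argument does not deliver, and correctly does not claim, is the uniqueness that the cited lemma provides; the statement only asserts existence, so this is not a gap.
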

\begin{proof}
   Since
 $\tilde{A}_1 \in\mathcal{L}(X_1,[D(A_1^*)]')$ and
   $B_1C_{2}\in\mathcal{L}(X_2,[D(A_1^*)]')$, it follows from
   \cite[Lemma 22]{Sylvester1991} and \dref{20204292103}  that
  the following Sylvester  equation
   \begin{equation} \label{202058805}
  \tilde{A}_1S-S {A}_2 =B_1C_{2 }
\end{equation}
   admits a unique solution $S  \in \mathcal{L}(   X_2,[D(A_1^*)]')$ in the sense that
   $S(X_2)=S(D(A_2))\subset D(\tilde{A}_1)=X_1$  and
   \begin{equation} \label{202058811}
   \tilde{A}_1Sx_2-S {A}_2 x_2=B_1C_{2 }x_2,\  \ \forall\ x_2\in X_2.
  \end{equation}
 By  a simple computation,  we have
  \begin{equation} \label{2020527919}
S=(\alpha  -\tilde{A}_1)^{-1}S(\alpha-A_2)-(\alpha  -\tilde{A}_1)^{-1}B_1C_2,\ \ \alpha\in \rho(A_1),
  \end{equation}
  which, together with the fact
 $(\alpha  -\tilde{A}_1)^{-1}\in \mathcal{L}([D(A_1^*)]',X_1)$,
   implies that $S  \in \mathcal{L}(   X_2,X_1)$.
   This shows that
  $S   $ is a  solution of
  equation \dref{201910231059}    in the sense of Definition \ref{Defin20191023}.
  The proof is complete.
  \end{proof}

\begin{lemma}\label{lm2020528744}
  Let $X_1$, $X_2$ and  $U_1$  be  Hilbert spaces and let
    $A_2 : D(A_2)\subset  X_2\to X_2$ be a
  densely defined   operator with $\rho(A_2) \neq \emptyset$.  Suppose that
$A_1\in\mathcal{L}(X_1)$,  $B_1\in \mathcal{L} (U_1 , X_1)$,  $C_2\in \mathcal{L} (D(A_2),U_1 ) $
  and   \dref{20204292103} holds.
 Then, the Sylvester equation \dref{201910231059}   admits a  solution
      $S  \in \mathcal{L}( X_2, X_1)$ in the sense of Definition \ref{Defin20191023}.
\end{lemma}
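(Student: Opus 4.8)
\emph{Proof proposal.} The plan is to deduce this statement from its already-established mirror image, Lemma \ref{20204292101}, by passing to adjoints. Lemma \ref{20204292101} treats the case in which the left operator $A_1$ is unbounded and the control operator $B_1$ is unbounded; here it is instead $A_2$ (the right operator) and $C_2$ (the observation operator) that may be unbounded, while $A_1$ and $B_1$ are bounded. Dualizing a Sylvester equation interchanges exactly these two roles, so I would construct the desired $S$ as the adjoint of the solution of a dual Sylvester equation to which Lemma \ref{20204292101} directly applies.

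First I would assemble the dual data and check the hypotheses of Lemma \ref{20204292101}. Since $\rho(A_2)\neq\emptyset$, the operator $A_2$ is closed and densely defined, hence $A_2^*$ is closed, densely defined, $(A_2^*)^*=A_2$, and $\rho(A_2^*)=\overline{\rho(A_2)}\neq\emptyset$. The operators $A_1^*\in\mathcal{L}(X_1)$ and $B_1^*\in\mathcal{L}(X_1,U_1)$ are bounded, while the observation operator $C_2\in\mathcal{L}(D(A_2),U_1)$ has adjoint $C_2^*\in\mathcal{L}(U_1,[D(A_2)]')$, and $[D(A_2)]'=[D((A_2^*)^*)]'$. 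Because $\sigma(A_j^*)=\overline{\sigma(A_j)}$, the hypothesis \dref{20204292103} gives $\sigma(A_2^*)\cap\sigma(A_1^*)=\overline{\sigma(A_2)\cap\sigma(A_1)}=\emptyset$. Thus the quadruple $(A_2^*,-C_2^*,A_1^*,B_1^*)$ meets all hypotheses of Lemma \ref{20204292101}, with $A_2^*$ playing the role of the unbounded left operator, $A_1^*$ that of the bounded right operator, $-C_2^*$ that of the unbounded control operator, and $B_1^*$ that of the bounded observation operator. Lemma \ref{20204292101} then furnishes a bounded $T\in\mathcal{L}(X_1,X_2)$ solving, in the sense of Definition \ref{Defin20191023}, the dual Sylvester equation
\begin{equation*}
\widetilde{A_2^*}\,T y_1-T A_1^* y_1=-C_2^*B_1^*y_1,\qquad\forall\,y_1\in X_1,
\end{equation*}
an identity in $[D(A_2)]'$, where $\widetilde{A_2^*}$ is the extension of $A_2^*$ given by \dref{20191121602}.

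Next I would set $S=T^*\in\mathcal{L}(X_2,X_1)$ and verify it solves \dref{201910231059}. Fix $x_2\in D(A_2)$, $y_1\in X_1$ and pair the dual identity against $x_2$. The three resulting terms unwind as
\begin{equation*}
\langle\widetilde{A_2^*}Ty_1,x_2\rangle_{[D(A_2)]',D(A_2)}=\langle Ty_1,A_2x_2\rangle_{X_2}=\langle y_1,SA_2x_2\rangle_{X_1},
\end{equation*}
by the defining relation \dref{20191121602} of $\widetilde{A_2^*}$; likewise $\langle TA_1^*y_1,x_2\rangle_{X_2}=\langle y_1,A_1Sx_2\rangle_{X_1}$ by boundedness of $A_1$, and $\langle C_2^*B_1^*y_1,x_2\rangle_{[D(A_2)]',D(A_2)}=\langle B_1^*y_1,C_2x_2\rangle_{U_1}=\langle y_1,B_1C_2x_2\rangle_{X_1}$ by the definition of the adjoint $C_2^*$. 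Hence the dual identity reads $\langle y_1,SA_2x_2-A_1Sx_2\rangle_{X_1}=-\langle y_1,B_1C_2x_2\rangle_{X_1}$ for all $y_1\in X_1$, which yields $A_1Sx_2-SA_2x_2=B_1C_2x_2$ for every $x_2\in D(A_2)$. Since $A_1$ is bounded, $\tilde{A}_1=A_1$, so this is precisely \dref{201910231103}, i.e.\ $S$ is a solution in the sense of Definition \ref{Defin20191023}.

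The step I expect to require the most care is the rigged-space bookkeeping in the verification above: one must keep straight that $C_2^*$ is a control operator valued in $[D(A_2)]'=[D((A_2^*)^*)]'$ and that $\widetilde{A_2^*}$ is the extension of $A_2^*$ pairing against $D(A_2)$, and one must track the sign reversal that dualization introduces in a Sylvester equation, which is why the dual right-hand side carries the minus sign. As an independent check, and an alternative self-contained route avoiding duality altogether, I would note that since $A_1$ is bounded its spectrum $\sigma(A_1)$ is compact and, being disjoint from the closed set $\sigma(A_2)$, lies at positive distance from it; one may then take a bounded contour $\Gamma$ around $\sigma(A_1)$ separated from $\sigma(A_2)$ and define
\begin{equation*}
S=\frac{1}{2\pi i}\int_{\Gamma}(A_1-\lambda)^{-1}B_1C_2(A_2-\lambda)^{-1}\,d\lambda,
\end{equation*}
which lies in $\mathcal{L}(X_2,X_1)$ because $C_2(A_2-\lambda)^{-1}\in\mathcal{L}(X_2,U_1)$ and $(A_1-\lambda)^{-1}B_1\in\mathcal{L}(U_1,X_1)$, and a standard resolvent computation shows it satisfies \dref{201910231103}.
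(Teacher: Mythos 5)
Your proposal is correct and follows essentially the same route as the paper: the paper likewise dualizes, applies Lemma \ref{20204292101} to the quadruple $(A_2^*,-C_2^*,A_1^*,B_1^*)$ to obtain a bounded solution $\Pi$ of $\Pi A_1^*-A_2^*\Pi=C_2^*B_1^*$, and then sets $S=\Pi^*$, verifying \dref{201910231103} by exactly the pairing computation you carry out. Your closing contour-integral construction is an extra, self-contained alternative that the paper does not pursue here (it only appears in the finite-dimensional Lemma \ref{Lm201910142119}), but the core argument matches.
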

\begin{proof}
 Since $A_2$ is densely defined and $\rho(A_2) \neq \emptyset$, $A_2$ is closed.
It follows from  \cite[Proposition 2.8.1, p.53]{TucsnakWeiss2009book} that  $A_2^{**}=A_2$  and thus  $\widetilde{A_2^*}\in \mathcal{L}(X_2,[D(A_2)]')$, where
 $\widetilde{A_2^*}$ is an extension of $A_2^*$  given by  \dref{20191121602}.
  Moreover, it follows from    \cite[Theorem 5.12, p.99]{Weidmann1980book} and   \dref{20204292103} that $\sigma(A^*_1)\cap\sigma(A^*_2)= \emptyset$.
 By Lemma \ref{20204292101},    there exists a
solution  $\Pi\in \mathcal{L}(X_1,X_2)$  to  the Sylvester equation $\Pi A_1^*-A_2^*\Pi=C_2^*B_1^*$.
 In particular, for any $x_2\in D(A_2)$, $ x_1\in X_1$,
   \begin{equation} \label{2020527957}
 \left.\begin{array}{l}
\disp        \langle  \widetilde{A_2^*}  \Pi  x_1,x_2\rangle_{[D(A_2)]',D(A_2)} -
\langle \Pi  {A}_1^*x_1,x_2\rangle_{X_2} =   -\langle C_{2 }^*B_1^* x_1, x_2\rangle_{[D(A_2)]',D(A_2)}.
\end{array}\right.
\end{equation}
That is
   \begin{equation} \label{20205271057}
 \left.\begin{array}{l}
\disp        \langle    x_1,\Pi ^*A_2x_2\rangle_{X_1} -
\langle x_1, {A}_1 \Pi ^*x_2\rangle_{X_1} =   -\langle x_1,B_1 C_{2 }x_2\rangle_{X_1}.
\end{array}\right.
\end{equation}
Since  $x_1$ is arbitrary, the equality
$\Pi ^*A_2x_2  -
  {A}_1 \Pi ^*x_2  =   - B_1 C_{2 }x_2$  holds in $X_1$ for any
 $x_2\in D(A_2)$.
 Therefore, $S=\Pi^* \in \mathcal{L}(X_2,X_1)$ is a solution of
  equation \dref{201910231059}.
  The proof is complete.
\end{proof}

\section{Actuator dynamics compensation}\label{Se.5}
This section is devoted to the extension of the  results in Section \ref{Se.2} from finite-dimensional
systems to  the  infinite-dimensional ones.

\begin{assumption}\label{A1}
Let $X_1$, $X_2$, $U_1$ and $U_2$ be   Hilbert spaces.  The operator $A_j$ generates  a $C_0$-semigroup $e^{A_jt}$ on $X_j$,
  $B_j\in \mathcal{L}(U_j,[D(A_j^*)]')$ is admissible for $e^{A_jt}$ and  $C_2
  \in \mathcal{L}(D(A_2),U_1) $ is admissible for
 $e^{A_2t}$, $j=1,2$. In addition,
   $\sigma(A_1) \cap \sigma(A_2)=\emptyset$
 and the semigroup $e^{A_2 t} $ is exponentially stable in $X_2$.
\end{assumption}

Since  the  stabilization  and compensation   of the actuator dynamics are two different  issues, we assume  additionally that
   the semigroup $e^{A_2 t} $ is exponentially stable, which is just  for avoidance of  the confusion.
Indeed, one just needs to stabilize the system  before the actuator dynamics compensation  when $e^{A_2 t} $ is not exponentially stable.
Since $e^{A_2 t} $ is exponentially stable already, the full state feedback  of system \dref{201911101138} can be designed, inspired by \dref{201911281134}, as
 \begin{equation} \label{201911281134infinite}
u(t)= K_{1\Lambda}x_1(t)+K_{1\Lambda}Sx_2(t),
\end{equation}
where
 the operator $ S  \in \mathcal{L}( X_2, X_1)$   is a solution  of
  the Sylvester equation
 \begin{equation} \label{201911301901}
    A_1 S-S   A_2  =   B_1C_2,
\end{equation}
and $K_1\in \mathcal{L}(D(A_1), U_2)$ is selected such that
 $A_1+ {S}B _2K_{1\Lambda}$  generates an exponentially stable  $C_0$-semigroup   on $X_1$.
 Under   controller \dref{201911281134infinite}, we obtain
 the
closed-loop
system:
 \begin{equation} \label{201911292049}
 \left\{\begin{array}{l}
\disp \dot{x}_1(t) = A_1 x_1(t)+B_1C_{2\Lambda}x_2(t),\crr
\disp  \dot{x}_2(t) =   (A_2  +B_2 K_{1\Lambda}S)x_2(t) +B_2 K_{1\Lambda}{x}_1(t).
\end{array}\right.
\end{equation}
 Define
\begin{equation} \label{201912021831}
\left\{\begin{array}{l}
\disp \mathscr{A} = \begin{pmatrix}
\tilde{A}_1& B_1C_{2\Lambda} \\
B_2K_{1\Lambda} &\tilde{A}_2 +B_2K_{1\Lambda}S
\end{pmatrix}, \crr
\disp  D(\mathscr{A})= \left\{  \begin{pmatrix}
                                 x_1 \\
                                 x_2
                               \end{pmatrix} \in X_{ 1}\times X_{ 2}\ \Big{|}\
\begin{array}{l} \tilde{A}_1x_1+B_1C_{2\Lambda}x_2\in X_1  \\
 \disp  B_2K_{1\Lambda}x_1+ (\tilde{A}_2+B_2K_{1\Lambda}S)x_2\in X_2\\\end{array}\right\}.
\end{array}\right.
\end{equation}
  Then, the
closed-loop
system \dref{201911292049} can be written as
 \begin{equation} \label{201912021832}
 \frac{d}{dt} (x_1(t),x_2(t))^{\top}=\mathscr{A}   (x_1(t),x_2(t))^{\top}.
   \end{equation}
In view of \dref{201910201037A2}, we define the operator
  \begin{equation} \label{20206291024}
\mathscr{A}_{S}= \begin{pmatrix}
\tilde{A}_1+ {S}B _2K_{1\Lambda} & 0\\
 B_2K_{1\Lambda }&  \tilde{A}_2
\end{pmatrix}
 \end{equation}
 with
 \begin{equation} \label{20206291025}
 D(\mathscr{A}_{S})  =\left\{(x_1,x_2)^{\top}\in  X_{1}  \times X_{2} \ |\
  ( \tilde{A}_1 + {S}B _2K_{1\Lambda}) x_1\in X_1,
 \tilde{A}_2x_2+  B_2K_{1\Lambda } x_1\in X_2 \right\}.
 \end{equation}

\begin{theorem}\label{Th201912021844}
In addition to   Assumption \ref{A1}, suppose that   $A_1\in \mathcal{L} (X_1) $ and  $(A_2,B_2,C_2)$ is a regular linear system.  Then,
the Sylvester equation
 \dref{201911301901} admits a  solution
$ S  \in \mathcal{L}( X_2, X_1)$ in the sense of Definition \ref{Defin20191023} such that
$SB_2\in \mathcal{L}(U_2,X_1)$.
If we suppose further that
  there is a $K_1\in \mathcal{L}(X_1, U_2)$   such that
  $A_1+{S}B _2K_{1 }$  generates an exponentially stable  $C_0$-semigroup   $e^{(A_1+ {S}B _2K_{1 })t}$  on $X_1$,
  then, the operator $\mathscr{A}$ defined by \dref{201912021831}  generates an exponentially stable   $C_0$-semigroup $e^{\mathscr{A} t}$
  on $X_1\times X_2$.
  \end{theorem}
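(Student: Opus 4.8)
The plan is to split the claim into the two assertions it contains. First I would produce the Sylvester solution $S$ together with the boundedness of $SB_2$, and then I would deduce the exponential stability of $\mathscr{A}$ by showing it is similar, in the sense of Definition \ref{De20191122107}, to the block-lower-triangular operator $\mathscr{A}_S$ of \dref{20206291024}, whose stability is supplied by Lemma \ref{Lm202012907}.

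For the first assertion, the key observation is that $A_1\in\mathcal{L}(X_1)$ forces $D(A_1^*)=X_1$ and hence $[D(A_1^*)]'=X_1$, so the operator $B_1\in\mathcal{L}(U_1,[D(A_1^*)]')$ from Assumption \ref{A1} is automatically $B_1\in\mathcal{L}(U_1,X_1)$. Lemma \ref{lm2020528744} then applies directly, using $\sigma(A_1)\cap\sigma(A_2)=\emptyset$, and yields a solution $S\in\mathcal{L}(X_2,X_1)$ of \dref{201911301901} in the sense of Definition \ref{Defin20191023}. To obtain the boundedness of $SB_2$ I would invoke Lemma \ref{Lm2020629721}(ii), whose hypothesis is exactly the regularity of $(A_2,B_2,C_2)$: it furnishes an extension of $S$ with $SB_2\in\mathcal{L}(U_2,[D(A_1^*)]')=\mathcal{L}(U_2,X_1)$, and, crucially, the extended Sylvester identity \dref{2020629754}, namely $A_1Sx_2-S\tilde{A}_2x_2=B_1C_{2\Lambda}x_2$, valid on the enlarged space $X_{2B_2}$ rather than merely on $D(A_2)$. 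This enlarged validity is precisely what the domain computation below needs.

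For the second assertion I would set $P=\begin{pmatrix} I_1 & -S \\ 0 & I_2\end{pmatrix}$, which is boundedly invertible with $P^{-1}=\begin{pmatrix} I_1 & S \\ 0 & I_2\end{pmatrix}$ since $S\in\mathcal{L}(X_2,X_1)$. Following the finite-dimensional model of Lemma \ref{Lm20191017836} (matrices \dref{201910201037A1}--\dref{201910201037A2}) with $K_2=0$, as $e^{A_2t}$ is already exponentially stable, I would verify that $\mathscr{A}_S\sim_P\mathscr{A}$, i.e. $P\mathscr{A}_SP^{-1}=\mathscr{A}$ together with $D(\mathscr{A})=PD(\mathscr{A}_S)$. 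On the level of the action, the identity $\mathscr{A}P=P\mathscr{A}_S$ collapses once the cross terms cancel: the $B_2K_1S$ contributions cancel in the second row, while in the first row $A_1S-S\tilde{A}_2$ is replaced by $B_1C_{2\Lambda}$ through \dref{2020629754}; because $A_1$ and $K_1$ are bounded, $\tilde{A}_1=A_1$ and $K_{1\Lambda}=K_1$, so almost no $\Lambda$-extension subtleties survive. The domain bookkeeping rests on the observation that the defining condition of $D(\mathscr{A})$, respectively $D(\mathscr{A}_S)$, already forces $x_2\in X_{2B_2}$, which in turn lies in $D(C_{2\Lambda})$ by regularity, so that every term is well defined and the extended Sylvester identity is applicable on the relevant vectors.

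Finally I would apply Lemma \ref{Lm202012907} to $\mathscr{A}_S$: its diagonal generators are $A_1+SB_2K_1$, bounded and generating an exponentially stable semigroup by hypothesis, and $A_2$, exponentially stable by Assumption \ref{A1}, while the coupling operator is $B_2K_1$, with $B_2$ admissible for $e^{A_2t}$ and $K_1$ bounded, hence trivially admissible for $e^{(A_1+SB_2K_1)t}$. Lemma \ref{Lm202012907} then gives that $\mathscr{A}_S$ generates an exponentially stable $C_0$-semigroup, and the relation $Pe^{\mathscr{A}_St}P^{-1}=e^{\mathscr{A}t}$ from the remark following Definition \ref{De20191122107} transfers both generation and the estimate $\|e^{\mathscr{A}t}\|\leq\|P\|\,\|P^{-1}\|\,\|e^{\mathscr{A}_St}\|$ to $\mathscr{A}$. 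The step I expect to be the main obstacle is not the algebra but the rigorous proof of $D(\mathscr{A})=PD(\mathscr{A}_S)$: one must check that $P^{\pm1}$ carries one domain condition exactly onto the other, and this is only possible because the Sylvester identity holds on all of $X_{2B_2}$ via \dref{2020629754}, which is why the regularity hypothesis on $(A_2,B_2,C_2)$ is indispensable.
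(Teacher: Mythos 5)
Your proposal is correct and follows essentially the same route as the paper: obtain $S$ from Lemma \ref{lm2020528744} (using that $A_1$ bounded makes $B_1\in\mathcal{L}(U_1,X_1)$), upgrade it via Lemma \ref{Lm2020629721}(ii) to get $SB_2\in\mathcal{L}(U_2,X_1)$ and the extended identity \dref{2020629754} on $X_{2B_2}$, establish the similarity $\mathscr{A}\sim\mathscr{A}_S$ through the upper-triangular transformation (your $P$ is just $\mathbb{S}^{-1}$ from \dref{20205272026}), and conclude with Lemma \ref{Lm202012907}. You also correctly isolate the one delicate point the paper's proof spends most of its effort on, namely that the domain conditions force $x_2\in X_{2B_2}\subset D(C_{2\Lambda})$ so that the extended Sylvester identity applies in the domain bookkeeping.
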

\begin{proof}
Since  $A_1\in \mathcal{L} (X_1) $, it follows that $B_1\in \mathcal{L}(U_1,X_1)$ and
$K_1=K_{1\Lambda}\in \mathcal{L}(X_1, U_2)$. By Lemmas  \ref{Lm2020629721} and \ref{lm2020528744},
the Sylvester equation
 \dref{201911301901} admits a solution
$ S  \in \mathcal{L}( X_2, X_1)$
  such that
$SB_2\in \mathcal{L}(U_2,X_1)$ and
  \begin{equation} \label{20206291031}
   {A}_1  S  x_2  -
  S   \tilde{A}_2x_2  =   B_1C_{2\Lambda} x_2 , \ \ \forall\; x_2\in X_{2B_2},
\end{equation}
where $X_{2B_2}$ is defined by \dref{2020629801}   or equivalently by \dref{2020629814}.
  Moreover, $SB_2K_1\in \mathcal{L}( X_1)$ and thus
 \begin{equation} \label{20205271531}
 D(\mathscr{A}_{S})  =\left\{(x_1,x_2)^{\top}\in  X_{1}  \times X_{2} \ |\
 \tilde{A}_2x_2+  B_2K_{1 } x_1\in X_2 \right\}.
 \end{equation}
We claim that $\mathscr{A}\sim_\mathbb{S}\mathscr{A}_{S} $, i.e.,
  \begin{equation} \label{20206261825}
 \left.\begin{array}{l}
\disp \mathbb{S}\mathscr{A} \mathbb{S} ^{-1}=\mathscr{A}_{S} \ \ \mbox{and}\ \  D(\mathscr{A}_{S}) = \mathbb{S}D(\mathscr{A}),
\end{array}\right.
\end{equation}
where the   transformation $\mathbb{S} $ is given  by
    \begin{equation} \label{20205272026}
   \left.\begin{array}{l}
\disp    \mathbb{S}  \left( {x}_1 , {x}_2 \right)^{\top}= \left( {x}_1+S  {x}_2,\ {x}_2 \right)^{\top},\ \ \forall\ ( {x}_1, {x}_2)^\top\in  X_1\times X_2.
 \end{array}\right.
\end{equation}
Obviously, $\mathbb{S}\in\mathcal{L}(X_1\times X_2)$  is invertible and its inverse is given by
\begin{equation} \label{20205272027}
   \left.\begin{array}{l}
 \disp    \mathbb{S}^{-1} \left( {x}_1 , {x}_2 \right)^{\top}= \left(   {x}_1- S  {x}_2,
 {x}_2\right)^{\top},\ \ \forall \left( {x}_1 , {x}_2 \right)^{\top}\in  X_1\times X_2.
 \end{array}\right.
\end{equation}
For any $(x_1,x_2)^\top\in D(\mathscr{A}_{S})$,  we have
$\tilde{A}_2x_2+  B_2K_{1 } x_1 \in X_2$ and $K_1x_1\in U_2$.
It follows from   \dref{2020629814} and the regularity of
 $(A_2,B_2,C_2)$  that
 $x_2\in X_{2B_2} \subset D(C_{2\Lambda})$. As a result, $B_1C_{2\Lambda}x_2\in X_1$ and thus
\begin{equation} \label{20206261840}
 A_1(x_1-Sx_2)+B_1C_{2\Lambda}x_2 \in X_1.
\end{equation}
Since
\begin{equation} \label{20206261842}
B_2K_1(x_1-Sx_2)+\tilde{A}_2x_2+B_2K_1Sx_2= \tilde{A}_2x_2+B_2K_1 x_1
 \in X_2,
\end{equation}
we combine  \dref{20205272027}, \dref{20206261840},  \dref{20206261842} and \dref{201912021831} to get  $\mathbb{S}^{-1}(x_1,x_2)^{\top}\in D(\mathscr{A})$.  Consequently, $D(\mathscr{A}_{S})\subset \mathbb{S}D(\mathscr{A})$
due to the arbitrariness of $(x_1,x_2)^\top\in D(\mathscr{A}_{S})$.
On the other hand, for any $(x_1,x_2)^\top\in D(\mathscr{A} )$,
By \dref{20205271531},
\dref{20205272026} and $\tilde{A}_2x_2+B_2K_1(x_1+Sx_2)\in X_2$,
we get
$\mathbb{S}(x_1,x_2)^\top\in D(\mathscr{A}_{S})$ and thus
$\mathbb{S}D(\mathscr{A})\subset D(\mathscr{A}_{S}) $.
We have therefore obtained that  $D(\mathscr{A}_{S})= \mathbb{S}D(\mathscr{A})$.

  For any  $(x_1,x_2)^\top\in D(\mathscr{A}_{S})$, it follows from \dref{20205271531},  \dref{2020629814}  and the regularity of
 $(A_2,B_2,C_2)$  that $x_2\in X_{2B_2}$. By virtue of \dref{20206291031},
 a straightforward computation shows that  $ \mathbb{S}\mathscr{A} \mathbb{S} ^{-1}(x_1,x_2)^{\top}=\mathscr{A}_{S}(x_1,x_2)^{\top}$ for any $(x_1,x_2)^{\top}\in D(\mathscr{A}_{S})$. Consequently,
  $\mathscr{A}$  and $\mathscr{A}_S$  are similar each other.

Since   the   $C_0$-semigroups   $e^{(A_1+ {S}B _2K_1)t}$  on $X_1$ and
 $e^{A_2t}$  on $X_2$  are  exponentially stable, $K_1\in \mathcal{L}(X_1, U_2)$ and
  $B_2$ is admissible for $e^{A_2t}$,
  it follows from Lemma \ref{Lm202012907} that the operator $\mathscr{A}_{S}$ generates an exponentially stable
   $C_0$-semigroup  $e^{\mathscr{A}_S t}$ on $X_1\times X_2$.
   By  the similarity of $\mathscr{A}_S$ and $\mathscr{A}$,
    the operator $\mathscr{A} $ generates an exponentially stable
   $C_0$-semigroup  $e^{\mathscr{A} t}$ on $X_1\times X_2$ as well.
  This completes the proof of the theorem.
 \end{proof}

When $X_1$ is finite-dimensional, we can characterize the existence of the feedback gain    $K_1$
through the system \dref{201911101138}  itself.

\begin{corollary}\label{Co201912036820}
In addition to Assumption \ref{A1}, suppose  that   $X_1$ is
finite-dimensional,
 $(A_2,B_2,C_2)$ is a regular linear system and
   system \dref{201911101138} is  approximately controllable.
  Then,
  there exist  $S\in \mathcal{L}(X_2,X_1)$ and
  $K_1\in \mathcal{L}(X_1, U_2)$ such that
    the operator $\mathscr{A}$  defined by \dref{201912021831} generates an exponentially stable   $C_0$-semigroup $e^{\mathscr{A} t}$
  on $X_1\times X_2$.
 \end{corollary}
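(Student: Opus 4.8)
The plan is to verify the hypotheses of Theorem \ref{Th201912021844} and, in particular, to produce the stabilizing gain $K_1$ by reducing the abstract controllability of \dref{201911101138} to the controllability of a finite-dimensional pair. Since $X_1$ is finite-dimensional, $A_1\in\mathcal{L}(X_1)$ and $B_1\in\mathcal{L}(U_1,X_1)$ hold automatically, so the first assertion of Theorem \ref{Th201912021844} already furnishes a solution $S\in\mathcal{L}(X_2,X_1)$ of the Sylvester equation \dref{201911301901} with $SB_2\in\mathcal{L}(U_2,X_1)$. It then remains only to exhibit some $K_1\in\mathcal{L}(X_1,U_2)$ making $A_1+SB_2K_1$ Hurwitz, since the exponential stability of $e^{\mathscr{A}t}$ follows immediately from the second assertion of Theorem \ref{Th201912021844}.

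The core step is to show that the finite-dimensional pair $(A_1,SB_2)$ is controllable. First I would write the open-loop generator of \dref{201911101138} as $\mathcal{A}_0=\begin{pmatrix}A_1 & B_1C_{2\Lambda}\\ 0 & \tilde{A}_2\end{pmatrix}$ with control operator $\mathcal{B}_0=(0,B_2)^{\top}$; by Lemma \ref{Lm201912031723} (with $B_1$ bounded and $C_2$ admissible) $\mathcal{A}_0$ generates a $C_0$-semigroup on $X_1\times X_2$, so approximate controllability of \dref{201911101138} is meaningful. Using the Sylvester identity \dref{201911301901} exactly as in the block-diagonalization computation of Theorem \ref{Th201912021844}, the transformation $\mathbb{S}$ of \dref{20205272026} should satisfy $\mathcal{A}_0\sim_{\mathbb{S}}\mathrm{diag}(A_1,\tilde{A}_2)$ and carry $\mathcal{B}_0$ to $\mathbb{S}\mathcal{B}_0=(SB_2,B_2)^{\top}$. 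Invoking Lemma \ref{th201911262002}(ii) for the similar generators $\mathcal{A}_0$ and $\mathrm{diag}(A_1,\tilde{A}_2)$, I would conclude that the decoupled system $\dot{x}_1=A_1x_1+SB_2u$, $\dot{x}_2=A_2x_2+B_2u$ is again approximately controllable on $X_1\times X_2$.

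Finally, projecting onto the first coordinate via the bounded surjection $\pi_1$ onto $X_1$, the $X_1$-component of the input-to-state map of the decoupled system is precisely the reachability map of $(A_1,SB_2)$; since $\pi_1$ is continuous, density of the full reachable set in $X_1\times X_2$ forces density of the reachable set of $(A_1,SB_2)$ in $X_1$, and finite-dimensionality of $X_1$ upgrades this to exact controllability of $(A_1,SB_2)$. Pole assignment then provides $K_1\in\mathcal{L}(X_1,U_2)$ with $A_1+SB_2K_1$ Hurwitz, and Theorem \ref{Th201912021844} completes the proof. The delicate point I expect is the second step: rigorously establishing $\mathcal{A}_0\sim_{\mathbb{S}}\mathrm{diag}(A_1,\tilde{A}_2)$ together with the control-operator compatibility \dref{2018820915} in the unbounded setting (matching the domains and correctly handling the $C_{2\Lambda}$-extension) so that Lemma \ref{th201911262002} genuinely applies. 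Everything downstream is then routine finite-dimensional linear systems theory.
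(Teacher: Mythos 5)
Your proposal is correct and follows the paper's own route almost step for step: obtain $S$ with $SB_2\in\mathcal{L}(U_2,X_1)$ from the Sylvester lemmas, conjugate the open-loop cascade generator by $\mathbb{S}$ to the block-diagonal pair $\bigl(\mathrm{diag}(A_1,\tilde{A}_2),(SB_2,B_2)^{\top}\bigr)$, transfer approximate controllability via Lemma \ref{th201911262002}(ii), deduce controllability of the finite-dimensional pair $(A_1,SB_2)$, and close with pole assignment and Theorem \ref{Th201912021844}. The one point where you genuinely diverge is the extraction of controllability of $(A_1,SB_2)$ from the decoupled system: the paper invokes the duality-based Appendix Lemma \ref{Lm20205291214} (passing to the adjoint observed system and arguing by contradiction), whereas you project the reachable set onto the $X_1$-factor and use that a dense linear subspace of a finite-dimensional space is the whole space. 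Your projection argument is sound and in fact more elementary --- it does not even use the spectral disjointness that Lemma \ref{Lm20205291214} hypothesizes --- though it exploits the finite-dimensionality of $X_1$ to upgrade density to equality, while the paper's lemma yields approximate controllability of both factors in full generality. Your flagged ``delicate point'' (verifying $\mathcal{A}_0\sim_{\mathbb{S}}\mathrm{diag}(A_1,\tilde{A}_2)$ with matching domains and the compatibility condition \dref{2018820915}) is handled in the paper by exactly the computation you describe, asserted there as ``a simple computation,'' so nothing is missing.
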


\begin{proof}
 By Lemmas \ref{lm2020528744} and \ref{Lm2020629721},
the Sylvester equation
 \dref{201911301901} admits a   solution
$ S  \in \mathcal{L}( X_2, X_1)$   such that
$SB_2\in \mathcal{L}(U_2,X_1)$ and \dref{2020629754} holds.
 Define
\begin{equation} \label{20204291140}
 \A_{S} =\begin{pmatrix}
A_1&0\\0&\tilde{A}_2
\end{pmatrix}\ \ \mbox{with}\ \ D(\A_{S} )=X_1\times D(A_2),
\
 \mathcal{B}_2 =\begin{pmatrix}
 0 \\
 B_2
\end{pmatrix},\
\mathcal{B}_{S}= \begin{pmatrix}
SB_2   \\
 B_2
\end{pmatrix}.
 \end{equation}
 A simple computation shows that $\A\sim_\mathbb{S}\A_{S}$, i.e.,
  $\mathbb{S}\A\mathbb{S}^{-1}=\A_S$ and $D(\A_S)=\mathbb{S}D(\A)$,
  where the operator $\A$ is given by \dref{2020428921}
    and
$\mathbb{S}$ is given  by \dref{20205272026}.  Moreover,
 $\mathcal{B}_{S}=\mathbb{S} \mathcal{B}_2 $ satisfies
  \begin{equation*} \label{2020581133}
 \left.\begin{array}{l}
\disp \left\langle \mathcal{B}_{S}u,
\begin{pmatrix}
  x_1 \\
  x_2
\end{pmatrix}
\right\rangle_{ [D(\A_{S}^*)]', D(\A_{S}^* )}=
\left\langle  \mathcal{B}_2  u,\mathbb{S}^*\begin{pmatrix}
  x_1 \\
  x_2
\end{pmatrix}
\right\rangle_{[D(\A ^*)]',D(\A^*)},
   \forall\    u\in U_2, \begin{pmatrix}
  x_1 \\
  x_2
\end{pmatrix}\in D(\A_{S}^*  ).
\end{array}\right.
 \end{equation*}
By Lemma \ref{th201911262002}  and the approximate controllability of system
  $(\A, \mathcal{B}_2)$, system
$(\A_{S}, \mathcal{B}_{S})$ is approximately controllable.
Thanks to the block-diagonal structure of $\A_{S}$, it follows from Lemma \ref{Lm20205291214} of Appendix that  the finite-dimensional system     $(A_1,  SB_2)$ is   controllable.
 By  the pole assignment theorem, there exists a
 $K_1\in \mathcal{L}(X_1, U_2)$   to stabilize system $(A_1,  SB_2)$.
  By Theorem \ref{Th201912021844},
  $\mathscr{A}$   generates an exponentially stable   $C_0$-semigroup $e^{\mathscr{A} t}$
  on $X_1\times X_2$.
\end{proof}
\begin{theorem}\label{Th20205271541}
In addition to Assumption \ref{A1}, suppose that   $A_2\in \mathcal{L} (X_2) $, $K_1\in \mathcal{L}(D(A_1), U_2)$  and $(A_1,B_1, K_1 )$ is a regular linear system.   Then,
the Sylvester equation
 \dref{201911301901} admits a   solution
$ S  \in \mathcal{L}( X_2, X_1)$ in the sense of Definition \ref{Defin20191023} and   $SB_2\in \mathcal{L}(U_2,X_1)$.
If we suppose  further that
 $K_1$
 stabilizes  system
  $(A_1, {S}B _2)$   in the sense of  \cite[Definition 3.1]{Weiss1997TAC},
   then,   the operator $\mathscr{A}$ defined by \dref{201912021831}  generates an exponentially stable   $C_0$-semigroup $e^{\mathscr{A} t}$
  on $X_1\times X_2$.
\end{theorem}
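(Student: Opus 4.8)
The plan is to treat this as the dual of Theorem \ref{Th201912021844}: now the \emph{actuator} operator $A_2$ is bounded while the plant triple $(A_1,B_1,K_1)$ is regular, so I would again reduce the closed-loop generator $\mathscr{A}$ of \dref{201912021831} to the block-lower-triangular operator $\mathscr{A}_S$ of \dref{20206291024} through the bounded invertible transformation $\mathbb{S}$ of \dref{20205272026}, and then read off exponential stability of $\mathscr{A}_S$ from the cascade Lemma \ref{Lm202012907}. The whole argument mirrors the proof of Theorem \ref{Th201912021844}, with the two triangular structures and the two cascade lemmas interchanged.

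First I would settle the two boundedness claims. Because $A_2\in\mathcal{L}(X_2)$ we have $D(A_2)=X_2$, so $C_2\in\mathcal{L}(X_2,U_1)$ and $B_2\in\mathcal{L}(U_2,X_2)$; hence Lemma \ref{20204292101} (whose hypotheses are exactly $A_2$ bounded, $\sigma(A_1)\cap\sigma(A_2)=\emptyset$ from Assumption \ref{A1}, together with $C_2$ and $B_1$ as above) yields a solution $S\in\mathcal{L}(X_2,X_1)$ of \dref{201911301901} in the sense of Definition \ref{Defin20191023}, and $SB_2\in\mathcal{L}(U_2,X_1)$ is then immediate from $B_2\in\mathcal{L}(U_2,X_2)$. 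To make sense of the entry $B_2K_{1\Lambda}S$ appearing in $\mathscr{A}$ I would invoke Lemma \ref{Lm2020629721}(i) with the observation operator taken to be $K_1$: the regularity of $(A_1,B_1,K_1)$ gives $K_{1\Lambda}S\in\mathcal{L}(D(A_2),U_2)=\mathcal{L}(X_2,U_2)$, so that $B_2K_{1\Lambda}S\in\mathcal{L}(X_2,X_2)$ and $\mathscr{A}$ is genuinely well defined.

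Next I would verify $\mathscr{A}\sim_{\mathbb{S}}\mathscr{A}_S$, i.e. $\mathbb{S}\mathscr{A}\mathbb{S}^{-1}=\mathscr{A}_S$ together with $D(\mathscr{A}_S)=\mathbb{S}D(\mathscr{A})$. The intertwining is a direct block computation in which the off-diagonal term collapses to $-(\tilde{A}_1S-SA_2-B_1C_2)$, which vanishes by the Sylvester identity \dref{201910231103} (here $\tilde{A}_2=A_2$ and $C_{2\Lambda}=C_2$ since $A_2$ is bounded). The domain equality I would check exactly as in the proof of Theorem \ref{Th201912021844}, using that $\mathbb{S}$ and $\mathbb{S}^{-1}$ are bounded and that $SB_2K_1$ is bounded on $X_1$.

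Finally, for exponential stability I would apply the lower-triangular cascade Lemma \ref{Lm202012907} to $\mathscr{A}_S$, taking the lemma's plant operator to be the closed-loop generator $\mathbb{A}_1:=A_1+SB_2K_{1\Lambda}$, its observation operator $K_1$, its actuator operator $A_2$ and its control operator $B_2$: the latter is admissible for $e^{A_2t}$ automatically since $A_2$ is bounded, $e^{A_2t}$ is exponentially stable by Assumption \ref{A1}, and the hypothesis that $K_1$ stabilizes $(A_1,SB_2)$ in the sense of \cite[Definition 3.1]{Weiss1997TAC} supplies both that $\mathbb{A}_1$ generates an exponentially stable $C_0$-semigroup and that $K_1$ is an admissible observation operator for that closed-loop semigroup; Lemma \ref{Lm202012907} then gives exponential stability of $e^{\mathscr{A}_St}$, and similarity transfers it to $e^{\mathscr{A}t}$. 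I expect the main obstacle to be precisely this last identification: extracting from the Weiss stabilizability notion not merely exponential stability of the closed loop but also the admissibility and regularity of $K_1$ for $e^{\mathbb{A}_1t}$, and checking that the $\Lambda$-extension $K_{1\Lambda}$ taken with respect to $A_1$ (as it appears in $\mathscr{A}_S$) coincides with the one with respect to $\mathbb{A}_1$ required by Lemma \ref{Lm202012907}. This is the point where the regular-feedback theory is genuinely used, rather than merely the boundedness of $A_2$.
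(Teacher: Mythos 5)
Your proposal follows essentially the same route as the paper's own proof: existence of $S$ via Lemma \ref{20204292101}, the bound $K_{1\Lambda}S\in\mathcal{L}(X_2,U_2)$ via Lemma \ref{Lm2020629721}(i) applied to the regular triple $(A_1,B_1,K_1)$, the similarity $\mathscr{A}\sim_{\mathbb{S}}\mathscr{A}_S$ with the same domain verification, and exponential stability of $\mathscr{A}_S$ from Lemma \ref{Lm202012907} using the Weiss stabilizability hypothesis. The technical point you flag at the end (that \cite[Definition 3.1]{Weiss1997TAC} must deliver admissibility of $K_1$ for the closed-loop semigroup $e^{(\tilde{A}_1+SB_2K_{1\Lambda})t}$) is exactly what the paper asserts and relies upon, so your argument is a faithful, if slightly more explicit, version of theirs.
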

\begin{proof}
Since  $(A_1,B_1,K_1)$ is   a regular linear system,
by   Lemmas  \ref{Lm2020629721} and \ref{20204292101},  the Sylvester equation \dref{201910231059}
admits a   solution
  $ S   \in \mathcal{L}(X_2,X_1)$      such that $K_{1\Lambda}S\in \mathcal{L}(X_2,U_2)$ and
   \begin{equation} \label{20206291056}
   \tilde{A}_1  S  x_2  -
  S  {A}_2x_2  =   B_1C_{2 } x_2 ,\ \ \forall\ \ x_2\in X_2.
\end{equation}
 Since $A_2\in \mathcal{L} (X_2) $ and $B_2\in \mathcal{L} (U_2, X_2) $, we have
  $SB_2\in \mathcal{L}(U_2,X_1)$ and $D(\mathscr{A}_{S})$ in \dref{20206291025} becomes
   \begin{equation} \label{2020627917}
 D(\mathscr{A}_{S})  =\left\{(x_1,x_2)^{\top}\in  X_{1}  \times X_{2} \ |\
\tilde{A}_1x_1+ {S}B _2K_{1\Lambda} x_1\in X_1, B_2K_{1\Lambda}x_1\in X_2 \right\}.
 \end{equation}
Similarly to \dref{20206261825}, we claim that $\mathscr{A}\sim_\mathbb{S}\mathscr{A}_{S} $, i.e.,
$\mathbb{S}\mathscr{A} \mathbb{S} ^{-1}=\mathscr{A}_{S}$  and  $ D(\mathscr{A}_{S}) = \mathbb{S}D(\mathscr{A})$,
where
the  transformation $\mathbb{S}\in\mathcal{L}(X_1\times X_2)$ is given  by
    \dref{20205272026}.
Actually, for any $(x_1,x_2)^\top\in D(\mathscr{A}_{S})$,
it  follows from
 \dref{20206291056} that
\begin{equation} \label{20206291638}
 \tilde{A}_1(x_1-Sx_2)+B_1C_{2 }x_2 =
 \tilde{A}_1 x_1-SA_2x_2=(\tilde{A}_1 x_1+SB_2K_{1\Lambda}x_1)- S(B_2K_{1\Lambda}x_1+A_2x_2),
 \end{equation}
which, together with   $S\in \mathcal{L}(X_2,X_1)$,    $A_2\in \mathcal{L}( X_2)$ and \dref{2020627917},
  leads to $ \tilde{A}_1(x_1-Sx_2)+B_1C_{2 }x_2\in X_1$.
   Since  $K_{1\Lambda}S\in \mathcal{L}(X_2,U_2)$, $B_2\in\mathcal{L}(U_2,X_2)$ and  $B_2K_{1\Lambda}x_1\in X_2$, we have
 \begin{equation} \label{2020627922}
B_2K_{1\Lambda}(x_1-Sx_2)+ {A}_2x_2+B_2K_{1\Lambda}Sx_2=
{A}_2x_2+B_2K_{1\Lambda} x_1
 \in X_2.
\end{equation}
  In view of \dref{201912021831} and   $\mathbb{S}^{-1}(x_1,x_2)^{\top}=(x_1-Sx_2,x_2)^\top$,
  we can conclude  that $\mathbb{S}^{-1}(x_1,x_2)^{\top}\in D(\mathscr{A})$.  Consequently, $D(\mathscr{A}_{S})\subset \mathbb{S}D(\mathscr{A})$
due to the arbitrariness of $(x_1,x_2)^\top\in D(\mathscr{A}_{S})$.

On the other hand, for any $(x_1,x_2)^\top\in D(\mathscr{A} )$,
 since $K_{1\Lambda}S\in \mathcal{L}(X_2,U_2)$, $S\in \mathcal{L}(X_2,X_1)$, $B_2\in \mathcal{L}(U_2,X_2)$ and $A_2\in \mathcal{L}( X_2)$,  \dref{201912021831} yields
$B_2K_{1\Lambda }x_1\in X_2$ and $B_2K_{1\Lambda}S x_2\in X_2$. As a result,
\begin{equation} \label{20206271008}
B _2K_{1\Lambda} (x_1+Sx_2) \in X_2\ \ \mbox{and}\ \ {S}B _2K_{1\Lambda} (x_1+Sx_2)\in X_1.
\end{equation}
 It follows from
 \dref{20206291056} and \dref{201912021831} that
 \begin{equation} \label{20206271009}
\tilde{A}_1(x_1+Sx_2)=(\tilde{A}_1 x_1+B_1C_2x_2)+SA_2x_2 \in X_1.
\end{equation}
 Combining \dref{20206271008},  \dref{20206271009} and \dref{2020627917},
 we  can conclude that
$\mathbb{S}(x_1,x_2) ^{\top} =(x_1+Sx_2,x_2) ^{\top} \in D(\mathscr{A}_{S})$ and   thus
$\mathbb{S}D(\mathscr{A})\subset D(\mathscr{A}_{S}) $.
 Consequently, we obtain $D(\mathscr{A}_{S})= \mathbb{S}D(\mathscr{A})$. By a straightforward computation, we also have $ \mathbb{S}\mathscr{A} \mathbb{S} ^{-1}(x_1,x_2)^\top=\mathscr{A}_{S}(x_1,x_2)^\top$
 for any $(x_1,x_2)^\top\in D(\mathscr{A}_{S})$.
  This shows that
  $\mathscr{A}$  and $\mathscr{A}_S$  are similar each other.

Since   $K_1\in \mathcal{L}(D(A_1), U_2)$
   stabilizes system
  $(A_1, {S}B _2)$ exponentially,   the operator $\tilde{A}_1+ {S}B _2K_{1\Lambda}$
generates an exponentially stable
   $C_0$-semigroup   $e^{(\tilde{A}_1+ {S}B _2K_{1\Lambda})t}$  on $X_1$ and $K_1$
is admissible for $e^{(\tilde{A}_1+ {S}B _2K_{1\Lambda})t}$.
Since
 $e^{A_2t}$ is exponentially stable  and
  $B_2\in \mathcal{L}(U_2,X_2)$,
  it follows from Lemma \ref{Lm202012907} that the operator $\mathscr{A}_{S}$ generates an exponentially stable
   $C_0$-semigroup  $e^{\mathscr{A}_S t}$ on $X_1\times X_2$.
   By  the similarity of $\mathscr{A}_S$ and $\mathscr{A}$,
    the operator $\mathscr{A} $ generates an exponentially stable
   $C_0$-semigroup  $e^{\mathscr{A} t}$ on $X_1\times X_2$ as well.
          This completes the proof of the theorem.
\end{proof}

At the end of  this section, let us summarize
the   scheme  of the actuator dynamics compensation.
Given an  actuator dynamics compensation problem,
      we can design a  compensator  through five  steps:
      \begin{itemize}

      \item Formulate the control plant as the abstract form  \dref{201911101138};

      \item  Find $K_2$  to stabilize system   $ (A_2, B_2)$;

      \item  Solve the   Sylvester equation  $A_1S-S(A_2+B_2K_2)=B_1C_2 $;

  \item Find $K_1$  to stabilize   system   $(A_1, SB_2)$;

  \item  With $K_1$, $K_2$ and $S$ at hand, the controller is designed as
 \begin{equation} \label{201912031020}
u(t)=K_{2\Lambda}x_2(t)+K_{1\Lambda}x_1(t)+K_{1\Lambda}Sx_2(t).
\end{equation}

\end{itemize}

 We need   an  explicit expression of the solution of the  Sylvester operator equation to get   the controller.
Generally speaking, it is not easy to solve the Sylvester   equation.
However, under some reasonable additional assumptions, we still can obtain the solution  analytically or numerically even for  the cascade system involving  a multi-dimensional PDE.
 (see, e.g., \cite{Lassi2014TAC}  and \cite{Lassi2014SIAM}).
 In particular, when the  system \dref{201911101138} consists of an ODE and a one-dimensional PDE,  the problem becomes quite easy.
 Indeed, if  $X_1$ is    $n$-dimensional, we can suppose that
 \begin{equation} \label{2020529729}
\disp   Sx_2=
\begin{pmatrix}
\langle x_2,  \Phi_1\rangle_{X_2}\\
\langle x_2,  \Phi_2\rangle_{X_2}\\
\vdots\\
\langle x_2,  \Phi_n\rangle_{X_2}\\
\end{pmatrix}  :=\langle x_2,\Phi\rangle_{X_2}     ,\ \ \forall\ x_2\in X_2,
\end{equation}
 where
$\Phi=(\Phi_1,\Phi_2,\cdots,\Phi_n)^{\top} $  with
   $\Phi_j\in  X_2$, $j=1,2,\cdots,n$. Inserting \dref{2020529729} into the corresponding
  Sylvester equation, we will   arrive at a  vector-valued ODE  with respect to the variable $\Phi$.
   When  $X_2 $ is of  $m$-dimension,
  we can  suppose   that
  \begin{equation} \label{2020529739}
\disp   Sh=\sum_{j=1}^{m}\Psi_jh_j :=\langle  \Psi,h\rangle_{X_2}
,\ \ \forall\ h=(h_1,h_2,\cdots,h_m)^{\top}\in X_2,
\end{equation}
where $\Psi=(\Psi_1,\Psi_2,\cdots,\Psi_m)^{\top}$ with  $\Psi_j\in X_1$, $j=1,2,\cdots,m$.
Inserting \dref{2020529739} into the corresponding
  Sylvester equation will  lead to
  a  vector-valued  ODE    as well.
Thanks to the ODE theory and numerical analysis theory \cite{Thomasbook2}, in both cases, the solution $S$  can be obtained
 analytically  or numerically.   In this way,
we can stabilize the ODE with
 actuator dynamics,  dominated by the
   transport  equation  \cite{SmyshlyaevKrstic2008SCL},   wave equation  \cite{Krstic2009TAC},
   heat equation  \cite{Krstic2009SCL}   as well as  the   Schr\"{o}dinger equation
\cite{WangJMSCL2013},
       in a unified way.
    More importantly, the more complicated problem that stabilize the PDEs  with ODE actuator dynamics can still be addressed effectively.
 To validate the effectiveness of the developed  method, the proposed scheme of controller design  will be applied to stabilization of
 ODE-transport  cascade and heat-ODE cascade  in sections \ref{Se.6}  and \ref{Se.7}, respectively.

 \begin{remark}\label{Re2020572112}
Another interesting question is to extend   Theorems \ref{Th201912021844} and \ref{Th20205271541} to
the case where  both $A_1$ and $A_2$ are unbounded.
There are still many  difficulties  to achieve this problem  in the  general abstract framework.
 One of the reasons is that the general Sylvester equation with unbounded operators is hard to be solved.
In addition, the proof of
   well-posedness and exponential stability is also  difficult. However,
   the main idea of the developed approach  is still helpful to  the actuator dynamics compensation
    with  the unbounded   $A_1$ and $A_2$. This will be considered in the
   third  paper  \cite{FPartDelay} of this series works.
 \end{remark}

\section{ODEs with input  delay}\label{Se.6}
In this section, we apply the proposed approach  to  the input   delay compensation for  ODEs.
Consider the following
  linear   system in the  state space $X_1=\R^n$:
   \begin{equation} \label{2018921217}
 \left.\begin{array}{l}
\dot{x}_1(t) = A_1x_1(t)+B_1 u(t-\tau),\ \  \tau>0,
\end{array}\right.
\end{equation}
where    $A_1\in \mathcal{L}( X_1)$
is the system operator,
$B_1\in \mathcal{L}(\R,X_1)$ is the  control operator, and
  $u:  [-\tau,\infty) \rightarrow \R $ is the  scalar  control  that is  delayed by   $\tau$ units of time.
  It should be pointed out that the input delay  compensation   problem \dref{2018921217}
  has been considered via  many approaches such as
    the spectrum
assignment  approach in \cite{KwonPearson1980TAC}, the ``reduction approach" in \cite{Artstein1982TAC}  and the PDE backstepping method in \cite{SmyshlyaevKrstic2008SCL}. In this section, we
   re-consider this problem and show our differences with other approaches.
     Let
 \begin{equation} \label{2018921236}
 w(x,t)=u(t-x) ,\ \ x\in[0,\tau ],\ \   t\geq 0 .
 \end{equation}
Then,  system \dref{2018921217}
 can be written  as
 \begin{equation} \label{201712252004}
\left\{\begin{array}{l}
\disp  \dot{x}_1(t)=A_1x_1(t) + B_1w(\tau ,t),\ \ t>0,\crr
\disp    w_t(x,t)+w_x(x,t)=0,\ \ x\in[0,\tau],\ \  t>0,\crr
\disp w(0,t)=u(t),\ \ t\geq0,
\end{array}\right.
\end{equation}
which clearly shows why   the time-delay is infinite-dimensional  and \dref{201712252004}
now is  delay free.  In order to write system \dref{201712252004}
into  the abstract form  \dref{201911101138}, we
define $ A_2  :D( A_2 )\subset L^2[0,\tau]\to L^2[0,\tau]$ by
\begin{equation} \label{20191150018}
 \left.\begin{array}{l}
 A_2  f=-f' ,\ \ \forall\ f\in D( A_2 )=\left\{f\in H^1(0,\tau)\ |\ f (0)=0\right\},
\end{array}\right.
\end{equation}
and define $B_2q=q\delta(\cdot)$ for any $q\in\R$, where $\delta(\cdot)$ is the Dirac distribution.
 System  \dref{201712252004} can be written  as  the abstract form:
\begin{equation} \label{201911511293}
 \left\{\begin{array}{l}
 \dot{x}_1(t)=A_1x_1(t)+B_1C_2w(\cdot,t),\crr
w_t(\cdot,t)= A_2  w (\cdot,t)+B_2u(t),
\end{array}\right.
\end{equation}
where $C_2 f= f(\tau)$ for all $f\in D(A_2)$. Define the vector-valued function $\Phi:[0,\tau]\to\R^n$ by
$\Phi(x)=(\Phi_1(x),\Phi_2(x),\cdots,\Phi_n(x))^{\top} $  for any $x\in[0,\tau]$,
where
   $\Phi_j\in  L^2[0,\tau]$  will be determined later, $j=1,2,\cdots,n$.
 Suppose that  the solution of  Sylvester equation
  \dref{201911301901} takes the form \dref{2020529729}.
Then,   $\Phi(\cdot)$ satisfies
  \begin{equation} \label{2020821450}
 \dot{\Phi}(x)=A_1\Phi(x),\ \  \Phi(\tau)= B_1.
 \end{equation}
 We solve \dref{2020821450} to obtain the  solution   of Sylvester equation
 \dref{201911301901}
 \begin{equation} \label{201912071844}
 Sf=\int_0^{\tau}e^{A_1(x-\tau)}B_1f(x)dx,\ \ \ \forall\ f\in  L^2[0,\tau].
 \end{equation}
  As a result,
   \begin{equation} \label{201912071844818}
     SB_2q=q\int_0^{\tau}e^{A_1(x-\tau)}B_1\delta(x)dx =e^{-A_1\tau}B_1q , \ \ \forall\  q\in\R.
     \end{equation}
If there exists a  $K \in \mathcal{L}(   X_1, \mathbb{R})$
 such that
  $A_1+ B_1K $ is Hurwitz, then
   the operator  $A_1+ e^{-A_1 \tau}B_1K e^{ A_1 \tau}$ is also Hurwitz due to the invertibility of
   $e^{-A_1 \tau}$.
 Since $e^{A_2t}$ is exponentially stable   already,  by \dref{201912031020}, the controller is then designed as
  \begin{equation} \label{20191251106trans}
 u(t)=
   K_1\int_0^{\tau} e^{A_1(x-\tau)}B_1w (x,t)dx+K_1{x}_1(t),\ K_1=Ke^{ A_1 \tau},
\end{equation}
which leads to the closed-loop system:
 \begin{equation} \label{201911051108trans}
 \left\{\begin{array}{l}
\disp \dot{x}_1(t) = A_1  x_1(t)+B_1  w(\tau,t),\ \ t>0, \crr
\disp    w_t(x,t)+w_x(x,t)=0,\ \ x\in[0,\tau],  \ \ t> 0,\crr
\disp w(0,t)=  K \int_0^{\tau} e^{A_1x}B_1 w (x,t)dx+Ke^{ A_1 \tau} {x}_1(t),\ \ t\geq0.
\end{array}\right.
\end{equation}
By \dref{2018921236},  the controller \dref{20191251106trans} can be rewritten as
 \begin{equation} \label{201912091111}
 \begin{array}{ll}
 u(t) \disp
  \disp =K\left[e^{A_1\tau}x_1(t)+\int_{t-\tau}^{t}e^{A_1(t-\sigma)}B_1u(\sigma)d\sigma\right],\ \ t\geq\tau,
 \end{array}
\end{equation}
which  is  the same   as those
obtained by  the spectrum
assignment  approach in \cite{KwonPearson1980TAC}, the ``reduction approach" in \cite{Artstein1982TAC}  and the PDE backstepping method in \cite{SmyshlyaevKrstic2008SCL}.

  It is seen that in our approach, we never need the target system as that by the  backstepping approach.
  This avoids the possibility that when the target system  is not  chosen properly, there is no state feedback control
  and even if the target system is good enough, there is difficulty in solving  PDE kernel equation for the  backstepping transformation.
  Another  advantage of the  proposed  approach   is that we never construct
  the  Lyapunov   function   in the stability
analysis, which avoids another difficulty of construction of the Lyapunov function.
 Finally,  we point out that the  proposed approach  is still working  for the  unbounded operator $A_1$,
 which will be considered  in detail  in the
  third  paper  \cite{FPartDelay} of this series works.

\section{Heat equation with ODE dynamics}\label{Se.7}

In this section, we consider the stabilization of  an  unstable  heat equation with  $m$-dimensional  ODE actuator dynamics as follows:
 \begin{equation} \label{201912291553}
 \left\{\begin{array}{l}
\disp w_t(x,t)=w_{xx}(x,t)+ \mu w(x,t),\ \ x\in(0,1), \ t>0,\crr
\disp  w(0,t)=0,\ w_x(1,t)=C_2x_2(t),\  \ t\geq0,\crr
\disp \dot{x}_2(t) = A_2  x_2(t)+B_2 u(t),\ \ t>0,
\end{array}\right.
\end{equation}
where $w(\cdot,t)$ is the state of the heat system, $\mu>0$, $A_2\in  \R^{m\times m} $ is the system matrix of  the actuator dynamics,
$C_2\in \mathcal{L}(\R^{m},\R)$ represents the connection,  $B_2\in \mathcal{L}(\R,\R^m)$  is the control operator  and $u(t)$ is the control.
We assume without loss of the generality
that $A_2$ is  Hurwitz. Compared with the
stabilization of  finite-dimensional systems through infinite-dimensional dynamics in existing literature,
 stabilization of   infinite-dimensional unstable system through  finite-dimensional dynamics is a difficult
 problem  and the corresponding result is very  scarce.

Define the operator  $ A_1  :D( A_1 )\subset L^2[0,1]\to L^2[0,1]$ by
\begin{equation} \label{201912311648}
 \left\{\begin{array}{l}
 A_1  f(\cdot)=f''(\cdot)+ \mu f(\cdot),\ \ \forall\ f\in D( A_1 ),\crr
D( A_1 )=\left\{f\in H^2(0,1)\mid f(0)= f'(1)=0\right\},
\end{array}\right.
\end{equation}
and  the operator $B_1:\mathbb{R}:\to [D(A_1^*)]'$ by
$B_1 c=c\delta(\cdot-1)$ for any $c\in \mathbb{R}$, where
$\delta(\cdot)$ is the Dirac distribution.
With these operators at hand,   system \dref{201912291553} can be written as the abstract form:
\begin{equation} \label{201912311647}
 \left\{\begin{array}{l}
 w_t(\cdot,t) =A_1w (\cdot,t)+B_1C_2x_2(t),\crr
\dot{x}_2(t)= A_2 x_2 (t)+B_2u (t).
\end{array}\right.
\end{equation}
Let $\Psi(\cdot)=(\Psi_1(\cdot),\Psi_2(\cdot),\cdots,\Psi_m(\cdot))^{\top} $ be a vector-valued function over
$[0,1]$, where
$\Psi_j\in L^2[0,1]$ will  be determined later, $j=1,2,\cdots,m$.
Suppose   that the solution of     Sylvester equation \dref{201911301901} takes the form
  \dref{2020529739}.  Then, a straightforward computation shows that   $\Psi(\cdot)$ satisfies
  \begin{equation}\label{2019122291634}
\left.\begin{array}{l}
\disp \Psi''(x) =(A_2^*- \mu)\Psi(x) ,\ \
 \disp \Psi(0)=0,\ \ \Psi'(1)=-C_2^{\top}.
 \end{array}\right.
\end{equation}
Solve system \dref{2019122291634} to obtain the solution
 \begin{equation}\label{2019122291644}
\left.\begin{array}{l}
\disp \Psi(x)=-\sinh Gx(G\cosh G)^{-1}C_2^{\top} ,\ \ G^2=A_2^*- \mu,\ x\in[0,1].
 \end{array}\right.
\end{equation}
By \dref{2020529739},  the solution   of   Sylvester equation \dref{201911301901} is
 \begin{equation} \label{201912311728}
 Sh=\langle h, \Psi(\cdot)\rangle_{\R^m}=\sum_{i=1}^{m}\Psi_i(\cdot)h_i,\ \ \ \forall\ h=(h_1,h_2,\cdots,h_m)^{\top}\in \R^m.
 \end{equation}
 According to the scheme of the compensator design  at the end of section \ref{Se.5},
 we need to stabilize system $(A_1,SB_2)$ which is associated with the following system:
 \begin{equation} \label{201912311823}
 \left\{\begin{array}{ll}
\disp    z _t(x,t)= z _{xx}(x,t)+ \mu z (x,t)+b(x)u(t),\ \ x\in(0,1),\ t>0, \crr
\disp   z (0,t)=  z _x(1,t)=0,\ \ t\geq0,
\end{array}\right.
\end{equation}
where $\mu>0$,  $z(\cdot,t)$ is the new state,   $u(t)$ is the control and
\begin{equation} \label{2020881745}
 b (\cdot)q = SB_2 q=q\sum_{i=1}^{m}\Psi_i(\cdot)b_{2i} ,\ \ B_2=(b_{21},b_{22},\cdots,b_{2m})^{\top},\ \ \forall\ q\in \R .
\end{equation}

Inspired by   \cite{CoronTrelat2004SICON,PrieurandTrelat2019TAC,Russell1978SIAMReview},   system \dref{201912311823} can be stabilized by   the  finite-dimensional spectral truncation technique.
 Let
\begin{equation}\label{wxh201912303}
\phi_n(x)=\sqrt{2}  \sin \sqrt{\lambda_n}x , \ \ \lambda_n=\left(n-\frac{1}{2}\right)^2\pi^2, \ \ x\in[0,1],\ \ n\geq 1.
\end{equation}
Then,  $\{\phi_n(\cdot) \}_{n=1}^{\infty}$ forms an orthonormal   basis for  $L^2[0,1]$ and satisfies
\begin{equation}\label{wxh201912302}\left.\begin{array}{l}
\phi_n''(x)=-\lambda_n\phi_n(x),\ \
 \phi_n(0)=\phi'_n(1)=0,\ \ \  n=1,2,\cdots.
\end{array}\right.\end{equation}
The function $b(\cdot)$ and the solution  $z(\cdot,t)$ of \dref{201912311823}
can be represented as
\begin{equation}\label{2020881457}\left.\begin{array}{l}
\disp b(\cdot)=\sum\limits_{n=1}^{\infty}b_n \phi_n(\cdot),\ \ b_n =\displaystyle \int_{0}^{1}b(x)\phi_n(x)dx, \ \ n=1,2,\cdots
\end{array}\right.\end{equation}
and
 \begin{equation}\label{2020881609}\left.\begin{array}{l}
\disp  z (\cdot,t)=\sum\limits_{n=1}^{\infty}z_n(t)\phi_n(\cdot),  \ \ z_n(t)=\displaystyle \int_{0}^{1} z (x,t)\phi_n(x)dx,\ \  n=1,2,\cdots.
\end{array}\right.\end{equation}
 By   \dref{201912311823},  \dref{wxh201912303}  and \dref{wxh201912302}, it follows   that
\begin{equation}\label{wxh201912305} \begin{array}{rl}
\dot{z}_n(t)=&\displaystyle \int_{0}^{1} z _t(x,t)\phi_n(x)dx
=\displaystyle \int_{0}^{1}\left[ z _{xx}(x,t)+  \mu z (x,t)+b(x)u(t)\right]\phi_n(x)dx\crr
=&(-\lambda_n+  \mu )z_n(t)+b_nu(t).
\end{array} \end{equation}
If we choose   the  integer  $N$  large enough such  that
\begin{equation}\label{201912301959}
 (-\lambda_n+ \mu)<0,\ \ \forall\ n>N,
 \end{equation}
then,   $z_n(t)$ is stable for all $n>N$.  It is therefore sufficient   to consider
$z_n(t)$  for  $n\leq N$, which satisfy the following finite-dimensional system:
\begin{equation}\label{201912302011}
\dot{Z}_N(t)=\Lambda_NZ_N(t)+B_Nu(t),\ \ Z_N(t)=(z_1(t),\cdots, z_{N}(t))^\top,
\end{equation}
where $\Lambda_N$  and $B_N$ are defined by
\begin{equation}\label{2020871455}\left\{\begin{array}{l}
\Lambda_N={\rm diag} (-\lambda_1+ \mu,\cdots,-\lambda_{N}+ \mu) ,\crr
B_N=\left(b_1,b_2,\cdots,b_{N} \right)^\top .
\end{array}\right.\end{equation}
In this way, the stabilization of system \dref{201912311823} amounts to stabilizing
the finite-dimensional system
\dref{201912302011}. If
 there exists an $L_N=(l_1,l_2,\cdots,l_N)\in \mathcal{L}(\R^N,\R)$ such that  $\Lambda_N{+}B_NL_N$
 is Hurwitz, then it follows from
    Lemma \ref{Th201912302046} in Appendix
 that
   the operator $A_1+SB_2K_N$  generates an exponentially stable $C_0$-semigroup on $L^2[0,1]$, where
   $K_N\in \mathcal{L}(L^2[0,1],\R)$ is given by
   \begin{equation}\label{2020881654}
 K_N: f\to   \int_{0}^{1}f(x) \left[\sum_{k=1}^{N}l_k\phi_k(x)\right] dx,\ \ \forall\ f\in L^2[0,1].
\end{equation}
  Taking  \dref{201911281134infinite} and \dref{201912311728} into account, the controller of system \dref{201912291553} can be designed as
 \begin{equation}\label{201912301847}
u (t)= K_N[Sx_2(t)+w(\cdot,t)]=K_N\left[\langle \Psi(\cdot),x_2(t)\rangle_{\R^m}+  w(\cdot,t)\right],
\end{equation}
which leads to the closed-loop system:
\begin{equation} \label{201912302033}
 \left\{\begin{array}{l}
\disp w_t(x,t)=w_{xx}(x,t)+ \mu w(x,t), \ \ x\in(0,1),\ \ t>0,\crr
\disp  w(0,t)=0,\ w_x(1,t)=C_2x_2(t),\ \ t\geq0,\crr
\disp \dot{x}_2(t) = A_2  x_2(t)+B_2  K_N[ \langle \Psi(\cdot),x_2(t)\rangle_{\R^m}+w(\cdot,t)],\ \
t>0,
\end{array}\right.
\end{equation}
where $\Psi(\cdot) $  and $K_N$ are given by \dref{2019122291644} and \dref{2020881654}, respectively.

\begin{theorem}\label{wxhTh201912302046}
Suppose that system \dref{201912291553} is approximately controllable, $A_2$ is Hurwitz,
 $\sigma(A_1)\cap \sigma(A_2)=\emptyset$,  and the constants $ \mu $ and $N$ satisfy \dref{201912301959}.
Then, there exists an $L_N=(l_1,l_2,\cdots,l_N)\in \mathcal{L}(\R^N,\R)$ such that,
for  any initial state  $(w(\cdot,0), x_2(0)) ^\top \in L^2[0,1]\times\R^m$, the closed-loop system \dref{201912302033}
admits a unique solution
$(w, x_2) ^ {\top}\in C([0,\infty);  L^2[0,1]\times\R^m)$ which
decays to zero  exponentially    in $L^2[0,1] \times \R^m$ as  $t\to \infty$.
\end{theorem}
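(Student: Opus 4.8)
The system \dref{201912291553} is precisely the instance of the abstract framework in which the control plant $A_1$ (the unstable heat operator) is unbounded while the actuator operator $A_2\in\mathcal{L}(\R^m)$ is bounded, so the natural tool is Theorem \ref{Th20205271541} rather than Theorem \ref{Th201912021844}. My plan is therefore to reduce the entire statement to a single verification, namely the existence of a gain $L_N$ (equivalently $K_N$) that exponentially stabilizes the auxiliary system $(A_1,SB_2)$ in \dref{201912311823}. Once such a gain is produced, the operator $\mathscr{A}$ associated with \dref{201912302033} is exactly the operator of Theorem \ref{Th20205271541} with $K_1=K_N$, and that theorem delivers the generation of an exponentially stable $C_0$-semigroup $e^{\mathscr{A}t}$ on $L^2[0,1]\times\R^m$; the asserted existence, uniqueness, continuity and exponential decay of $(w,x_2)^\top$ follow at once by writing $(w,x_2)^\top=e^{\mathscr{A}t}(w(\cdot,0),x_2(0))^\top$.

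First I would check that Assumption \ref{A1} and the regularity hypotheses of Theorem \ref{Th20205271541} hold. The operator $A_1$ of \dref{201912311648} is, up to the shift $\mu$, a self-adjoint Sturm--Liouville operator with the orthonormal eigenbasis $\{\phi_n\}$ of \dref{wxh201912303} and simple eigenvalues $-\lambda_n+\mu$; hence it generates a $C_0$-semigroup, and the boundary control operator $B_1c=c\delta(\cdot-1)$ is the standard admissible Neumann-type operator for the one-dimensional heat semigroup. Since $X_2=\R^m$ is finite-dimensional, $C_2$ is trivially admissible and $A_2$ Hurwitz gives the exponential stability of $e^{A_2t}$; the disjointness $\sigma(A_1)\cap\sigma(A_2)=\emptyset$ is assumed. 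Because $A_2\in\mathcal{L}(\R^m)$, Theorem \ref{Th20205271541} (via Lemmas \ref{Lm2020629721} and \ref{20204292101}) yields the explicit Sylvester solution $S$ of \dref{2019122291644} with $SB_2=b(\cdot)\in L^2[0,1]$ bounded. The required regularity of $(A_1,B_1,K_1)$ with $K_1=K_N$ is immediate: $K_N$ in \dref{2020881654} is a bounded functional on $L^2[0,1]$, and a triple with an admissible control operator and a bounded observation operator is automatically regular with zero feedthrough, so $K_{N\Lambda}=K_N$.

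The substantive step, which I expect to be the main obstacle, is constructing $L_N$ so that $K_N$ stabilizes $(A_1,SB_2)$. Here I would argue in two stages. Stage one transfers approximate controllability of the cascade \dref{201912291553} to $(A_1,SB_2)$: applying the similarity transformation $\mathbb{S}$ and Lemma \ref{th201911262002} exactly as in the proof of Corollary \ref{Co201912036820}, the open-loop pair is equivalent to the block-diagonal pair $(\mathrm{diag}(A_1,A_2),(SB_2,B_2)^\top)$, and since the two diagonal blocks decouple, the orthogonal projection of the joint reachable set onto $X_1=L^2[0,1]$ coincides with the reachable set of $(A_1,SB_2)$; density of the former forces density of the latter, so $(A_1,SB_2)$ is approximately controllable. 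Stage two exploits the Riesz-spectral structure: because the eigenvalues $-\lambda_n+\mu$ are simple and distinct, approximate controllability of $(A_1,SB_2)$ is equivalent to $b_n=\langle b,\phi_n\rangle\neq 0$ for every $n$. In particular $b_n\neq 0$ for all $n\le N$, which by the Hautus/PBH test makes the finite-dimensional truncation $(\Lambda_N,B_N)$ of \dref{201912302011}--\dref{2020871455} controllable (diagonal $\Lambda_N$ with distinct entries and a nowhere-vanishing input vector). Pole assignment then furnishes $L_N=(l_1,\dots,l_N)$ with $\Lambda_N+B_NL_N$ Hurwitz.

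With this $L_N$ in hand, the modes $n>N$ are already stable by the spectral gap condition \dref{201912301959}, so the spectral-truncation Lemma \ref{Th201912302046} of the Appendix applies and shows that $A_1+SB_2K_N$ generates an exponentially stable $C_0$-semigroup on $L^2[0,1]$; equivalently, $K_N$ stabilizes $(A_1,SB_2)$ in the sense required by Theorem \ref{Th20205271541}. Invoking that theorem completes the argument, since its conclusion is precisely that $\mathscr{A}$ generates an exponentially stable $C_0$-semigroup on $L^2[0,1]\times\R^m$. I anticipate that all the verifications in the second and fourth paragraphs are routine given the earlier abstract results, and that the genuine work is the controllability transfer of the third paragraph, together with the passage from approximate controllability of the infinite-dimensional pair $(A_1,SB_2)$ to the nonvanishing of the finitely many coefficients $b_1,\dots,b_N$.
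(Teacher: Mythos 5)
Your proposal is correct and follows essentially the same route as the paper: reduce to Theorem \ref{Th20205271541} with $K_1=K_N$, transfer approximate controllability through the similarity $\mathbb{S}$ to the block-diagonal pair and hence to $(A_1,SB_2)$, deduce $b_n\neq 0$ and controllability of $(\Lambda_N,B_N)$, and invoke Lemma \ref{Th201912302046}. Two small remarks: your projection-of-the-reachable-set argument for extracting controllability of $(A_1,SB_2)$ from the diagonal pair is a slightly more direct substitute for the paper's duality argument in Lemma \ref{Lm20205291214}, and is fine. The one step you skip that the paper does not is verifying that the explicit Sylvester solution is well defined — the closed-loop system \dref{201912302033} is written in terms of $\Psi$ from \dref{2019122291644}, which involves $(\cosh G)^{-1}$ with $G^2=A_2^*-\mu$, and the paper spends the first part of its proof showing $\cosh G$ is invertible (precisely because $\sigma(A_1)\cap\sigma(A_2)=\emptyset$ keeps $\sigma(G)$ away from $\{(n-\tfrac12)\pi i\}$); appealing to the abstract existence of $S$ does not by itself justify the concrete formula used in the controller, so this check should be added.
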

\begin{proof}
We first show that $\Psi(\cdot)$ defined by \dref{2019122291644} makes sense under the assumptions. Indeed, \dref{2019122291644}  can be rewritten as
\begin{equation}\label{20208131755}
\left.\begin{array}{l}
\disp \Psi(x)=-x\mathcal{G}(xG)( \cosh G)^{-1}C_2^{\top} , \ x\in[0,1],
 \end{array}\right.
\end{equation}
where
\begin{equation}\label{20181141103}
 \mathcal{G}(s)=\left\{\begin{array}{ll}
   \disp \frac{\sinh s}{s} ,& s\neq0,s\in \mathbb{C},\\
   1, & s=0.
 \end{array}\right.
   \end{equation}
   By  \cite[Definition 1.2, p.3]{Higham2008book},   both $\mathcal{G}(xG)$  and $\cosh G $  are always well defined. It suffices to prove that  $\cosh G $ is invertible.
Since  $\sigma(A_1) \cap \sigma(A_2)=\emptyset$ and $A_1=A_1^*$,
  a simple computation shows that
  \begin{equation}\label{20208141536}
 \sigma(G^2  )   \cap \sigma(A_1^*-\mu  )=\emptyset,  \ \ G^2=A_2^*- \mu
  \end{equation}
   and
 \begin{equation}\label{208131822}
\sigma(A_1^*-\mu  )=  \left\{ -\left(n-\frac{1}{2}\right)^2\pi^2\ \Big{|}\ n\in \mathbb{N}\right\}.
  \end{equation}
 For any  $\lambda \in \sigma(G)$,  we have $\lambda^2\in \sigma(G^2  )$ and hence
  $\lambda^2 \notin \sigma(A_1^*-\mu  )$.
This implies that  $\lambda \notin \left \{  \left(n-\frac{1}{2}\right)\pi i\ |\ n\in \mathbb{Z}\right\}$ and hence $\cosh \lambda \neq 0 $.
Consequently,   $\cosh G$ is invertible. The function $\Psi(\cdot)$   is well defined.

 By a simple computation, the operator $S$ given by \dref{201912311728} and \dref{2019122291644}  solves
  the Sylvester equation
 \dref{201911301901} and    $SB_2$  given by \dref{2020881745} satisfies $SB_2\in \mathcal{L}(\mathbb{R},L^2[0,1])$.
Define $\A =\begin{pmatrix}
\tilde{A}_1&B_1C_{2 }\\0& {A}_2
\end{pmatrix}$  and
$  \mathcal{B}_2 =\begin{pmatrix}
 0 \\
 B_2
\end{pmatrix} \in \mathcal{L}(\R, L^2[0,1]\times\R^m) $. Then,
     $(\A, \mathcal{B}_2)$  is approximately controllable. Similarly to the proof of Corollary \ref{Co201912036820}, it follows from  Lemma   \ref{th201911262002} that the pair
    $(\mathbb{S}\A \mathbb{S}^{-1},\mathbb{S}\mathcal{B}_2)=\left( \begin{pmatrix}
\tilde{A}_1&0\\0& {A}_2
\end{pmatrix}, \begin{pmatrix}
SB_2\\
B_2
\end{pmatrix}\right)$  is approximately controllable as well
    where the invertible
    transformation $\mathbb{S}$   is given by
   \begin{equation} \label{20205272026***}
   \left.\begin{array}{l}
\disp    \mathbb{S}  \left( f ,  x_2  \right)^{\top}= \left( f+S x_2,\ x_2  \right)^{\top},\ \ \forall\ ( f, {x}_2)^\top\in  L^2[0,1]\times \R^m.
 \end{array}\right.
\end{equation}
 Thanks to   the  block-diagonal structure of $\mathbb{S}\A \mathbb{S}^{-1}$ and
      Lemma \ref{Lm20205291214} in  Appendix,      system     $(A_1,  SB_2)$ is  approximately  controllable. By \dref{2020881745}, system  $(A_1,  b(\cdot))$ is  approximately  controllable as well.
      Since   $\{\phi_n(\cdot)\}_{n=1}^{\infty}$ defined  by \dref{wxh201912303}  forms an orthonormal   basis for  $L^2[0,1]$,   we then conclude   that
         \begin{equation}\label{2020881648}
 b_n =\displaystyle \int_{0}^{1}b(x)\phi_n(x)dx\neq0,\ \ n=1,2,\cdots,N,
\end{equation}
      which, together with  \dref{2020871455},  implies that the finite-dimensional linear system $(\Lambda_N,B_N)$ is controllable.
    As a result,  there exists an $L_N=(l_1,l_2,\cdots,l_N)\in \mathcal{L}(\R^N,\R)$
      such that
 $\Lambda_N{+}B_NL_N$
 is Hurwitz.
       By
    Lemma \ref{Th201912302046},
       the operator $  A_1+SB_2K_N$ generates an exponentially stable $C_0$-semigroup
  on $L^2[0,1]$.
 This  completes the proof   by
  Theorem \ref{Th20205271541}.
\end{proof}

\section{Numerical simulations}\label{Se.8}
 In this section, we carry out  some simulations for system \dref{201912302033} to validate our theoretical  results.
We choose
 \begin{equation}\label{wxh202068706}
A_2 = \begin{pmatrix}
  -1& 0\\
 0 & -2
\end{pmatrix},\ B_2=\begin{pmatrix}
  1\\
1
\end{pmatrix},\  \ C_2=(1, 1),\  \ \mu=10.
\end{equation}
It is easily  to check that the assumptions in Theorem \ref{wxhTh201912302046}  are fulfilled with $N=1$.
The    initial states of  system \dref{201912302033} are chosen as
$x_2(0)=(1,1)^\top$  and $  w(x,0)=\sin \pi x$  for any $ x\in[0,1]$.
 The finite difference scheme is adopted in discretization.
The time  and   space steps  are
taken as  $4\times 10^{-5}$ and $ 10^{-2}$, respectively. The numerical results
are programmed in Matlab.
We assign  the poles  to get  the gains
$\Lambda_N=7.5326$, $B_N=0.3130$ and  $L_N=-30.4541$, which yield
 $\sigma(\Lambda_N+B_NL_N)=\{-2\}$.


The solution of the open-loop system  \dref{201912291553} with $u=0$ is plotted in
Figure~\ref{Fig1} (a) and (b) which show that the control free system is indeed unstable.
 The trajectory of state feedback law $u(t)$ is plotted  in  Figure~\ref{Fig1} (c).
 The
state $w(\cdot,t)$ of  the  closed-loop system \dref{201912302033}
is plotted in
Figure~\ref{Fig2} (a) and the state $x_2(t)$  is plotted in Figure~\ref{Fig2} (b).
Comparing Figure \ref{Fig1} with Figure   \ref{Fig2},
it is found  that  the proposed approach is very
  effective and the  controller    is   smooth.

\begin{figure}[!htb]\centering
\subfigure[  $ w(x,t)$.]
 {\includegraphics[width=0.32\textwidth]{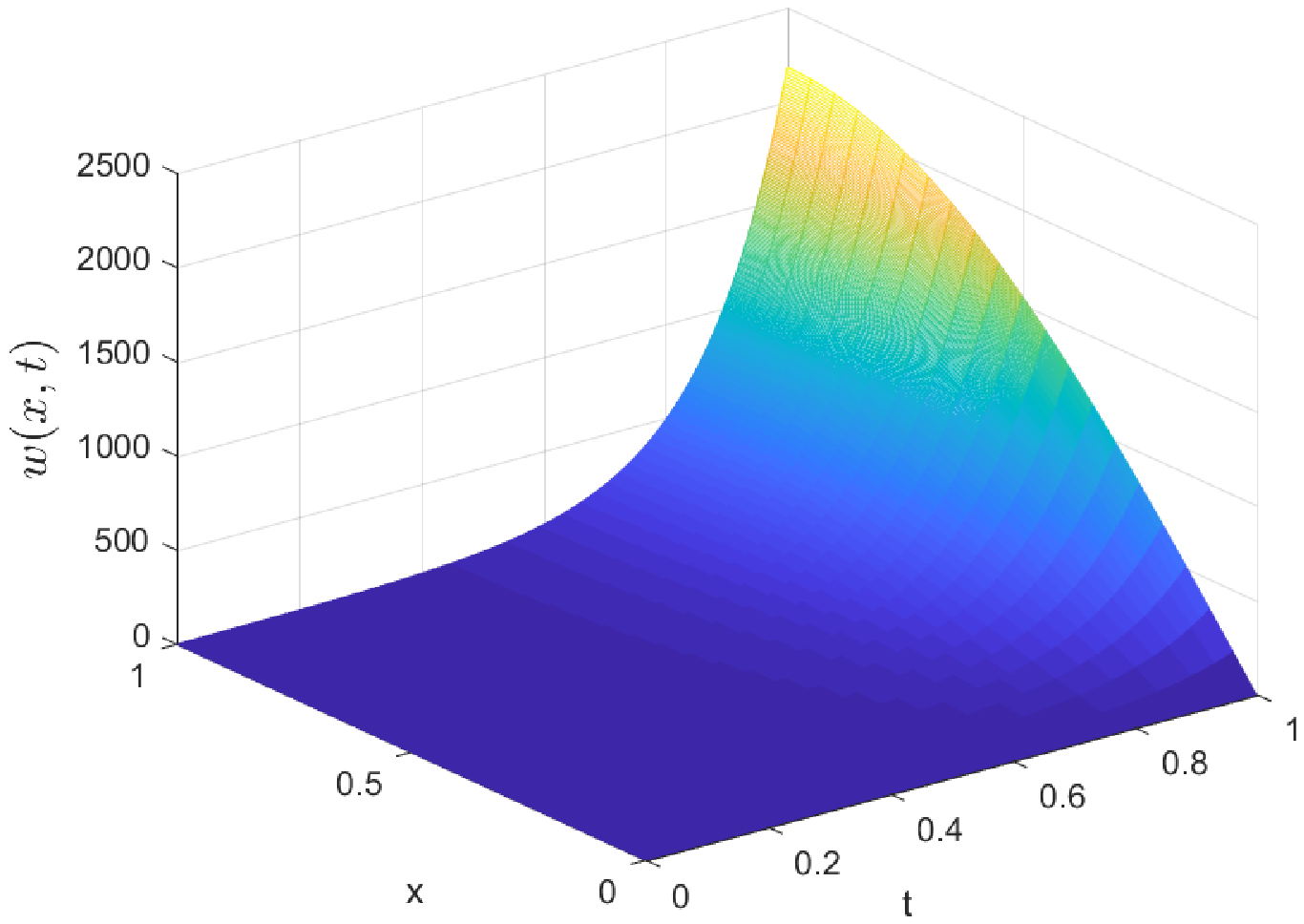}}
\subfigure[  $x_2(t)=(x_{21}(t),x_{22}(t))^\top$.]
 {\includegraphics[width=0.32\textwidth]{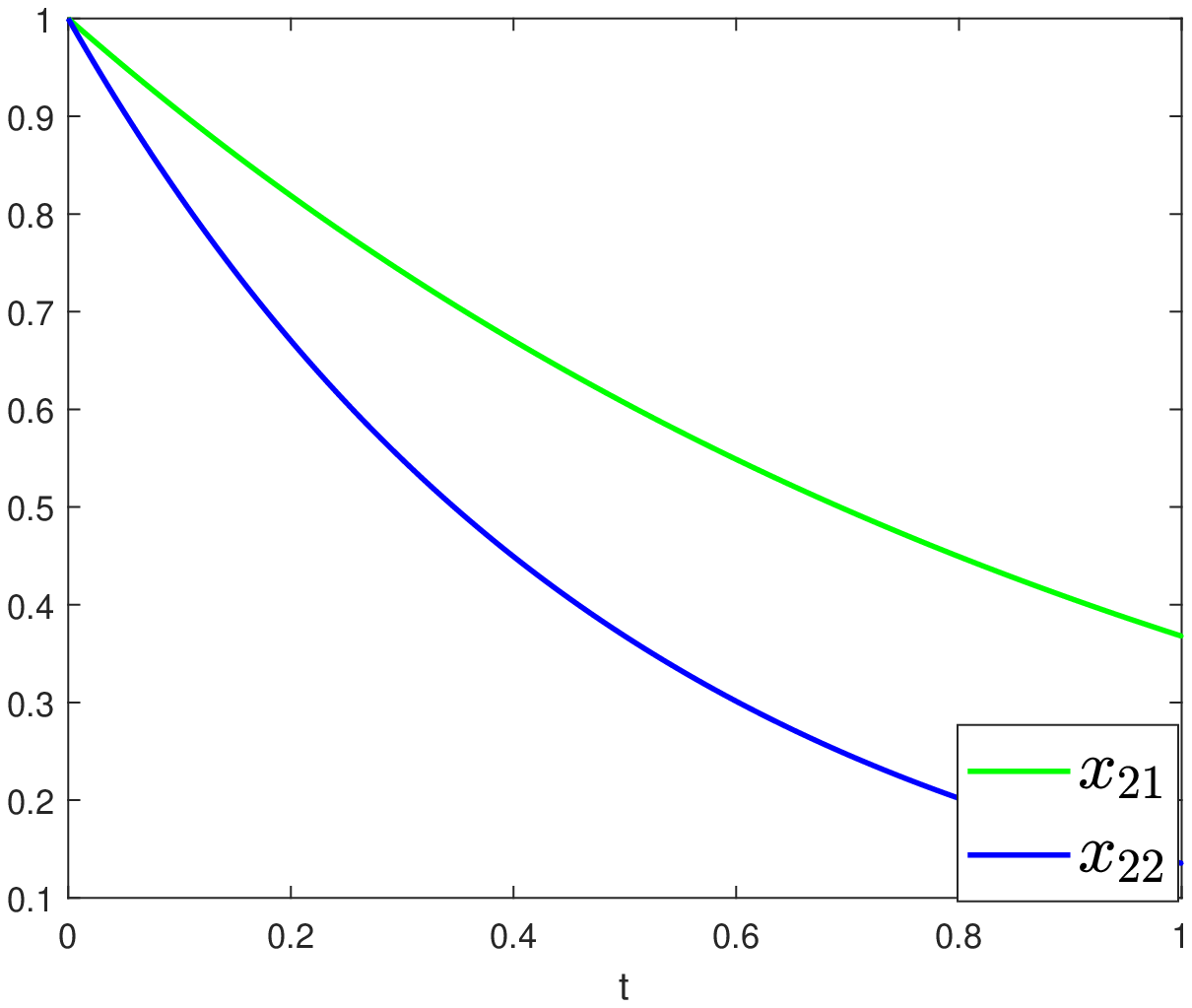}}
  \subfigure[ Controller.]
 {\includegraphics[width=0.32\textwidth]{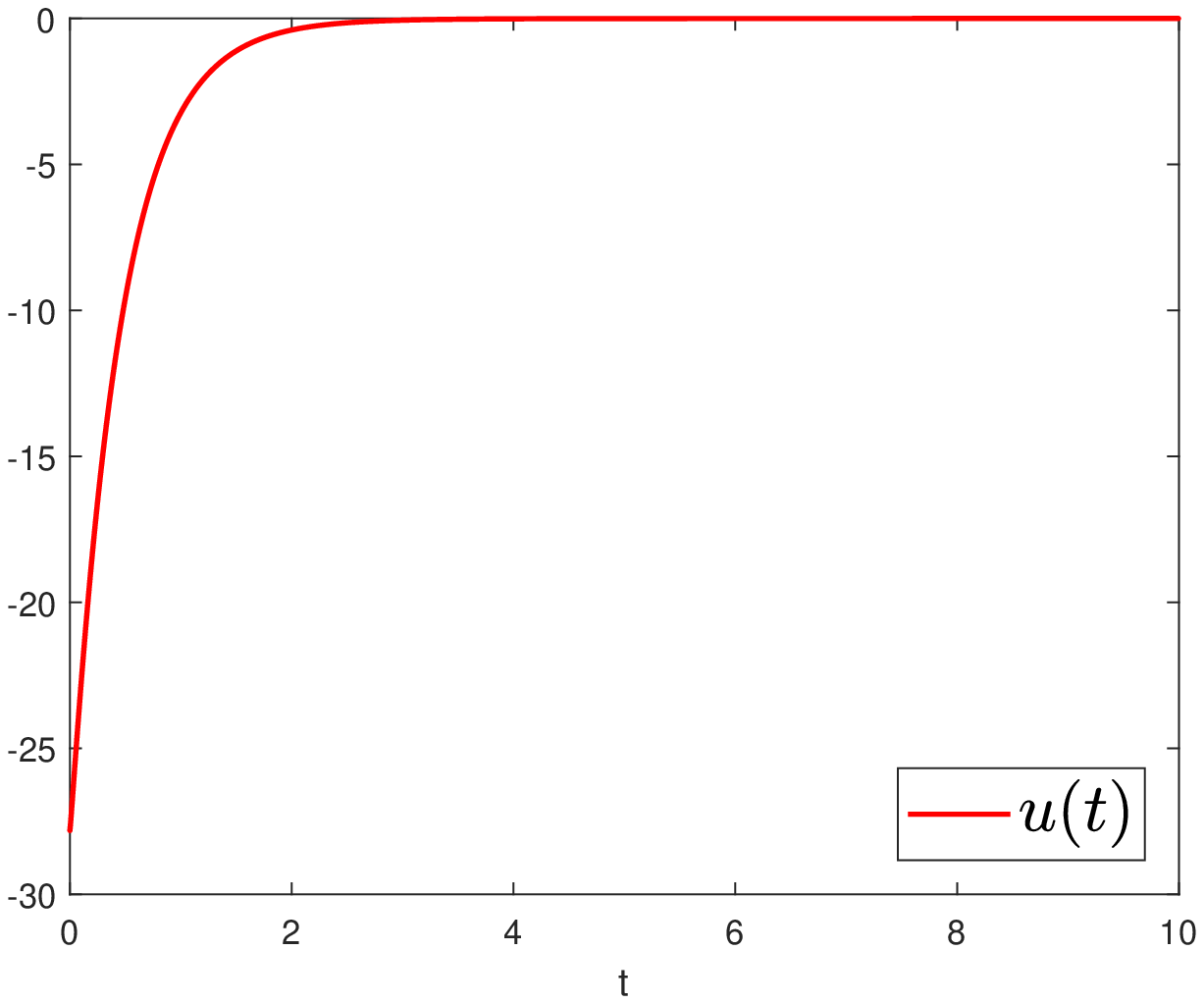}}
\caption{ Solution of open-loop    \dref{201912291553}  and feedback law.}\label{Fig1}
\end{figure}

\begin{figure}[!htb]\centering
\subfigure[$w(x,t)$.]
 {\includegraphics[width=0.48\textwidth]{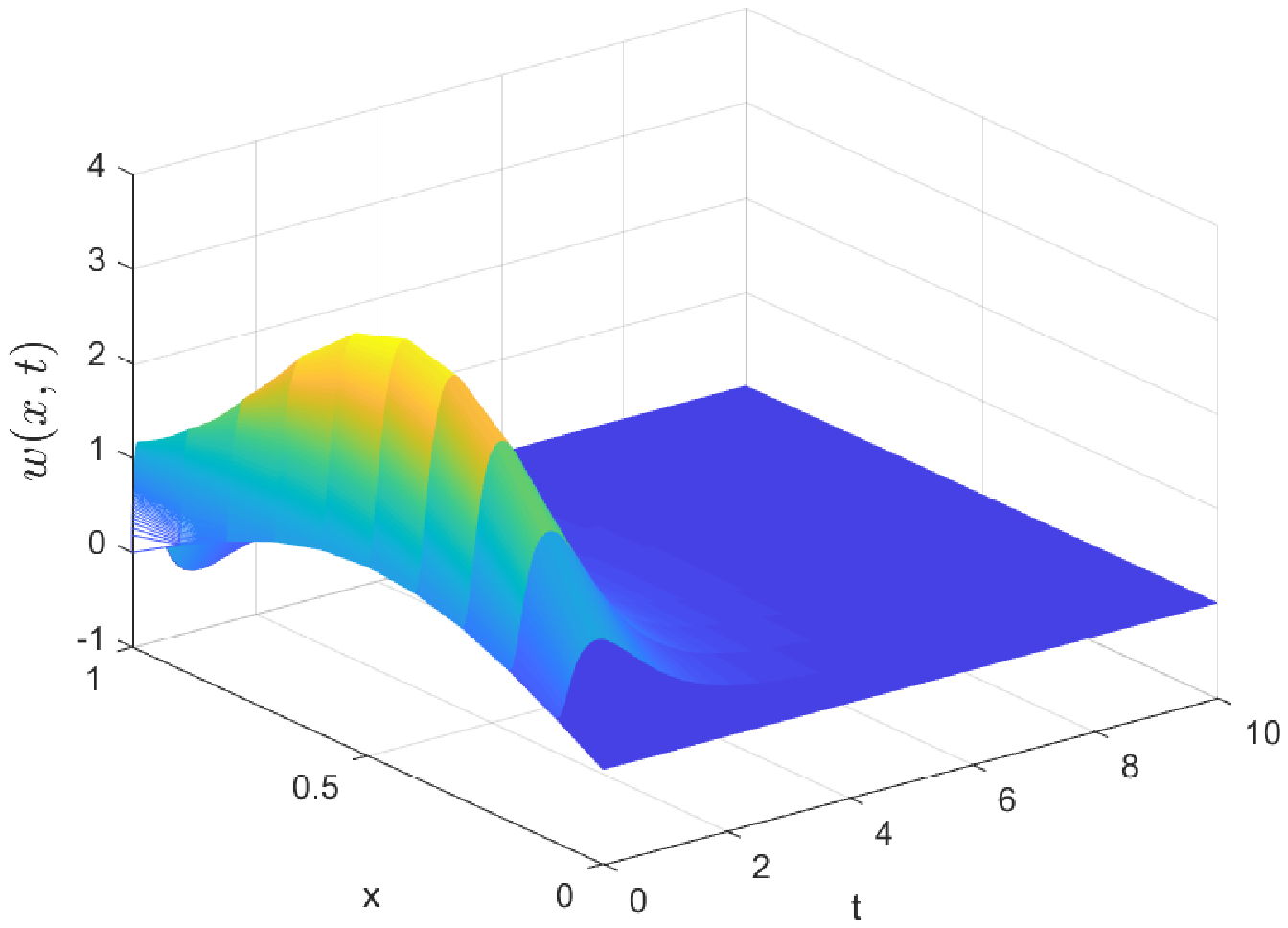}}
 \subfigure[ $x_2(t)=(x_{21}(t),x_{22}(t))^\top$.]
 {\includegraphics[width=0.48\textwidth]{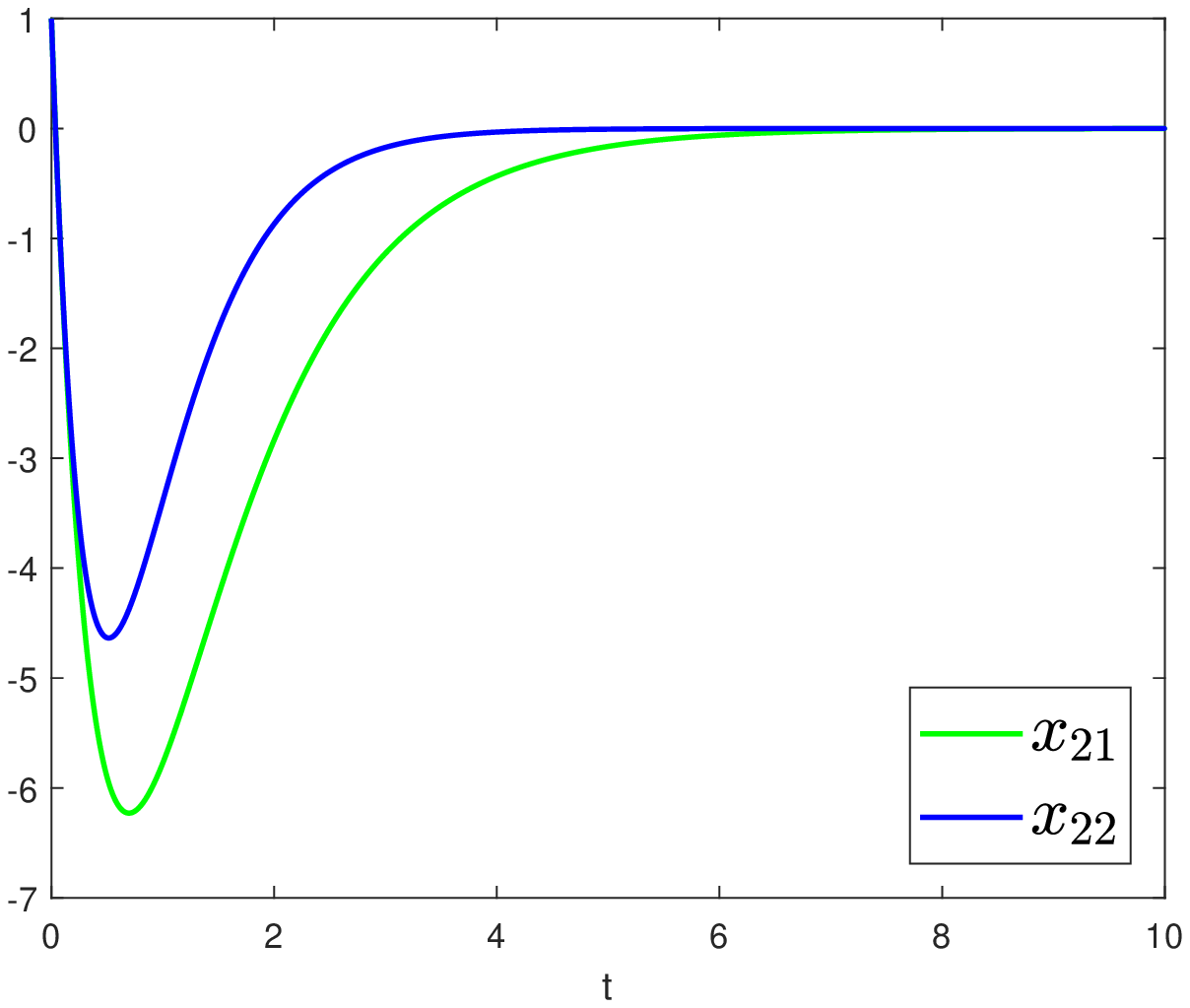}}
 \caption{Solution of closed-loop \dref{201912302033}.}\label{Fig2}
\end{figure}

\section{Conclusions}\label{Se.9}
In this paper, we develop a systematic  method to  compensate    the   actuator dynamics
  dominated  by general abstract  linear systems.  A  scheme of
full state feedback law design is proposed.
As a result,   a sufficient condition of the existence of compensator for
ODE with PDE actuator dynamics is obtained and
the existing results about stabilization of ODE with
 actuator dynamics dominated by the
   transport  equation  \cite{SmyshlyaevKrstic2008SCL},   wave equation  \cite{Krstic2009TAC},
   heat equation  \cite{Krstic2009SCL}   as well as  the   Schr\"{o}dinger equation
\cite{WangJMSCL2013}
       can be treated in a unified way.
       More importantly, the more complicated problem that stabilize the infinite-dimensional system
through   finite-dimensional actuator dynamics can still be addressed effectively.
 We present  two examples to demonstrate the effectiveness of the proposed  approach.
One is on  input   delay compensation for ODE system and another is for
unstable heat equation with ODE actuator dynamics.

It should be pointed out that the proposed  approach in Theorems \ref{Th201912021844}  and \ref{Th20205271541} is not limited to
the examples considered in Sections
\ref{Se.6} and \ref{Se.7}.
In  \cite{FengWubeam}, it has been applied to  the stabilization of ODEs with
actuator dynamics dominated by Euler-Bernoulli beam equation.
More importantly, the approach opens up a new road  leading to the stabilization of
cascade systems particularly for those systems which consist of  ODE and multi-dimensional PDE.

Furthermore, the main idea of the approach  is  still applicable to the stabilization of
PDE-PDE cascade systems like those arising from
    PDEs with input   delay.
 This  will  be considered   in the
  third paper  \cite{FPartDelay} of this series works.
The present   paper focuses only on the full state feedback.
 After being investigated in the  next  paper
 \cite{FPart2} of this  series studies for  the state observer design  through  sensor dynamics,
  the output feedback will  become    straightforward by the separation principle of the  linear systems.

%

\section{Appendix}\label{Appendix}

\begin{lemma}\label{Th201912302046}
Let the operator $A_1$ be given by \dref{201912311648},   $SB_2$, $b(\cdot) $   be given by \dref{2020881745} and    $\phi_n(\cdot)$, $\lambda_n$ be given by \dref{wxh201912303}.
Suppose that the  integer $N$  satisfies  \dref{201912301959},  $\Lambda_N$, $B_N$ is  given by  \dref{2020871455}
and     there exists an $L_N=(l_1,l_2,\cdots,l_N)\in \mathcal{L}(\R^N,\R)$ such that
 $\Lambda_N{+}B_NL_N$
 is Hurwitz. Then,
the operator $ A_1+SB_2K_N $ generates an exponentially stable $C_0$-semigroup
 on $L^2[0,1]$,
where
$ K_N $ is  given by \dref{2020881654}.

\end{lemma}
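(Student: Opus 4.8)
The plan is to diagonalize $A_1$ in the orthonormal basis $\{\phi_n\}_{n=1}^{\infty}$ and then to read off a lower block-triangular structure of the closed-loop operator, after which Lemma \ref{Lm202012907} applies almost verbatim. By \dref{201912311648} and \dref{wxh201912302} the basis functions are eigenfunctions of $A_1$, namely $A_1\phi_n=(-\lambda_n+\mu)\phi_n$, so $A_1$ is self-adjoint with spectrum $\{-\lambda_n+\mu\}_{n\geq1}$. Since $SB_2=b(\cdot)\in L^2[0,1]$ by \dref{2020881745} and $K_N\in\mathcal{L}(L^2[0,1],\R)$ by \dref{2020881654}, the feedback term $SB_2K_N:f\mapsto (K_Nf)\,b(\cdot)$ is a bounded rank-one operator on $L^2[0,1]$. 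Hence $A_1+SB_2K_N$ is a bounded perturbation of the generator $A_1$ and therefore itself generates a $C_0$-semigroup; the substance of the lemma is the exponential stability.

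For stability I would split $L^2[0,1]=X_N\oplus X_N^{\perp}$, where $X_N=\mathrm{span}\{\phi_1,\dots,\phi_N\}$ and $X_N^{\perp}=\overline{\mathrm{span}}\{\phi_n:n>N\}$, and compute the action of $\mathscr{C}:=A_1+SB_2K_N$ on each summand. Expanding $z=\sum_n z_n\phi_n$ gives $K_Nz=\sum_{k=1}^N l_k z_k=L_NZ_N$ with $Z_N=(z_1,\dots,z_N)^\top$, so $K_N$ annihilates $X_N^{\perp}$; consequently $\mathscr{C}$ maps $X_N^{\perp}$ into $X_N^{\perp}$, i.e. the upper-right block vanishes, and in the decomposition $X_N\oplus X_N^{\perp}$ one obtains
\begin{equation*}
\mathscr{C}=\begin{pmatrix}\Lambda_N+B_NL_N & 0\\ [(I-P_N)b]\,L_N & A_{1,\infty}\end{pmatrix},
\end{equation*}
where $P_N$ is the orthogonal projection onto $X_N$ and $A_{1,\infty}$ is the part of $A_1$ in $X_N^{\perp}$. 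The top-left block is exactly $\Lambda_N+B_NL_N$ because the $n$-th coefficient ($n\leq N$) of $\mathscr{C}z$ equals $(-\lambda_n+\mu)z_n+b_nL_NZ_N$, which matches $\Lambda_NZ_N+B_NL_NZ_N$ through \dref{2020871455}; the bottom-left coupling $z^{(1)}\mapsto (L_NZ_N)(I-P_N)b$ is a bounded operator from $X_N$ into $X_N^{\perp}$.

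This is precisely the lower-triangular form of Lemma \ref{Lm202012907}, with the first subsystem $X_N$ carrying the Hurwitz matrix $\Lambda_N+B_NL_N$ and the second subsystem $X_N^{\perp}$ carrying $A_{1,\infty}$; the control operator there is $(I-P_N)b$ and the observation operator is $L_N$, both bounded, so $C_{1\Lambda}=L_N$ and the admissibility hypotheses hold trivially. Since $\Lambda_N+B_NL_N$ is Hurwitz, the finite-dimensional semigroup on $X_N$ is exponentially stable; and since $A_{1,\infty}$ is diagonal with $\sup_{n>N}(-\lambda_n+\mu)=-\lambda_{N+1}+\mu<0$ by \dref{201912301959}, one has $\|e^{A_{1,\infty}t}\|\leq e^{(-\lambda_{N+1}+\mu)t}$, so the semigroup on $X_N^{\perp}$ is exponentially stable as well. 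Lemma \ref{Lm202012907} then yields that $\mathscr{C}=A_1+SB_2K_N$ generates an exponentially stable $C_0$-semigroup on $L^2[0,1]$, completing the proof.

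The only delicate point is the bookkeeping that produces the triangular form: one must verify that $K_N$ reads off nothing outside the first $N$ modes, so that the upper-right block is genuinely zero, and that the compressed top-left block coincides with $\Lambda_N+B_NL_N$ rather than some rearrangement. Once this is confirmed, the infinite-dimensionality causes no real difficulty, since the coupling between the finite block and the tail is a bounded (indeed finite-rank) operator, and all admissibility requirements of Lemma \ref{Lm202012907} are automatically met.
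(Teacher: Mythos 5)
Your proof is correct, but it follows a genuinely different route from the paper's. The paper observes that $A_1$ generates an \emph{analytic} semigroup and $SB_2K_N$ is bounded, so $A_1+SB_2K_N$ is analytic too; it then invokes the spectrum-determined growth property of analytic semigroups and reduces everything to locating the point spectrum, splitting the eigenvalue equation $(A_1+SB_2K_N)f=\lambda f$ into the cases $f\in\mathrm{Span}\{\phi_1,\dots,\phi_N\}$ (which forces $\lambda\in\sigma(\Lambda_N+B_NL_N)$) and $f\notin\mathrm{Span}\{\phi_1,\dots,\phi_N\}$ (which forces $\lambda=-\lambda_{j_0}+\mu<0$ for some $j_0>N$). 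You instead exhibit the block lower-triangular structure of $A_1+SB_2K_N$ with respect to $L^2[0,1]=X_N\oplus X_N^{\perp}$ and feed it into the paper's own cascade result, Lemma \ref{Lm202012907}, with the Hurwitz matrix $\Lambda_N+B_NL_N$ on $X_N$, the diagonal stable tail $A_{1,\infty}$ on $X_N^{\perp}$, and the bounded finite-rank coupling $(I-P_N)b\,L_N$. Your route buys two things: it avoids appealing to spectrum-determined growth (and to the fact that the spectrum consists only of eigenvalues), and it makes transparent \emph{why} the paper's second spectral case works --- the vanishing upper-right block is precisely the reason $K_N$ contributes nothing to the modes $n>N$ once the first $N$ coordinates are accounted for. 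The paper's route is shorter once analyticity is granted. Your bookkeeping checks out: $K_N$ indeed annihilates $X_N^{\perp}$, the compressed top-left block is exactly $\Lambda_N+B_NL_N$ by \dref{2020871455}, and all admissibility hypotheses of Lemma \ref{Lm202012907} are trivially satisfied because the coupling is bounded, so the argument is complete as written.
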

\begin{proof}
Since $A_1$ generates   an analytic semigroup $e^{A_1t}$  on $L^2[0,1]$ and
$ SB_2K_N $ is bounded, it follows from \cite[Corollary 2.3, p.81]{Pazy1983Book} that
 $  A_1+SB_2K_N $  also  generates an analytic semigroup on $L^2[0,1]$.
The proof will be accomplished if we can show that  $\sigma( A_1+SB_2K_N )\subset \{ s\ | \ {\rm Re}(s)<0\}$. For any $\lambda\in \sigma( A_1+SB_2K_N )$, we consider the characteristic equation
$ (A_1+SB_2K_N) f=\lambda f $ with $f\neq 0 $.

When $f\in  {\rm Span}\{\phi_1,\phi_2,\cdots,\phi_N\} $,  set $f=\sum_{j=1}^Nf_j\phi_j$. The   characteristic equation
becomes
 \begin{equation}\label{201912312136}
 \sum_{j=1}^Nf_jA_1\phi_j + b\sum_{j=1}^N f_j K_N\phi_j= \sum_{j=1}^N\lambda f_j\phi_j.
\end{equation}
Since $A_1\phi_j=(-\lambda_j+  \mu)\phi_j$ and
 \begin{equation}\label{201912312139}
   K_N\phi_j=    \int_{0}^{1} \phi_j(x)   \left[\sum_{k=1}^{N}l_k\phi_k(x)\right]dx
  =l_j  ,\ \ j=1,2,\cdots,N,
\end{equation}
the equation \dref{201912312136} takes the  form
\begin{equation}\label{201912312141}
 \sum_{j=1}^Nf_j(-\lambda_j+ \mu)\phi_j + b\sum_{j=1}^N f_jl_j= \sum_{j=1}^N\lambda f_j\phi_j.
\end{equation}
Take the  inner product with $\phi_n $, $n=1,2,\cdots,N$ on equation \dref{201912312141}
 to obtain
\begin{equation}\label{wxh201912312146}
 f_n(-\lambda_n+ \mu)  +  b_n\sum_{j=1}^N f_jl_j=  \lambda f_n,\ \ n=1,2,\cdots,N,
\end{equation}
which, together with \dref{2020871455}, leads to
\begin{equation}\label{201912312148}
 (\lambda -\Lambda_N-B_NL_N) \begin{pmatrix}
                       f_1\\f_2\\\vdots\\f_N
                     \end{pmatrix} =0.
\end{equation}
Since $(f_1,f_2,\cdots,f_N)\neq 0$, we have
\begin{equation}\label{wxh201912312149}
{\rm Det}(\lambda -\Lambda_N-B_NL_N) =0.
\end{equation}
 Hence, $\lambda\in \sigma(\Lambda_N{+}B_NL_N) \subset\{ s\ | \ {\rm Re} (s) <0\}$ since $\Lambda_N{+}B_NL_N$
 is Hurwitz.

  When   $f\notin {\rm Span}\{\phi_1,\phi_2,\cdots,\phi_N\}$,   there exists a $j_0>N$ such  that $\displaystyle \int_{0}^{1}f(x)\phi_{j_0}(x)dx\neq 0$.
Take the  inner product with $\phi_{j_0}$ on equation $ (A_1+SB_2K_N) f=\lambda f$
 to get \begin{equation}\label{wxh2020322257}
(-\lambda_{j_0}+\mu) \int_{0}^{1}f(x)\phi_{j_0}(x)dx=  \lambda \int_{0}^{1}f(x)\phi_{j_0}(x)dx,
\end{equation}
which implies that   $ \lambda =-\lambda_{j_0}+\mu<0$. Therefore,  $\lambda\in \sigma( A_1+SB_2K_N )\subset \{ s\ | \ {\rm Re}(s)<0\}$.
 The proof is  complete.
 \end{proof}

\begin{lemma}\label{Lm20205291214}
 Let $A_j$ be the generator of a $C_0$-semigroup
 $e^{A_jt}$  on $X_j$, $j=1,2$. Suppose that  $U$ is the control space and
 $B_j\in \mathcal{L}(U,[D(A_j^*)]')$   is admissible for $e^{A_jt}$, $j=1,2$.
   Suppose  further that
 $\sigma(A_1)\cap \sigma(A_2)=\emptyset  $ and system
 $\left({\rm  diag}(A_1,A_2),
\begin{pmatrix}
B_1\\
B_2
 \end{pmatrix}\right)$
is approximately  controllable. Then, both $(A_1, B_1)$  and $(A_2, B_2)$ are approximately  controllable.
  \end{lemma}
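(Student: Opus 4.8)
The plan is to reduce approximate controllability to a dual observability condition and then exploit the block-diagonal structure. First I would recall the standard duality for systems with admissible control operators (see \cite{Weiss1989SICON,TucsnakWeiss2009book}): since $B_j$ is admissible for $e^{A_jt}$, the assignment $z\mapsto B_j^*e^{A_j^*\cdot}z$ extends to a bounded operator from $X_j$ into $L^2_{\rm loc}([0,\infty);U)$, and the reachable subspace of $(A_j,B_j)$ is dense in $X_j$ if and only if
\[
B_j^*e^{A_j^*t}z=0 \ \text{for a.e. } t\ge 0 \ \Longrightarrow\ z=0,\qquad z\in X_j .
\]
The same characterization applies to the diagonal system, whose adjoint semigroup is $e^{{\rm diag}(A_1,A_2)^*t}={\rm diag}(e^{A_1^*t},e^{A_2^*t})$ and whose extended output map sends $(z_1,z_2)^\top$ to $B_1^*e^{A_1^*\cdot}z_1+B_2^*e^{A_2^*\cdot}z_2$.

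Next I would simply embed the subsystem kernels into the product space. Suppose $\zeta\in X_1$ satisfies $B_1^*e^{A_1^*t}\zeta=0$ for a.e. $t\ge0$. Taking $(z_1,z_2)^\top=(\zeta,0)^\top$, the dual output of the diagonal system becomes $B_1^*e^{A_1^*t}\zeta+B_2^*e^{A_2^*t}\,0=0$ for a.e. $t\ge0$. Since $\left({\rm diag}(A_1,A_2),(B_1,B_2)^\top\right)$ is approximately controllable, the observability condition above forces $(\zeta,0)^\top=0$, hence $\zeta=0$; therefore $(A_1,B_1)$ is approximately controllable. Choosing instead $(0,\eta)^\top$ yields the same conclusion for $(A_2,B_2)$.

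The only genuinely technical point, and the step I would treat most carefully, is the justification of the dual characterization in the unbounded setting: that admissibility of $B_j$ is equivalent to boundedness of the extended output map $z\mapsto B_j^*e^{A_j^*\cdot}z$ on all of $X_j$, and that a vector $z$ annihilates every reachable state precisely when this extended output vanishes almost everywhere. Once this is in place, the additive decomposition of the diagonal system's extended output into its two block contributions is immediate from the block-diagonal form of $e^{{\rm diag}(A_1,A_2)^*t}$, and the embedding argument closes the proof. I note that the spectral separation hypothesis $\sigma(A_1)\cap\sigma(A_2)=\emptyset$ is not actually needed for this implication: approximate controllability of a block-diagonal system is inherited by each diagonal block regardless of the spectra, since one may always test with inputs steering only the complementary block to zero.
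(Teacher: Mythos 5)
Your argument is correct and is essentially the paper's own proof: both pass to the dual observability formulation and test the block-diagonal system's extended output map with vectors of the form $(\zeta,0)^\top$ and $(0,\eta)^\top$, the only cosmetic difference being that the paper phrases this as a proof by contradiction. Your observation that the hypothesis $\sigma(A_1)\cap\sigma(A_2)=\emptyset$ is not used in this implication is also accurate --- the paper's proof never invokes it either.
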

\begin{proof}
  By assumption, system   $({\rm diag}(A_1^*,A_2^*),(B_1^*,  B_2^*) )$
 is approximately  observable.   We use the argument of    proof by contradiction.
   If  either system $(A_1, B_1)$  or  $(A_2, B_2)$  were   not  approximately  controllable,
   we assume  without loss of the generality that system
   $(A_1, B_1)$  were   not  approximately  controllable. Then,
   system $(A_1^*, B_1^*)$    would not  be   approximately  observable.
   Hence,
 there exists $0\neq x_{10}\in X_1$ such that
   $B_1^*e^{A_1^*t}x_{10}\equiv0$ on $[0,\tau]$ for some time
    $\tau>0$. As a result, $ (B_1^*, B_2^*)e^{{\rm diag}(A_1^*,A_2^*)t}(x_{10},0)^{\top}\equiv0$
over  $[0,\tau]$, that is,   ${\rm Ker} \left( (B_1^*,  B_2^*)e^{{\rm diag}(A_1^*,A_2^*)t}\right)\neq\{0\}$.
This contradicts to the fact that   system $({\rm diag}(A_1^*,A_2^*),  (B_1^*, B_2^*) )$
 is approximately  observable.  The proof is complete.
\end{proof}
 \end{document}